\theoremstyle{plain}
\newtheorem{theorem}{Theorem}[section]
\newtheorem{lemma}[theorem]{Lemma}
\theoremstyle{definition}
\theoremstyle{remark}
\newif\ifhighlight
\newcommand{\jy}[1]{\ifhighlight{\color{magenta}{#1}}\else#1\fi}
\newcommand{\M}{\mathcal{M}}
\newcommand{\W}{\mathcal{W}}
\newcommand{\G}{\mathcal{G}}
\newcommand{\D}{\mathcal{D}}
\newcommand{\F}{\mathcal{F}}
\newcommand{\V}{\mathcal{V}}
\newcommand{\Q}[1][]{Q^{\ifx#1\empty\else(#1)\fi}}
\definecolor{watermark}{RGB}{220,20,60} 
\definecolor{regular}{RGB}{0,100,0}     
\NewDocumentCommand{\qboxes}{m m m m m G{}}{%
\begin{figure}[#1]
\centering

\begin{tcolorbox}[
  colback=gray!5,
  colframe=regular!40,
  colbacktitle=regular!20,
  coltitle=black,
  boxrule=0.5pt,
  left=6pt,right=6pt,top=6pt,bottom=6pt,
  title={\textbf{#2}},
  fonttitle=\small\bfseries
]
\small #3
\end{tcolorbox}

\vspace{-.5em}

\begin{tcolorbox}[
  colback=gray!5,
  colframe=watermark!40,
  colbacktitle=watermark!20,
  coltitle=black,
  boxrule=0.5pt,
  left=6pt,right=6pt,top=6pt,bottom=6pt,
  title={\textbf{#4}},
  fonttitle=\small\bfseries
]
\small #5
\end{tcolorbox}

\vspace{-1em}
\ifblank{#6}{}{\caption{#6}}
\end{figure}%
}
\NewDocumentCommand{\qboxesstar}{m m m m m G{}}{%
\begin{figure*}[#1]
\centering

\begin{tcolorbox}[
  colback=gray!5,
  colframe=regular!40,
  colbacktitle=regular!20,
  coltitle=black,
  boxrule=0.5pt,
  left=6pt,right=6pt,top=6pt,bottom=6pt,
  title={\textbf{#2}},
  fonttitle=\small\bfseries
]
\small #3
\end{tcolorbox}

\vspace{-.5em}

\begin{tcolorbox}[
  colback=gray!5,
  colframe=watermark!40,
  colbacktitle=watermark!20,
  coltitle=black,
  boxrule=0.5pt,
  left=6pt,right=6pt,top=6pt,bottom=6pt,
  title={\textbf{#4}},
  fonttitle=\small\bfseries
]
\small #5
\end{tcolorbox}

\vspace{-1em}
\ifblank{#6}{}{\caption{#6}}
\end{figure*}%
}
\icmltitlerunning{LLM Watermark Evasion via Bias Inversion}
\begin{document}

\twocolumn[
  \icmltitle{LLM Watermark Evasion via Bias Inversion}



  \icmlsetsymbol{equal}{*}

  \begin{icmlauthorlist}
    \icmlauthor{Jeongyeon Hwang}{uni}
    \icmlauthor{Sangdon Park}{uni}
    \icmlauthor{Jungseul Ok}{uni}
  \end{icmlauthorlist}

  \icmlaffiliation{uni}{Pohang University of Science and Technology (POSTECH), South Korea}

  \icmlcorrespondingauthor{Jungseul Ok}{jungseul@postech.ac.kr}

  \vskip 0.3in
]



\printAffiliationsAndNotice{}  

\begin{abstract}
Watermarking offers a promising solution for detecting LLM-generated content, yet its robustness under realistic query-free (black-box) evasion remains an open challenge. Existing query-free attacks often achieve limited success or severely distort semantic meaning. We bridge this gap by theoretically analyzing rewriting-based evasion, demonstrating that reducing the average conditional probability of sampling green tokens by a small margin causes the detection probability to decay exponentially. Guided by this insight, we propose the \emph{Bias-Inversion Rewriting Attack} (BIRA), a practical query-free method that applies a negative logit bias to a proxy suppression set identified via token surprisal. Empirically, BIRA achieves state-of-the-art evasion rates ($>99\%$) across diverse watermarking schemes while preserving semantic fidelity substantially better than prior baselines. Our findings reveal a fundamental vulnerability in current watermarking methods and highlight the need for rigorous stress tests. Our code is available at \href{https://github.com/ml-postech/LLM-Watermark-Evasion-via-Bias-Inversion}{here}.

\end{abstract}

\vspace{-2.5em}
\section{Introduction}\label{sec:intro}
The rapid advancement and proliferation of large language models (LLMs) \citep{minaee2024large, wang2024survey} have intensified concerns about misuse, ranging from the spread of misleading content \citep{monteith2024artificial, wang2024unveiling, papageorgiou2024survey} to threats to academic integrity such as cheating \citep{stokel2022ai, kamalov2023new}. To address these risks, watermarking has been proposed as a promising approach for detecting LLM-generated content \citep{aaronson2022watermark, kirchenbauer2023reliability}. The core idea is to embed an imperceptible statistical signal into generated text, for example, by partitioning the vocabulary into ``green'' and ``red'' lists using a secret key and biasing generation toward the green list. A detector then flags LLM-generated text by testing for a statistical overrepresentation of green tokens.

Prior works \citep{kirchenbauer2023reliability, liu2023semantic, zhao2023provable, lu2024entropy} suggest that watermarking can withstand common post-edit operations such as insertion, substitution, and deletion, fueling interest in real-world deployment \citep{BartzHu2023WatermarkAI, Tong2024AB3211Watermarking}.
 

At the same time, recent studies \citep{raffel2020exploring, cheng2025revealing, wu2024bypassing, chen2024mark, jovanovic2024watermark, diaa2024optimizing} question whether current watermarking methods are sufficiently stress-tested. Existing evasion approaches broadly fall into two categories. \emph{Query-based} attacks~\citep{wu2024bypassing, chen2024mark, jovanovic2024watermark} use repeated queries to infer the green token set and then remove the signal. \jy{However, these methods require access to the watermarked model and partial knowledge of the watermarking scheme, making them impractical for text of unknown provenance.} In contrast, \emph{query-free} attacks~\citep{krishna2023paraphrasing, cheng2025revealing} operate in a realistic black-box setting and usually rely on LLM-based rewriting. However, existing query-free methods achieve only limited evasion success and often substantially distort the original semantics.

In this paper, to address these limitations and deepen our understanding of watermarking vulnerabilities, we theoretically analyze rewriting attacks from first principles and show that it suffices for a rewriter (an LLM) to reduce the \emph{average conditional probability} of sampling green tokens by a margin $\delta>0$ to make the detection probability vanish \emph{exponentially} in $\delta^2$. This characterization directly motivates a practical query-free attack, which we call the \emph{Bias Inversion Rewriting Attack} (BIRA). BIRA leverages this theoretical insight during rewriting by applying a negative logit bias to a proxy suppression set identified via token self-information, thereby reducing the overrepresentation of green tokens.

\jy{Empirically, this simple intervention enables BIRA to evade recent watermarking schemes, even beyond the green-red setting, outperforming prior query-free attacks while better preserving semantics.} These results suggest that many current watermarks can be removed with minimal effort, highlighting the need for more robust watermarking and evaluations beyond simple editing-based stress tests.

Our contributions are summarized as follows:
\begin{itemize}
\item We develop a theoretical analysis of watermark rewriting attacks, deriving a sufficient condition for successful evasion.
\item Building on this insight, we propose BIRA, a simple and practical query-free attack that weakens the watermark signal by applying a negative logit bias to a proxy suppression set during rewriting.
\item We conduct extensive experiments showing that BIRA achieves state-of-the-art evasion rates against recent watermarking methods while maintaining high semantic fidelity.
\end{itemize}

\section{Related Work}\label{sec:related_work}
\textbf{LLM watermarking.} \citet{kirchenbauer2023watermark} introduced a widely used green–red list watermarking scheme that partitions the vocabulary into green and red subsets and embeds a detectable statistical signal by applying a positive logit bias to green tokens. Subsequent work has improved robustness by enhancing key generation and detection procedures \citep{kirchenbauer2023reliability, liu2023unforgeable, zhao2023provable, liu2023semantic, lee2024wrote, lu2024entropy} or by more faithfully preserving the original LLM output distribution \citep{wu2023resilient}. Other lines of research explore sampling-based watermarking approaches \citep{aaronson2022watermark, hu2023unbiased, christ2024undetectable}. More recently, sentence-level watermarking methods have been proposed to further enhance robustness by embedding signals at the semantic sentence level rather than the token level \citep{hou2024semstamp, dabiriaghdam-wang-2025-simmark}.

\textbf{Watermark evasion attacks.}
Watermark evasion attacks can be broadly categorized under the threat model into two types: \emph{query-based} and \emph{query-free}. Query-based attacks~\citep{jovanovic2024watermark, chen2024mark, wu2024bypassing} infer the green-token set by issuing many crafted prefix prompts. \jy{While effective, they require both access to the watermarked model and knowledge of the watermarking scheme, making them impractical when either the model or the scheme is unknown.}

By contrast, query-free attacks~\citep{kirchenbauer2023watermark, krishna2023paraphrasing, cheng2025revealing, diaa2024optimizing} operate directly on generated text in a practical black-box setting, typically via LLM-based rewriting. Prior work either fine-tunes an LLM for paraphrasing~\citep{krishna2023paraphrasing, diaa2024optimizing} or masks and regenerates high-entropy tokens~\citep{cheng2025revealing}. However, these methods often achieve limited evasion success and, crucially, can substantially distort meaning; without an explicit fidelity constraint, the task becomes degenerate, since one can simply regenerate fresh unwatermarked text with an unwatermarked LLM. In contrast, our method achieves markedly higher evasion rates while maintaining high semantic fidelity, without additional training and under a black-box setup.

\vspace{-.5em}

\section{Preliminary}\label{sec:prob}

\textbf{Language model.}  
A language model, denoted by $\M$, generates text $y$ by predicting the next token in a sequence.  
Given an input sequence $x^{0:n-1} = [x^{(0)}, \ldots, x^{(n-1)}]$, the model outputs a logit vector $l^{(n)} = (l^{(n)}_0, \ldots, l^{(n)}_{V-1}) \in \mathbb{R}^V$
from which it derives a probability distribution $Q^{(n)}$ over the vocabulary $\V$ of size $V$ using the softmax operator:
\[
Q^{(n)}_u = \frac{\exp(l^{(n)}_u)}{\sum_{j=1}^V \exp(l^{(n)}_j)}, \quad u \in \V.
\]
The next token $x^{(n)}$ is then drawn from $Q^{(n)}$, either by sampling or by another decoding strategy.

\textbf{Watermarking algorithm.}  
A watermarking algorithm $\W$ consists of two components: a \emph{generation function} $\mathcal{S}$ and a \emph{detection function} $\D$. Given a secret key $k$, the algorithm $\W_k$ modifies the distribution $Q^{(n)}$ during text generation to produce $\widehat{Q}^{(n)} = \M(x^{0:n-1}, \W_k)$, embedding hidden patterns (e.g., green tokens) into the output $y$.  
For instance, \citet{kirchenbauer2023watermark,liu2023unforgeable,zhao2023provable,liu2023semantic,lee2024wrote,lu2024entropy} add a positive logit bias $\gamma>0$ to $l^{(n)}_u$ for tokens $u\in\mathcal{G}(\W_k)$, the green set generated by the secret key $k$, which increases their sampling probability and biases the generated text $\hat{y}$ toward green tokens.  
The detection function $\D$ then takes a text sequence $y$ and the same secret key $k$ as input, and determines whether $y$ is watermarked:
\[
\D(y, \W_k) = \mathbf{1}\{Z(y;\W_k) \ge \tau\},
\]
where $Z(y;\W_k)$ is a test statistic on the watermark patterns (e.g., a one-proportion $z$-statistic on the fraction of green tokens), and $\tau \in \mathbb{R}$ is the detection threshold. Here, the null hypothesis $H_0$ is that the text was not generated with $\W_k$, and the watermark is detected by rejecting $H_0$ when $Z(y;\W_k) \ge \tau$.

\textbf{Threat model.}
We consider a black-box threat model where the adversary has no knowledge of the watermarking scheme $\W$ or the target model.

\textbf{Adversary’s objective.}
The adversary’s goal is to design a text modification function $\F$ that transforms a watermarked text $\hat{y}$ into a modified text $\tilde{y} = \F(\hat{y})$, which is detected as unwatermarked, while preserving the original meaning of $\hat{y}$:
\vspace{-.5em}
\begin{equation}
\F^* = \arg \min_\F \; \mathbb{E} \left[\D(\tilde{y}, \W_k)\right]
\quad \text{s.t.} \quad \operatorname{sim}(\tilde{y}, \hat{y}) \geq \epsilon,
\end{equation}
\vspace{-.5em}

where $\operatorname{sim}$ is a similarity measure between two texts used to evaluate semantic preservation.

\begin{figure*}[t]
    \centering
    \includegraphics[width=1.0\linewidth]{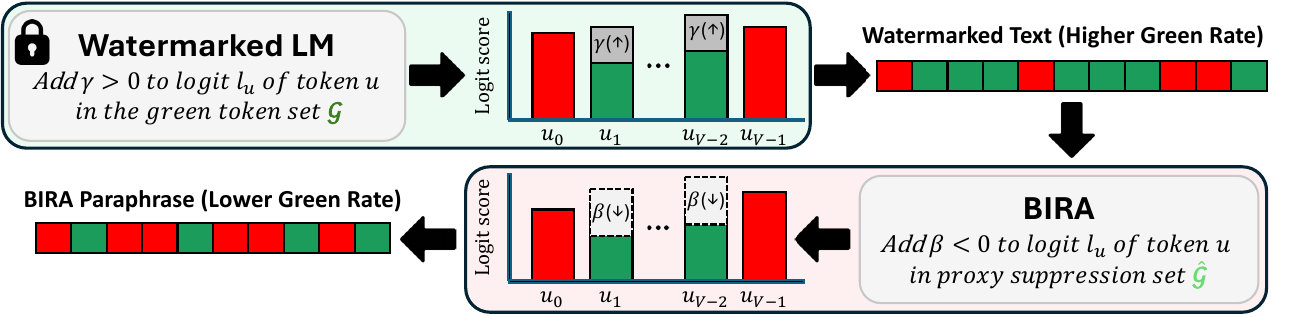}
    \caption{Illustration of BIRA. A watermarked LLM typically increases the likelihood of sampling green tokens by adding a positive bias $\gamma > 0$ to their logits at each generation step. In contrast, BIRA applies a negative bias $\beta < 0$ to a proxy suppression set, thereby suppressing their sampling probability. This inversion lowers the probability of generating green tokens and weakens the watermark signal, enabling the paraphrased text to evade detection.
    }
    \label{fig:method_figure}
    \vspace{-1em}
\end{figure*}

\section{Method}\label{sec:method}

In this section, we first analyze rewriting-based evasion and derive sufficient conditions under which watermark detection fails (Section~\ref{sec:method:analysis}), and then translate this analysis into a practical query-free attack, \emph{Bias Inversion Rewriting Attack} (BIRA) (Section~\ref{sec:method:BIRA}). Figure~\ref{fig:method_figure} provides an overview.

\subsection{Theoretical Analysis of Rewriting Attack}\label{sec:method:analysis}
A watermark evasion attack aims to reduce the statistical overrepresentation of green tokens so that the resulting text no longer triggers the detector. We begin with a simple reduction: for any detector whose test statistic is a nondecreasing function of the empirical green-token rate, detection is equivalent to thresholding the empirical green-token rate $\hat p(y;\W_k)$ (Lemma~\ref{thm:equivalence}). Building on this, we show that if an attack can keep the \emph{average conditional} probability of sampling a green token below the induced threshold by a margin $\delta>0$, then the detection probability decays exponentially in $\delta^2$ (Theorem~\ref{thm:evasion}).

\begin{lemma}\label{thm:equivalence}
Let the detector be $\D(y,\W_k)=\mathbf{1}\{Z(y;\W_k)\ge \tau\}$ and suppose there exists
a nondecreasing function $h:[0,1]\!\to\!\mathbb{R}$ with
\[
\begin{aligned}
Z(y;\W_k) &= h\!\left(\hat p(y;\W_k)\right), \\
\hat p(y;\W_k)
&= \frac{1}{N}\sum_{n=0}^{N-1}
\mathbf{1}\{y^{(n)}\in \mathcal{G}(\W_k)\},
\end{aligned}
\]
where $\mathcal{G}(\W_k)$ denotes the green set produced by watermarking $\W_k$.
Then, for a given $N$, there exists $p_\tau\in[0,1]$ such that
\[
\D(y,\W_k)=\mathbf{1}\{\hat p(y;\W_k)\ge p_\tau\},
\]
with $p_\tau=\inf\{p:\, h(p)\ge \tau\}$. In particular, for the widely used one-proportion $z$-test for watermark detection, such a function $h$ exists.
\end{lemma}

Lemma~\ref{thm:equivalence} shows that, for this class of detectors, watermark detection depends only on whether the empirical green-token rate $\hat p(y;\W_k)$ exceeds a threshold $p_\tau$. We next show that if a rewriter (an LLM) can keep the \emph{average conditional} probability of sampling a green token at least $\delta>0$ below this threshold across the sequence, then the detection probability decays exponentially in $\delta^2$ (Theorem~\ref{thm:evasion}).

\begin{theorem}\label{thm:evasion}
Let $\tilde y=[\tilde{y}^{(0)}, \ldots, \tilde{y}^{(N-1)}]$ be the rewriter’s output, and let $p_\tau$ denote the induced green-rate threshold from Lemma~\ref{thm:equivalence}. If there exists $\delta>0$ such that
\[
\frac{1}{N}\sum_{n=0}^{N-1}
\mathbb{E}\!\left[\mathbf{1}\{\tilde y^{(n)}\in \mathcal{G}(\W_k)\}\,\middle|\,\tilde y^{0:n-1}\right]
\;\le\; p_\tau-\delta,
\]
then
\[
\Pr\!\big[\D(\tilde y,\W_k)=1\big]\;\le\;\exp\!\left(-\frac{1}{2}\,N\,\delta^2\right).
\]
\end{theorem}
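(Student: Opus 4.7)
The plan is to reduce the claim to a concentration inequality for a bounded martingale difference sequence. By Theorem~\ref{thm:equivalence}, the detection event $\{\D(\tilde y,\W_k)=1\}$ is equivalent to $\{\hat p(\tilde y;\W_k)\ge p_\tau\}$, i.e., to $\{\sum_{n=1}^{N}X_n\ge N p_\tau\}$, where $X_n:=\mathbf{1}\{\tilde y^{(n-1)}\in\mathcal{G}(\W_k)\}\in\{0,1\}$. Writing $p_n:=\mathbb{E}[X_n\mid \tilde y^{0:n-1}]$, the hypothesis reads $\frac{1}{N}\sum_{n=1}^N p_n\le p_\tau-\delta$.

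The key step is to form the martingale $M_N:=\sum_{n=1}^N (X_n-p_n)$ with respect to the natural filtration $\mathcal{F}_n:=\sigma(\tilde y^{0:n-1})$. Because $X_n\in[0,1]$ and $p_n\in[0,1]$, the differences $D_n:=X_n-p_n$ satisfy $|D_n|\le 1$ almost surely and $\mathbb{E}[D_n\mid\mathcal{F}_{n-1}]=0$, so $(M_n)$ is an $(\mathcal{F}_n)$-martingale with bounded increments. I would then apply the Azuma--Hoeffding inequality to obtain
\[
\Pr\!\left[M_N\ge t\right]\;\le\;\exp\!\left(-\frac{t^2}{2\sum_{n=1}^N c_n^2}\right)\quad\text{with }c_n=1,
\]
which specializes to $\Pr[M_N\ge t]\le \exp(-t^2/(2N))$.

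To connect the two pieces, I would observe that on the detection event,
\[
M_N \;=\; \sum_{n=1}^N X_n - \sum_{n=1}^N p_n \;\ge\; N p_\tau - N(p_\tau-\delta) \;=\; N\delta,
\]
so $\Pr[\D(\tilde y,\W_k)=1]\le \Pr[M_N\ge N\delta]$. Plugging $t=N\delta$ into the Azuma--Hoeffding bound yields $\exp(-(N\delta)^2/(2N))=\exp(-N\delta^2/2)$, which is the desired inequality.

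\paragraph{Expected difficulty.}
No single step is technically deep; the main subtlety is the dependence structure. The tokens $\tilde y^{(n)}$ are sampled sequentially from the attacker's model and are \emph{not} i.i.d., so a direct Hoeffding bound on independent Bernoullis is not available. The right abstraction is the martingale difference sequence $D_n=X_n-p_n$, which is why Azuma--Hoeffding (rather than Chernoff/Hoeffding) is the appropriate tool. Beyond this conceptual point, one should be careful that the conditioning $\mathbb{E}[\,\cdot\mid \tilde y^{0:n-1}]$ in the hypothesis matches the filtration used to define the martingale, and that the indexing convention between Theorem~\ref{thm:equivalence} (sum from $n=0$ to $N-1$) and the statement of Theorem~\ref{thm:evasion} (sum from $n=1$ to $N$) is reconciled; both are cosmetic.
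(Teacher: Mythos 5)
Your proposal is correct and mirrors the paper's own argument: both reduce detection to a threshold on the empirical green rate via Theorem~\ref{thm:equivalence}, form the martingale $M_N=\sum_{n}(X_n-p_n)$ with increments bounded in $[-1,1]$, translate the hypothesis into $\{\D=1\}\subseteq\{M_N\ge N\delta\}$, and close with Azuma--Hoeffding to get $\exp(-N\delta^2/2)$. The only differences are cosmetic (index offset, and you state Azuma--Hoeffding via $|D_n|\le c_n$ while the paper uses the range form with $b_n-a_n=2$; both give the same constant).
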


Theorem~\ref{thm:evasion} shows that if a rewiter can keep the \emph{average conditional} probability of sampling a green token at least $\delta$ below the detector threshold $p_\tau$ across the sequence, then the detection probability decays exponentially in $\delta^2$. Thus, even a small per-step suppression of sampling the green-token probability, when achieved on average across the sequence, is sufficient to drive the overall detection probability toward zero. Proofs of Lemma~\ref{thm:equivalence} and Theorem~\ref{thm:evasion} are provided in Appendix~\ref{appendix:theory:proof}.


\textbf{Application to KGW watermarking.}
For KGW \citep{kirchenbauer2023watermark}, the one-proportion $z$-statistic is
$Z(y;W_k)=\bigl(\hat p(y;W_k)-p_0\bigr)\big/\sqrt{p_0(1-p_0)/N}$,
where $p_0$ is the predefined green-token ratio and $N$ is the total number of generated tokens.
Since $h(p)=\bigl(p-p_0\bigr)\big/\sqrt{p_0(1-p_0)/N}$ is nondecreasing, the threshold corresponds to 
$p_\tau = p_0 + \tau\sqrt{p_0(1-p_0)/N}$.
For default setups $p_0=0.5$, $\tau=4$, and $N=230$, we obtain $p_\tau \approx 0.632$. 
If the attack keeps $\frac{1}{N}\mathbb{E}\!\left[\mathbf{1}\{\tilde{y}^{(n)} \in \mathcal{G}(W_k)\}\mid \tilde y^{0:n-1}\right] \le 0.632 - \delta$, then by Theorem~2 the detection probability satisfies $\Pr[D(\tilde y,W_k)=1] \le \exp(-N\delta^2/2)$; e.g., $\delta=0.1 \Rightarrow e^{-1.15} \approx 0.316$, 
$\delta=0.2 \Rightarrow e^{-4.6} \approx 0.010$.

\vspace{-1em}

\begin{algorithm}[t]
\caption{BIRA (Bias-Inversion Rewriting Attack)}
\label{alg:bira}
\begin{algorithmic}[1]
\STATE \textbf{Input:} \jy{System prompt $S$}; Watermarked text $\hat{y}^{0:N-1}$; Language model $\M$;
Percentile $q \in [0, 1)$; Initial bias $\beta_0 < 0$; $\mathrm{lr} > 0$; Max restarts $R$;
Max length $L_{\text{max}}$; Window size $h$; threshold $\rho \in (0, 1]$.

\STATE \textit{Phase 1: Construct Proxy Suppression Set $\widehat{\G}$}
\STATE Compute self-information $I^{(n)}$ for each token $\hat{y}^{(n)}$ using the language model $\M$.
\FOR{$n = 0$ \textbf{to} $N-1$}
    \STATE $I^{(n)} \leftarrow -\log P_{\M}(\hat{y}^{(n)} \mid \hat{y}^{0:n-1})$
\ENDFOR
\STATE Set percentile threshold $\eta \leftarrow \text{Percentile}(\{I^{(n)}\}_{n=0}^{N-1}, q)$
\STATE Define the proxy suppression set $\widehat{\G} \leftarrow \{\operatorname{id}(\hat{y}^{(n)}) \mid I^{(n)} \ge \eta,\; n \in [0, N-1]\}$

\STATE \textit{Phase 2: Perform Bias-Inversion Rewriting}
\STATE $\beta \leftarrow \beta_0$
\FOR{$r = 1$ \textbf{to} $R$}
    \STATE Initialize empty sequence $\tilde{y} \leftarrow []$
    \FOR{$t = 0$ \textbf{to} $L_{\text{max}}-1$}
        \STATE Obtain logits $l^{(t)}$ from \jy{$M(\tilde{y}, \hat{y}^{0:N-1}, S)$}
        \STATE Apply negative bias: $l^{(t)}_u \leftarrow l^{(t)}_u + \beta \cdot \mathbf{1}\{u \in \widehat{\G}\}$ for all $u$ in vocabulary
        \STATE Sample next token $\tilde{y}^{(t)} \sim \text{softmax}(l^{(t)})$
        \STATE Append $\tilde{y}^{(t)}$ to $\tilde{y}$
    \ENDFOR

    \STATE Let $L$ be the length of $\tilde{y}$.
    \IF{$\text{Distinct-1-Gram-Ratio}(\tilde{y}^{L-h:L-1}) < \rho$}
        \STATE $\beta \leftarrow \min(0, \beta + \mathrm{lr})$ \hfill {\scriptsize(\textit{Reduce strength of bias and restart})}
        \STATE \textbf{continue}
    \ELSE
        \STATE \textbf{return} $\tilde{y}$ \hfill {\scriptsize(\textit{Return text})}
    \ENDIF
\ENDFOR
\STATE \textbf{return} $\tilde{y}$
\end{algorithmic}
\end{algorithm}

\subsection{Bias-Inversion Rewriting Attack}\label{sec:method:BIRA}
To translate Theorem~\ref{thm:evasion} into a practical attack, one would ideally suppress tokens in the true green set $\G(\W_k)$, but $\G(\W_k)$ is inaccessible in a black-box setting. However, Theorem~\ref{thm:evasion} implies that perfect identification is unnecessary: evasion only requires a small but consistent reduction in the probability of sampling green tokens. We therefore construct a global proxy suppression set $\widehat{\G}$ from the input text to target likely watermark traces during rewriting.

Although the true green list $\G(\W_k)$ is unknown, the watermarked text $\hat{y}$ contains the \emph{realized} watermark trace: specific token choices fixed at generation time, and detection relies on these traces persisting under rewriting. In contrast, unwatermarked generation, produced without access to the secret key, does not systematically bias toward green tokens beyond the baseline rate. Motivated by this, we treat a subset of tokens in $\hat{y}$ as a proxy for watermark traces and suppress their survival during rewriting to reduce green-token bias in the rewritten text.

A naive proxy is to include all input tokens in $\widehat{\G}$, but this can degrade text quality by suppressing frequent function words. Instead, we target tokens most likely to carry watermark influence. Prior work~\cite{cheng2025revealing} shows that watermarking schemes concentrate their effect on high-entropy positions to preserve utility, making high-surprisal tokens more likely to reflect watermark-biased choices. This is consistent with observations that watermarking is harder to embed in low-entropy settings such as code generation, where limited flexibility to bias token choice leads to fewer green tokens \citep{lee2024wrote}. We therefore construct $\widehat{\G}$ using token self-information, focusing suppression on likely traces while better preserving text quality.

Given a watermarked text $\hat{y}=[\hat{y}^{(0)},\ldots,\hat{y}^{(N-1)}]$, we compute token surprisal under a public language model $\M$:
\[
I^{(n)} \;=\; -\log P_{\M}\!\left(\hat{y}^{(n)} \mid \hat{y}^{(0:n-1)}\right).
\]
Let $\eta$ be the $q$th percentile of $\{I^{(n)}\}_{n=0}^{N-1}$, and define
\[
\widehat{\G} \;\leftarrow\; \Bigl\{\mathrm{id}\!\bigl(\hat{y}^{(n)}\bigr)\;\Big|\; I^{(n)} \ge \eta,\; n\in\{0,\ldots,N-1\}\Bigr\},
\]
where $\mathrm{id}(\cdot)$ maps a token to its vocabulary index. During rewriting to produce $\tilde y$, we apply \emph{bias inversion} by adding a negative logit bias $\beta<0$ to tokens in $\widehat{\G}$ at every decoding step:
\[
l^{(n)}_u \;\leftarrow\; l^{(n)}_u + \beta\,\mathbf{1}\{u\in\widehat{\G}\},
\qquad \forall u\in\mathcal{V}.
\]
This discourages reproduction of watermark traces, and empirically, suppressing tokens in $\widehat{\G}$ suffices to reduce the \emph{average} green-token sampling probability (Theorem~4.2).

\subsubsection{Mitigating Text Degeneration with Adaptive Bias}\label{sec:method:BIRA:adaptive}
We observe that applying a strong negative bias $\beta$ can occasionally cause \textbf{text degeneration}, where the model repeatedly generates the same phrase (qualitative examples are provided in Appendix~\ref{appendix:text_degeneration_qualitative}). This arises from a distorted token distribution created by suppressing specific tokens and cannot be resolved by simply regenerating text, since the underlying distribution remains unchanged.  
To address this, our attack adaptively adjusts $\beta$ by detecting degeneration through monitoring the diversity of the last $h$ generated tokens. Specifically, we compute the distinct $1$-gram ratio within this window, $\tilde{y}^{M-h:M-1}$, and classify the text as degenerated if the ratio falls below a predefined threshold $\rho$. The algorithms and details of degeneration detection are provided in Appendix~\ref{appendix:degeneration_detection}.
Upon detection, the magnitude of negative bias is reduced for the next-generation attempt:
\[
\beta \leftarrow \min(0, \beta + \mathrm{lr}),
\]
where $\mathrm{lr} > 0$ is a small step size.  
This adaptive adjustment allows the attack to begin with a strong bias for effective watermark removal and then gracefully reduce its strength only when necessary to prevent semantic degradation. The full procedure of our method is presented in Algorithm~\ref{alg:bira}.

To initialize the logit bias $\beta$, we generate 50 paraphrases from the C4 dataset \citep{raffel2020exploring} and gradually decrease $\beta$ (e.g., from $-1$ down to $-12$) until degeneration appears in at least one of the 50 outputs. We then use this value as the initial logit bias $\beta_0$, which strengthens the attack while minimizing the risk of degeneration. Since degeneration is rare (2.4\% over 500 samples, with an average of only 1.03 iterations per text with $\mathrm{lr}=0.125$), the computational overhead of the adaptive process is negligible.

\vspace{-1em}
\section{Experiments}\label{sec:experiments}

\begin{table*}[t]
\centering
\caption{Comparison of watermarking robustness under different attack methods. Our method, BIRA, achieves the highest attack success rate across all baselines.}
\label{tab:main_results}
\resizebox{0.88\linewidth}{!}{%
\begin{tabular}{@{}lccccccc@{}}
\toprule
\diagbox{Attack}{Watermark} & \textbf{KGW} & \textbf{Unigram} & \textbf{UPV} & \textbf{EWD} & \textbf{DIP} & \textbf{SIR} & \textbf{EXP} \\ \midrule
Vanilla (Llama-3.1-8B) & 88.8\% & 73.4\% & 73.4\% & 92.6\% & 99.8\% & 54.0\% & 80.6\% \\
Vanilla (Llama-3.1-70B) & 87.4\% & 67.0\% & 65.0\% & 89.4\% & 98.8\% & 42.8\% & 70.4\% \\
Vanilla (GPT-4o-mini) & 60.2\% & 30.2\% & 46.8\% & 58.8\% & 95.8\% & 23.6\% & 31.8\% \\ \midrule
DIPPER-1 & 93.8\% & 61.2\% & 80.6\% & 92.8\% & 99.4\% & 55.6\% & 90.8\% \\
DIPPER-2 & 97.2\% & 71.8\% & 85.4\% & 96.6\% & 99.2\% & 70.4\% & 97.2\% \\ \midrule
SIRA (Llama-3.1-8B) & 98.8\% & 95.0\% & 87.6\% & 99.8\% & 99.6\% & 72.8\% & 95.2\% \\
SIRA (Llama-3.1-70B) & 98.0\% & 87.6\% & 85.0\% & 99.2\% & 99.6\% & 60.6\% & 88.6\% \\
SIRA (GPT-4o-mini) & 98.0\% & 85.2\% & 84.8\% & 97.2\% & 99.6\% & 57.6\% & 94.8\% \\ \midrule
\textbf{BIRA (Llama-3.1-8B, ours)} & \textbf{99.8\%} & \textbf{99.4\%} & \textbf{99.8\%} & \textbf{100.0\%} & \textbf{100.0\%} & \textbf{99.6\%} & \textbf{99.8\%} \\
\textbf{BIRA (Llama-3.1-70B, ours)} & \textbf{99.4\%} & \textbf{99.0\%} & \textbf{99.6\%} & \textbf{99.8\%} & \textbf{99.6\%} & \textbf{98.8\%} & \textbf{98.0\%} \\
\textbf{BIRA (GPT-4o-mini, ours)} & \textbf{99.4\%} & \textbf{100.0\%} & \textbf{100.0\%} & \textbf{99.8\%} & \textbf{99.8\%} & \textbf{99.4\%} & \textbf{98.2\%} \\ \bottomrule
\end{tabular}%
}
\end{table*}

\begin{figure*}[!t]
  \centering
  \vspace{-.5em}
  \includegraphics[width=1.0\linewidth]{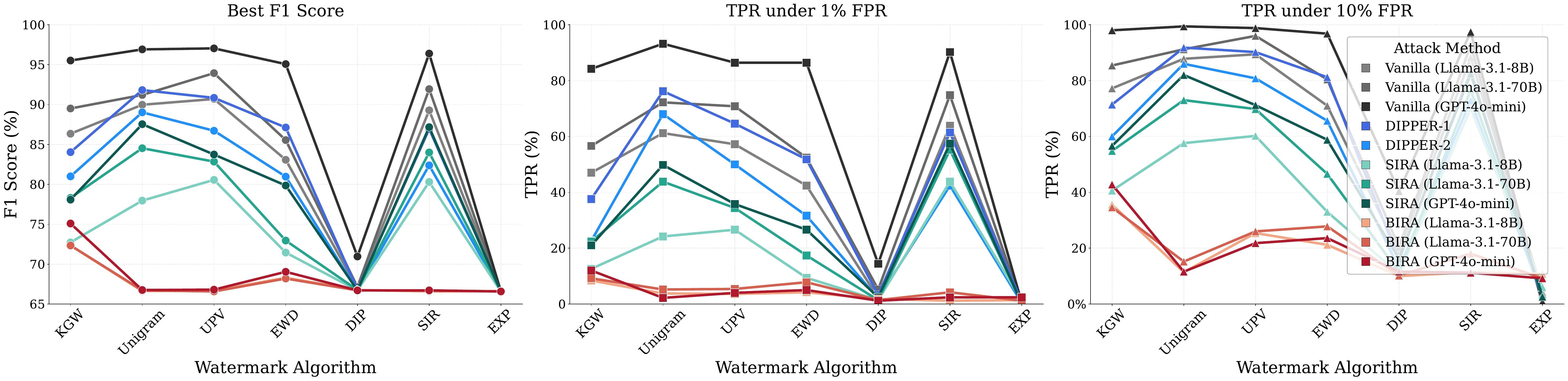}
  \vspace{-2em}
  \caption{Comparison of detection performance with the adjusted threshold across watermarking algorithms, mitigating the effect of default threshold. We show the best F1 score ($\downarrow$) and TPR ($\downarrow$) at FPR of 1\% and 10\%. BIRA consistently achieves lower F1 and TPR than all baselines, indicating greater difficulty for detectors in distinguishing attacked text from human-written text. Exact values are provided in Appendix~\ref{appendix:detailed_dynamic}.}
  \label{fig:dynamic-threshold}
\end{figure*}
\begin{figure*}[!htb]
  \centering
  \vspace{-1.2em}
  \includegraphics[width=1.0\linewidth]{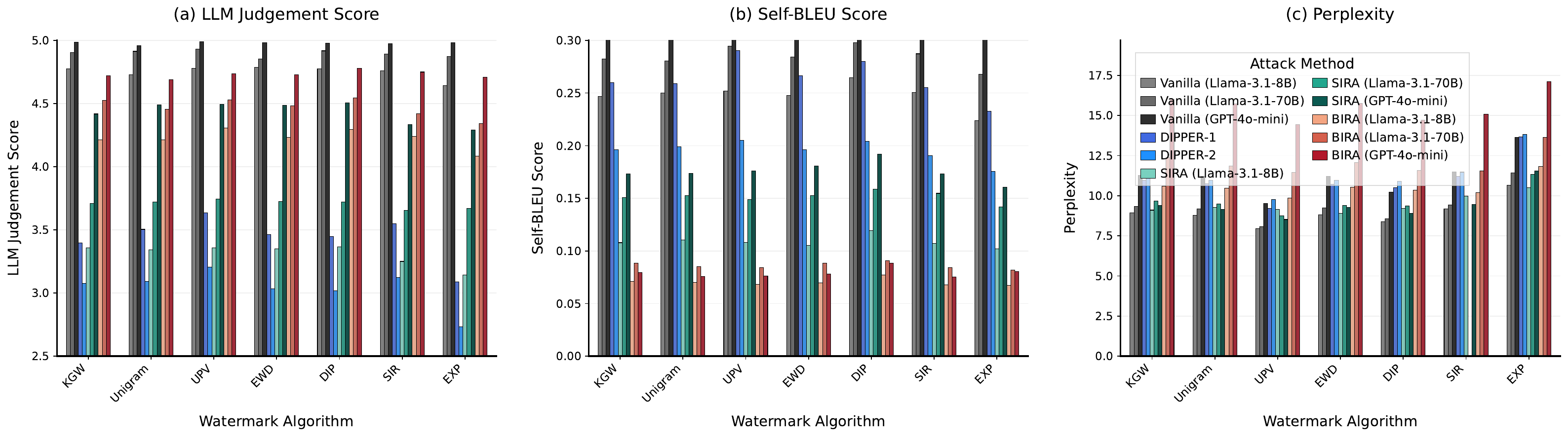}
  \vspace{-2.1em}
  \caption{Comparison of text quality across different attacks for various watermarking methods, evaluated by LLM judgment score ($\uparrow$), Self-BLEU score ($\downarrow$), and Perplexity ($\downarrow$). Our method preserves semantic fidelity to the original text compared to other attack baselines (DIPPER and SIRA) while providing stronger paraphrasing, as reflected in lower Self-BLEU scores. Additional results for NLI score ($\uparrow$) and S-BERT score ($\uparrow$) are provided in Figure~\ref{fig:appendix:NLI-S-bert} and exact values are detailed in Appendix~\ref{appendix:detailed_text_quality}.}
  \label{fig:text-quality}
  \vspace{-1.0em}
\end{figure*}


\subsection{Setup}\label{sec:experiments:setup}
\jy{
\textbf{Dataset.}
For a fair comparison with SIRA~\citep{cheng2025revealing}, the current state-of-the-art query-free attack, we follow its setup and generate watermarked text from C4 using the first 500 test samples as prompts. For each prompt, we generate 230 tokens with OPT-1.3B~\citep{zhang2022opt} as the watermarked text generator. We observe that SIRA is effective only under this setup and fails to preserve semantics when the generation length is increased to 600 tokens, whereas our method remains effective (Section~\ref{sec:exp:analysis}). We also evaluate on Dolly CW~\citep{conover2023free} and DBpedia~\citep{raffel2020exploring}, where our method consistently outperforms the baselines (Appendix~\ref{appendix:extended_results}).
}

\textbf{Watermark algorithms.}
We evaluate seven watermarking methods: KGW~\citep{kirchenbauer2023watermark}, Unigram~\citep{zhao2023provable}, UPV~\citep{liu2023unforgeable}, EWD~\citep{lu2024entropy}, DIP~\citep{wu2023resilient}, SIR~\citep{liu2023semantic}, and EXP~\citep{aaronson2022watermark}. For each method, we use the default or recommended hyperparameters from the original studies~\citep{pan2024markllm, cheng2025revealing}. For KGW, we use a single left hash on only the immediately preceding token to form the green and red token lists, since fewer preceding tokens improve robustness against watermark evasion attacks. \jy{Appendix~\ref{appendix:additional_robustness} further evaluates our method on recent sentence-level watermarking schemes~\citep{dabiriaghdam-wang-2025-simmark, huo2025pmark}, localized detectors~\citep{kirchenbauer2023reliability}, and failure modes that shift watermark signals to lower-entropy tokens or use stronger logit bias $\gamma$. Although our analysis focuses on token-level green-red watermarking, our method remains effective beyond this setting, and these failure modes show both impractical and ineffective.}


\textbf{Baselines and language models.}
We compare against three \emph{query-free} attack baselines: Vanilla (paraphrasing with a language model), DIPPER \citep{krishna2023paraphrasing} (a trained paraphrasing expert), and SIRA \citep{cheng2025revealing} (masking and rewriting strategy). For DIPPER-1, we set lexical diversity to 60 without order diversity, and for DIPPER-2, we add order diversity of 40 to increase paraphrasing strength. 
\jy{We further compare against an adaptive attack \citep{diaa2024optimizing} that assumes knowledge of the watermarking scheme (Appendix~\ref{appendix:adaptive}). Even in a black-box setup, our method outperforms this baseline with only a slight trade-off in semantic fidelity.}

Across all methods, we use rewriting models (Llama-3.1-8B, Llama-3.1-70B, and GPT-4o-mini) that are independent of the watermark-generation model, \jy{and compute token surprisal with the same rewriting model}. We use top-$p$ sampling ($p=0.95$) with temperature 0.7; the rewriting prompt is provided in Appendix~\ref{appendix:paraphrasing_prompt}. Since GPT-4o-mini does not expose logits, we use Llama-3.2-3B as an auxiliary model to estimate high-surprisal tokens, convert them to text, and re-tokenize them with the GPT tokenizer for SIRA and our method. We then apply negative logit bias through the GPT API, which supports token-level logit biasing. \jy{Appendix~\ref{appendix:mismatch} shows that performance is robust to this tokenizer mismatch.}

For SIRA, we set the masking threshold to $0.3$, as recommended by \citet{cheng2025revealing}, and apply it to high-entropy tokens across all models. For our method, we use a percentile threshold of $q=0.5$ to construct the proxy suppression set. The initial negative logit bias is $\beta_0=-4$ for Llama-3.1-8B and Llama-3.1-70B \citep{dubey2023llama3}, and $\beta_0=-11$ for GPT-4o-mini \citep{openai2024gpt4omini}, following the initialization strategy in Section~\ref{sec:method:BIRA:adaptive}, with $\mathrm{lr}=0.125$.

\subsubsection{Evaluation Metrics}
We evaluate attacks in terms of both attack efficacy and text quality.

\textbf{Attack efficacy.}
Our primary measure is the \emph{Attack Success Rate (ASR)}, the fraction of attacked watermarked texts that are misclassified as non-watermarked. To reduce sensitivity to detector threshold choices, following \citep{zhao2023provable, liu2023semantic, cheng2025revealing}, we construct a test set of 500 attacked texts and 500 human-written texts and calibrate the detector’s $z$-threshold to achieve False Positive Rates (FPR) of 1\% and 10\%. At these FPRs, we report the corresponding True Positive Rate (TPR) and F1-score. Lower TPR indicates that attacked texts are harder to distinguish from human-written text.

\qboxesstar
{!t}
{Watermarked Text by KGW ($z$-score: 6.03)}
{\textcolor{watermark}{Gr}\textcolor{watermark}{acious}\textcolor{watermark}{ness} \textcolor{regular}{might} \textcolor{watermark}{not} \textcolor{watermark}{seem} \textcolor{regular}{like} \textcolor{regular}{the} \textcolor{regular}{most} \textcolor{watermark}{important} \textcolor{regular}{thing} \textcolor{regular}{in} \textcolor{regular}{defining} \textcolor{watermark}{the} \textcolor{regular}{success} \textcolor{watermark}{of} \textcolor{regular}{a} \textcolor{regular}{nation}\textcolor{regular}{,} \textcolor{regular}{but} \textcolor{watermark}{it} \textcolor{watermark}{is} \textcolor{watermark}{paramount} \textcolor{regular}{for} \textcolor{regular}{Mr} \textcolor{regular}{Lim} \textcolor{regular}{S}\textcolor{watermark}{ion}\textcolor{regular}{g}\textcolor{regular}{.} \textcolor{regular}{By} \textcolor{regular}{the} \textcolor{regular}{age} \textcolor{watermark}{of} \textcolor{regular}{45}\textcolor{regular}{,} \textcolor{regular}{he} \textcolor{regular}{already} \textcolor{regular}{runs} \textcolor{regular}{a} \textcolor{regular}{fortune} \textcolor{regular}{of} \textcolor{regular}{close} \textcolor{watermark}{to} \textcolor{watermark}{\$}\textcolor{regular}{1}\textcolor{regular}{bn}\textcolor{regular}{.} \textcolor{regular}{His} \textcolor{regular}{wife}\textcolor{regular}{,} \textcolor{regular}{Ms} \textcolor{regular}{Rachel} \textcolor{regular}{J}\textcolor{regular}{ia} \textcolor{regular}{Xu} \textcolor{regular}{(}\textcolor{watermark}{above}\textcolor{regular}{),} \textcolor{regular}{started} \textcolor{regular}{a} \textcolor{regular}{chain} \textcolor{regular}{of} \textcolor{regular}{popular} \textcolor{regular}{supermarkets}\textcolor{regular}{,} \textcolor{watermark}{which} \textcolor{regular}{now} \textcolor{watermark}{has} \textcolor{regular}{13} \textcolor{regular}{outlets} \textcolor{regular}{and} \textcolor{regular}{employs} \textcolor{regular}{more} \textcolor{regular}{than} \textcolor{regular}{5}\textcolor{regular}{,}\textcolor{regular}{000} \textcolor{regular}{people}\textcolor{regular}{.} \textcolor{watermark}{ }\textcolor{regular}{Not} \textcolor{regular}{by} \textcolor{regular}{sheer} \textcolor{regular}{talent} \textcolor{regular}{alone}\textcolor{regular}{,} \textcolor{regular}{but} \textcolor{regular}{because} \textcolor{regular}{of} \textcolor{regular}{a} \textcolor{regular}{strong} \textcolor{regular}{sense} \textcolor{regular}{of} \textcolor{regular}{character} \textcolor{regular}{that} \textcolor{regular}{has} \textcolor{regular}{made} \textcolor{regular}{him} \textcolor{regular}{a} \textcolor{regular}{standout} \textcolor{regular}{amongst} \textcolor{regular}{the} \textcolor{regular}{elite} \textcolor{watermark}{among} \textcolor{watermark}{Singapore}\textcolor{watermark}{'s} \textcolor{watermark}{business} \textcolor{regular}{leaders}\textcolor{watermark}{.}}
{Attacked Text by BIRA ($z$-score: 0.83)}
{\textcolor{watermark}{Gr}\textcolor{watermark}{acious}\textcolor{watermark}{ness} \textcolor{watermark}{may} \textcolor{regular}{appear} \textcolor{regular}{insignificant} \textcolor{watermark}{in} \textcolor{regular}{determining} \textcolor{regular}{what} \textcolor{watermark}{makes} \textcolor{watermark}{a} \textcolor{regular}{country} \textcolor{regular}{successful}\textcolor{watermark}{;} \textcolor{watermark}{however}\textcolor{regular}{,} \textcolor{regular}{it} \textcolor{regular}{holds} \textcolor{regular}{great} \textcolor{regular}{importance} \textcolor{watermark}{for} \textcolor{regular}{Mr}\textcolor{regular}{.} \textcolor{watermark}{Lim} \textcolor{regular}{S}\textcolor{watermark}{ion}\textcolor{regular}{g}\textcolor{regular}{.} \textcolor{watermark}{At} \textcolor{watermark}{just} \textcolor{watermark}{45} \textcolor{regular}{years} \textcolor{regular}{old}\textcolor{regular}{,} \textcolor{regular}{he} \textcolor{watermark}{manages} \textcolor{watermark}{nearly} \textcolor{watermark}{\$}\textcolor{regular}{1} \textcolor{watermark}{billion} \textcolor{watermark}{in} \textcolor{regular}{wealth}\textcolor{watermark}{.} \textcolor{regular}{His} \textcolor{watermark}{spouse}\textcolor{regular}{,} \textcolor{regular}{Ms}\textcolor{regular}{.} \textcolor{watermark}{Rachel} \textcolor{regular}{J}\textcolor{regular}{ia} \textcolor{regular}{Xu}\textcolor{regular}{,} \textcolor{regular}{has} \textcolor{regular}{established} \textcolor{regular}{an} \textcolor{watermark}{acclaimed} \textcolor{regular}{supermarket} \textcolor{watermark}{franchise} \textcolor{watermark}{with} \textcolor{regular}{currently} \textcolor{watermark}{13} \textcolor{watermark}{locations} \textcolor{regular}{employing} \textcolor{watermark}{over} \textcolor{regular}{5}\textcolor{regular}{,}\textcolor{regular}{000} \textcolor{watermark}{individuals}\textcolor{watermark}{.} \textcolor{watermark}{ }\textcolor{regular}{His} \textcolor{watermark}{prominence} \textcolor{watermark}{in} \textcolor{watermark}{Singapore}\textcolor{watermark}{'s} \textcolor{watermark}{business} \textcolor{watermark}{community} \textcolor{regular}{is} \textcolor{regular}{attributed} \textcolor{watermark}{not} \textcolor{watermark}{solely} \textcolor{regular}{to} \textcolor{watermark}{his} \textcolor{watermark}{exceptional} \textcolor{regular}{skills} \textcolor{regular}{but} \textcolor{watermark}{also} \textcolor{regular}{his} \textcolor{regular}{remarkable} \textcolor{regular}{personal} \textcolor{regular}{qualities}\textcolor{regular}{.}}
{Qualitative comparison of KGW-watermarked text and its BIRA rewrite (Llama-3.1-8B). BIRA suppresses green tokens while preserving meaning, reducing the $z$-score from 6.03 to 0.83. Additional examples (longer texts and other schemes) are in Appendix~J.1. 
\vspace{-1.0em}}

\begin{table*}[!t]
\centering
\vspace{.5em}
\caption{Effect of logit bias $\beta$ and percentile $q$ on attack performance}
\label{tab:ablation_logit_bias_percentile}
\resizebox{0.9\linewidth}{!}{%
\begin{tabular}{lcccccccccc}
\toprule
Logit Bias ($\beta$) & 0.0 & -1.0 & -2.0 & -3.0 & -4.0 & -5.0 & -6.0 & -7.0 & -8.0 & -9.0 \\
\midrule
ASR (\(\uparrow\)) & 54.2\% & 74.4\% & 88.2\% & 97.2\% & 99.6\% & 99.4\% & 99.6\% & 99.8\% & 99.8\% & 99.6\% \\
LLM Judgment (\(\uparrow\))           & 4.76   & 4.69   & 4.62   & 4.48   & 4.24   & 3.79   & 3.45   & 3.19   & 3.05   & 2.94   \\
Self-BLEU (\(\downarrow\))         & 0.25   & 0.20   & 0.15   & 0.11   & 0.07   & 0.04   & 0.03   & 0.02   & 0.01   & 0.01   \\
Perplexity (\(\downarrow\))        & 9.17   & 8.99   & 9.07   & 9.47   & 10.09  & 11.23  & 12.56  & 13.42  & 14.03  & 14.10  \\
Iteration (\(\downarrow\))         & 1.00   & 1.00   & 1.00   & 1.01   & 1.03   & 1.20   & 1.48   & 1.83   & 2.54   & 3.36   \\
\midrule
Percentile ($q$) & 0.0 & 0.1 & 0.2 & 0.3 & 0.4 & 0.5 & 0.6 & 0.7 & 0.8 & 0.9 \\
\midrule
ASR (\(\uparrow\)) & 99.0\% & 99.6\% & 99.2\% & 98.6\% & 99.6\% & 99.6\% & 98.4\% & 96.4\% & 89.0\% & 77.2\% \\
LLM Judgment (\(\uparrow\))           & 4.15   & 4.20   & 4.15   & 4.16   & 4.18   & 4.24   & 4.26   & 4.35   & 4.48   & 4.60   \\
Self-BLEU (\(\downarrow\))         & 0.06   & 0.06   & 0.06   & 0.06   & 0.06   & 0.07   & 0.08   & 0.10   & 0.13   & 0.18   \\
Perplexity (\(\downarrow\))        & 12.16  & 12.01  & 11.72  & 11.37  & 10.68  & 10.09  & 9.62   & 9.30   & 8.95   & 8.93   \\
Iteration (\(\downarrow\))         & 1.06   & 1.06   & 1.05   & 1.06   & 1.05   & 1.03   & 1.02   & 1.02   & 1.01   & 1.00   \\
\bottomrule
\end{tabular}
}
\vspace{-1.3em}
\end{table*}

\textbf{Text quality.}
We assess text quality using five metrics that cover semantic fidelity, paraphrasing strength, and fluency. To evaluate semantic preservation, we employ three measures. First, we use an \textbf{LLM judgement score} \citep{zheng2023judging, fu2023gptscore, liu2023g} from GPT-4o-2024-08-06 \citep{openai2024gpt4o}, which scores meaning preservation on a 1-to-5 scale: a score of 5 indicates perfect fidelity, 4 allows for minor nuances without factual changes, and 3 reflects that only the main idea is preserved while important details or relations are altered (see Appendix~\ref{appendix:llm_judge} for prompt details). We also compute an \textbf{NLI score} using \texttt{nli-deberta-v3-large} \citep{he2020deberta} to assess logical consistency between the original and attacked texts by evaluating mutual entailment. In addition, we report an \textbf{S-BERT score} \citep{reimers2019sentence}, following \citep{cheng2025revealing}, which is based on the cosine similarity between sentence embeddings of the two texts.

To quantify the degree of paraphrasing and assess text naturalness, we use two additional metrics. Paraphrasing strength is measured with the \textbf{Self-BLEU score} \citep{zhu2018texygen}, which computes the BLEU score \citep{papineni2002bleu} of each attacked text against its corresponding watermarked reference. This measures the overlap between the two texts, where a lower score indicates less lexical overlap and therefore stronger paraphrasing. Text naturalness is evaluated using \textbf{Perplexity (PPL)} \citep{jelinek1977perplexity}, where a lower PPL corresponds to more probable and natural text.

\subsection{Experimental Results}\label{sec:exp:results}
\textbf{Attack efficacy.}
Table~\ref{tab:main_results} compares watermark removal methods using ASR. BIRA consistently outperforms all baselines across watermarking algorithms and language models, with especially large gains against SIR, the strongest baseline watermark. For example, on GPT-4o-mini, vanilla paraphrasing achieves 49.6 ASR on average, whereas BIRA reaches 99.5. Additionally, Figure~\ref{fig:dynamic-threshold} reports detector performance under calibrated thresholds (FPR = 1\% and 10\%) and the best F1 score for each watermarking algorithm.  Across settings, BIRA yields the lowest best F1 score and TPR, confirming its superior evasion effectiveness. \jy{Interestingly, BIRA is also effective against EXP, a sampling-based watermarking method, and sentence-level watermarks (Appendix~\ref{appendix:sentence}). For EXP, watermark traces are also concentrated in high-entropy token choices \citep{kirchenbauer2023reliability}, which BIRA suppresses using self-information as a proxy. For sentence-level watermarks, BIRA's document-level paraphrasing can merge or split sentences, disrupting the consistent sentence segmentation assumed by detection.}

\textbf{Text quality.}
Figure~\ref{fig:text-quality} shows that vanilla paraphrasing attains the highest LLM judgment score because its paraphrasing ability is weak and largely preserves the original structure, which leads to low ASR. This is consistent with its highest Self-BLEU score, indicating strong overlap with the source text. In contrast, our method achieves a significantly higher LLM judgment score than stronger baselines such as DIPPER and SIRA, demonstrating better semantic preservation. Notably, the state-of-the-art baseline SIRA often distorts meaning due to its masking procedure, which disrupts contextual information. At the same time, BIRA yields a much lower Self-BLEU score, showing that it generates more diverse paraphrases and relies less on reusing words from the watermarked text. For perplexity, our method remains comparable to other approaches, with only a slight increase when GPT-4o-mini is used. We attribute this increase to GPT-4o-mini occasionally generating stylistically stiff (e.g., excessive hyphenation) but semantically faithful text, leading to reduced PPL score. Qualitative examples illustrating this are provided in Appendix~\ref{appendix:stilted_GPT-4o-mini}, while Appendix~\ref{appendix:ppl_analysis} further analyzes this effect by examining perplexity distributions. For NLI and S-BERT scores (Figure~\ref{fig:appendix:NLI-S-bert}), the results align with the LLM judgment score and confirm our method’s effectiveness.

\subsection{Ablation Studies and Analysis}\label{sec:exp:analysis}

\jy{
We conduct ablations on the logit bias $\beta$ and percentile $p$. We also evaluate our method on longer texts generated by a stronger LLM and validate the proxy suppression set $\widehat{\mathcal{G}}$ by analyzing detection bounds and comparing against random token selection. Appendix~\ref{appendix:efficiency} compares computational cost across baselines and shows that our method remains efficient due to its simple operations. Unless otherwise specified, all experiments are conducted on the Llama-3.1-8B-Instruct model with the SIR watermarking scheme, following the experimental setup described in Section~\ref{sec:experiments:setup}.
}

\vspace{.25em}

\textbf{Effect of logit bias $\beta$ and percentile $q$.}
We vary $\beta$ from $0.0$ to $-9.0$ 
with the percentile fixed at $q=0.5$. Table~\ref{tab:ablation_logit_bias_percentile} shows that without logit bias ($\beta=0.0$, equivalent to vanilla paraphrasing), the ASR is low, but it increases as the absolute value of $\beta$ grows. This is consistent with Theorem~\ref{thm:evasion}: increasing the negative bias further suppresses green token sampling and thus lowers the overall probability of detection. However, larger negative values of $\beta$ gradually degrade text quality and require more iterations, as excessive bias restricts the token distribution too strongly.

Next, we vary $q$ from $0.0$ to $0.9$ with $\beta=-4.0$. Table~\ref{tab:ablation_logit_bias_percentile} shows that when $q=0$ (bias applied to all tokens in the watermarked text), ASR is moderately high, but text quality degrades slightly, and the number of iterations increases because many tokens are suppressed. As $q$ grows, the proxy set contains fewer tokens, and fewer are suppressed, so ASR drops since the watermark signal is not effectively removed, while text quality improves as the token distribution is less constrained.

\vspace{.25em}
\jy{
\textbf{Long-text evaluation with a stronger LLM.}
Longer texts make watermark detection easier, while OPT-1.3B, despite being commonly used for its high-entropy outputs~\citep{kuditipudi2023robust}, often produces low-quality text. We therefore evaluate on 500 samples of 600-token SIR-watermarked text generated by Qwen2.5-32B. Table~\ref{tab:long_text_sira_bira} shows that BIRA remains effective in this stronger setting, whereas SIRA suffers severe quality degradation as masking becomes more disruptive on longer texts.

\begin{table}[h]
\centering
\caption{Long-text evaluation on 600-token SIR-watermarked texts from Qwen2.5-32B. Additional results for other baselines and metrics are provided in Appendix~\ref{appendix:long_text}.}
\label{tab:long_text_sira_bira}
\resizebox{0.8\columnwidth}{!}{%
\begin{tabular}{lccc}
\toprule
Attack & TPR@FPR=1\%$\downarrow$ & LLM Judge$\uparrow$ & PPL$\downarrow$ \\
\midrule
SIRA & 20.8\% & 1.33 & 7.19 \\
BIRA (Ours) & 4.0\% & 4.14 & 8.90 \\
\bottomrule
\end{tabular}
}
\end{table}
}

\textbf{Detection bound analysis.}
To empirically evaluate the effectiveness of the proxy suppression set $\widehat{\G}$, we compute, for each of 500 samples under the Unigram watermark, the per-sample upper bound on detection probability implied by Theorem~\ref{thm:evasion}. For ease of analysis, we generate watermarked text with Llama-3.2-3B and apply the BIRA attack with Llama-3.1-8B, since the two models share the same tokenizer. For each sample, we calculate the average conditional green probability using the true green set $\mathcal{G}(\W_k)$
\[
\bar p=\frac{1}{N}\sum_{n=1}^{N}\mathbb{E}\!\left[\mathbf{1}\{\tilde y^{(n)}\in\mathcal{G}(\W_k)\}\,\middle|\,\tilde y^{0:n-1}\right],
\]
set \(\hat\delta=\max\{0,\,p_\tau-\bar p\}\), and evaluate the bound \(\exp\!\big(-\tfrac{1}{2}N\hat\delta^{2}\big)\).
Figure~\ref{fig:detection_upper_bound} shows that BIRA produces substantially lower per-sample detection bounds than Vanilla for most samples. At the 90th percentile over 500 samples, BIRA’s upper bound is $7.50\times10^{-2}$, compared to $7.97\times10^{-1}$ for Vanilla, indicating that the proxy suppression set $\widehat{\G}$ effectively reduces green-token sampling.

\vspace{-.5em}
\begin{figure}[h]
    \centering
    \includegraphics[width=1.0\linewidth]{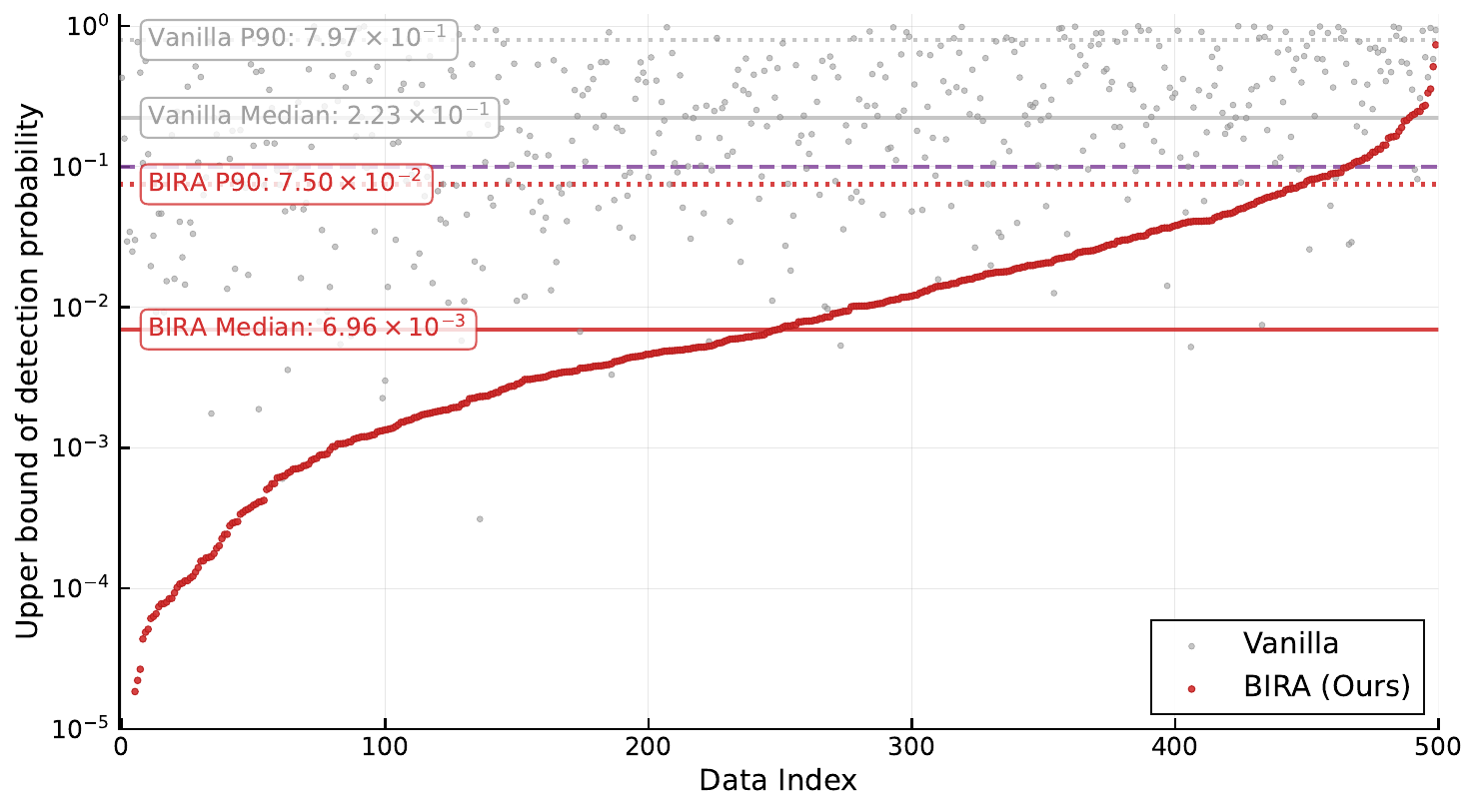}
    \caption{Detection upper bounds per sample from Theorem~\ref{thm:evasion}, sorted by BIRA. BIRA substantially reduces the upper bound on detection probability for most samples compared to Vanilla.}
    \label{fig:detection_upper_bound}
\end{figure}

\textbf{Effectiveness of statistical signal suppression.}
Table~\ref{tab:watermark:unigram_sir} presents the $z$-scores and detection thresholds $\tau$ for attacks against the SIR and Unigram schemes. Our method achieves the lowest $z$-scores among all baselines, effectively suppressing the watermark signal and rendering the text statistically indistinguishable from human writing. Additional results for other watermarking schemes are presented in Table~\ref{tab:z-score-all}, where our method consistently outperforms all baselines.

\begin{table}[h]
\centering
\caption{$z$-score comparison of attacks on SIR and Unigram watermarking scheme.}
\label{tab:watermark:unigram_sir}
\resizebox{0.8\columnwidth}{!}{%
\begin{tabular}{lcc}
\toprule
Watermark & SIR ($\tau=0.2$) & Unigram ($\tau=4.0$) \\
\midrule
Vanilla  & $0.19 \pm 0.10$ & $3.04 \pm 1.52$ \\
DIPPER-1 & $0.18 \pm 0.11$ & $3.63 \pm 1.65$ \\
DIPPER-2 & $0.14 \pm 0.11$ & $3.10 \pm 1.63$ \\
SIRA     & $0.14 \pm 0.12$ & $1.63 \pm 1.38$ \\
BIRA (Ours)     & $-0.06 \pm 0.09$ & $-0.34 \pm 1.61$ \\
\bottomrule
\end{tabular}
}
\end{table}

\textbf{Impact of token selection.}
To evaluate the effectiveness of self-information–guided token selection for constructing the proxy suppression set $\widehat{\G}$, we vary the selection ratio from 0.1 to 0.9. At each ratio, we select the tokens with the highest self-information and compare against a proxy set of the same size formed by random token selection. As shown in Figure~\ref{fig:token_selection}, applying a negative logit bias to self-information–guided tokens consistently yields stronger watermark evasion than random selection, indicating that self-information provides a practical signal for identifying and suppressing watermark traces.

\begin{figure}[h]
\centering
\includegraphics[width=0.95\columnwidth]{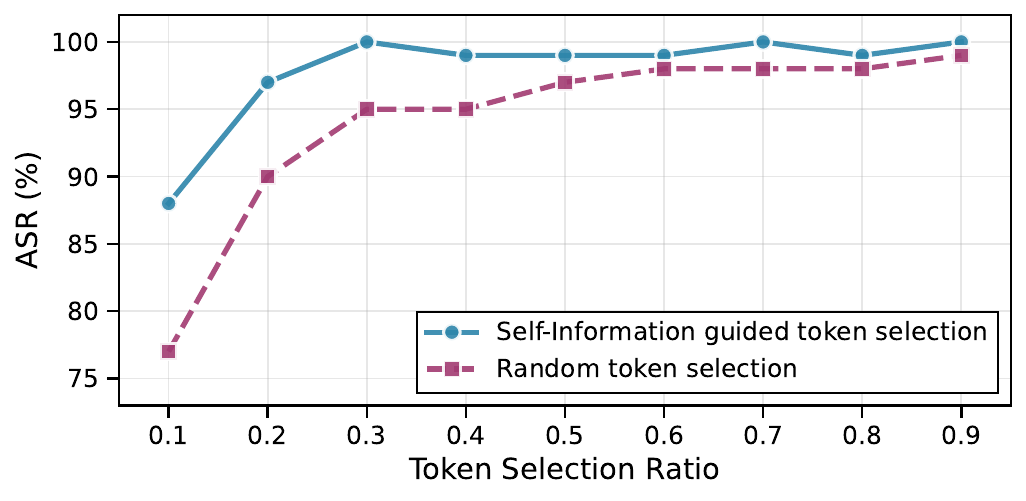}
\caption{Comparison of ASR for self-information–guided token selection and random token selection.}
\label{fig:token_selection}
\end{figure}

\section{Conclusion}\label{sec:conclusion}
This paper formalizes a theoretical analysis of rewriting attacks and, based on this insight, introduces the Bias-Inversion Rewriting Attack (BIRA), which applies a negative bias to a proxy suppression set during rewriting. We empirically demonstrate that BIRA is a practical, query-free attack that effectively erases watermark signals while maintaining high semantic fidelity. Our findings reveal critical limitations in current methods, underscoring the need for rigorous stress-testing and defenses robust to sophisticated rewriting.

\section*{Acknowledgments}
We are grateful to Professor He He for her valuable feedback and insightful discussions. This work was supported by the Institute of Information \& Communications Technology Planning \& Evaluation (IITP) grant funded by the Korea government(MSIT) (No.RS-2019-II191906, Artificial Intelligence Graduate School Program (POSTECH)); the IITP grant funded by the Korean government (MSIT) (No. RS-2024-00509258, Global AI Frontier Lab).

\section*{Impact Statement} 
This paper aims to rigorously stress-test the robustness of current LLM watermarking schemes. While numerous studies demonstrate robustness against simple post-editing operations, our work reveals that these defenses are not yet sufficiently stress-tested against sophisticated black-box evasion. We demonstrate that a simple, practical rewriting attack can effectively eliminate watermark signals, highlighting a critical gap in current evaluation standards. Additionally, we acknowledge the potential risks associated with releasing a successful evasion method, as it could be used to bypass detection of misused content. However, we believe that transparently identifying vulnerabilities is a prerequisite for scientific progress. By exposing these limitations, our work serves as a necessary form of adversarial ``red-teaming" aiming to foster the development of more robust defenses and contribute to the realization of truly responsible AI systems.

\bibliography{icml2026}

@article{zhang2022opt,
  title={Opt: Open pre-trained transformer language models},
  author={Zhang, Susan and Roller, Stephen and Goyal, Naman and Artetxe, Mikel and Chen, Moya and Chen, Shuohui and Dewan, Christopher and Diab, Mona and Li, Xian and Lin, Xi Victoria and others},
  journal={arXiv preprint arXiv:2205.01068},
  year={2022}
}

@article{minaee2024large,
  title={Large language models: A survey},
  author={Minaee, Shervin and Mikolov, Tomas and Nikzad, Narjes and Chenaghlu, Meysam and Socher, Richard and Amatriain, Xavier and Gao, Jianfeng},
  journal={arXiv preprint arXiv:2402.06196},
  year={2024}
}

@article{wang2024survey,
  title={A survey on large language model based autonomous agents},
  author={Wang, Lei and Ma, Chen and Feng, Xueyang and Zhang, Zeyu and Yang, Hao and Zhang, Jingsen and Chen, Zhiyuan and Tang, Jiakai and Chen, Xu and Lin, Yankai and others},
  journal={Frontiers of Computer Science},
  volume={18},
  number={6},
  pages={186345},
  year={2024},
  publisher={Springer}
}

@article{monteith2024artificial,
  title={Artificial intelligence and increasing misinformation},
  author={Monteith, Scott and Glenn, Tasha and Geddes, John R and Whybrow, Peter C and Achtyes, Eric and Bauer, Michael},
  journal={The British Journal of Psychiatry},
  volume={224},
  number={2},
  pages={33--35},
  year={2024},
  publisher={Cambridge University Press}
}

@article{wang2024unveiling,
  title={Unveiling the misuse potential of base large language models via in-context learning},
  author={Wang, Xiao and Chen, Tianze and Yang, Xianjun and Zhang, Qi and Zhao, Xun and Lin, Dahua},
  journal={arXiv preprint arXiv:2404.10552},
  year={2024}
}

@article{papageorgiou2024survey,
  title={A survey on the use of large language models (llms) in fake news},
  author={Papageorgiou, Eleftheria and Chronis, Christos and Varlamis, Iraklis and Himeur, Yassine},
  journal={Future Internet},
  volume={16},
  number={8},
  pages={298},
  year={2024},
  publisher={MDPI}
}

@article{stokel2022ai,
  title={AI bot ChatGPT writes smart essays-should professors worry?},
  author={Stokel-Walker, Chris},
  journal={Nature},
  year={2022}
}

@article{kamalov2023new,
  title={New era of artificial intelligence in education: Towards a sustainable multifaceted revolution},
  author={Kamalov, Firuz and Santandreu Calonge, David and Gurrib, Ikhlaas},
  journal={Sustainability},
  volume={15},
  number={16},
  pages={12451},
  year={2023},
  publisher={MDPI}
}

@article{BartzHu2023WatermarkAI,
  author       = {Diane Bartz and Krystal Hu},
  title        = {OpenAI, Google, others pledge to watermark AI content for safety, White House says},
  journal      = {Reuters},
  year         = {2023},
  month        = {July 21},
  url          = {https://www.reuters.com/technology/openai-google-others-pledge-watermark-ai-content-safety-white-house-2023-07-21/}
}

@inproceedings{kirchenbauer2023watermark,
  title={A watermark for large language models},
  author={Kirchenbauer, John and Geiping, Jonas and Wen, Yuxin and Katz, Jonathan and Miers, Ian and Goldstein, Tom},
  booktitle={International Conference on Machine Learning},
  pages={17061--17084},
  year={2024},
  organization={ICML}
}

@article{kirchenbauer2023reliability,
  title={On the reliability of watermarks for large language models},
  author={Kirchenbauer, John and Geiping, Jonas and Wen, Yuxin and Shu, Manli and Saifullah, Khalid and Kong, Kezhi and Fernando, Kasun and Saha, Aniruddha and Goldblum, Micah and Goldstein, Tom},
  journal={ICLR},
  year={2024}
}

@article{liu2023unforgeable,
  title={An unforgeable publicly verifiable watermark for large language models},
  author={Liu, Aiwei and Pan, Leyi and Hu, Xuming and Li, Shu'ang and Wen, Lijie and King, Irwin and Yu, Philip S},
  journal={arXiv preprint arXiv:2307.16230},
  year={2023}
}

@article{zhao2023provable,
  title={Provable robust watermarking for ai-generated text},
  author={Zhao, Xuandong and Ananth, Prabhanjan and Li, Lei and Wang, Yu-Xiang},
  journal={ICLR},
  year={2024}
}

@article{liu2023semantic,
  title={A semantic invariant robust watermark for large language models},
  author={Liu, Aiwei and Pan, Leyi and Hu, Xuming and Meng, Shiao and Wen, Lijie},
  journal={ICLR 2024},
  year={2024}
}

@inproceedings{lee2024wrote,
  title={Who Wrote this Code? Watermarking for Code Generation},
  author={Lee, Taehyun and Hong, Seokhee and Ahn, Jaewoo and Hong, Ilgee and Lee, Hwaran and Yun, Sangdoo and Shin, Jamin and Kim, Gunhee},
  booktitle={Proceedings of the 62nd Annual Meeting of the Association for Computational Linguistics (Volume 1: Long Papers)},
  pages={4890--4911},
  year={2024}
}

@inproceedings{lu2024entropy,
  title={An Entropy-based Text Watermarking Detection Method},
  author={Lu, Yijian and Liu, Aiwei and Yu, Dianzhi and Li, Jingjing and King, Irwin},
  booktitle={Proceedings of the 62nd Annual Meeting of the Association for Computational Linguistics (Volume 1: Long Papers)},
  pages={11724--11735},
  year={2024}
}

@article{wu2023resilient,
  title={A resilient and accessible distribution-preserving watermark for large language models},
  author={Wu, Yihan and Hu, Zhengmian and Guo, Junfeng and Zhang, Hongyang and Huang, Heng},
  journal={arXiv preprint arXiv:2310.07710},
  year={2023}
}

@article{Tong2024AB3211Watermarking,
  author       = {Anna Tong},
  title        = {OpenAI supports California AI bill requiring 'watermarking' of synthetic content},
  journal      = {Reuters},
  year         = {2024},
  month        = {August 26},
  url          = {https://www.reuters.com/technology/artificial-intelligence/openai-supports-california-ai-bill-requiring-watermarking-synthetic-content-2024-08-26/}
}

@misc{aaronson2022watermark,
  author       = {Aaronson, Scott and Kirchner, H.},
  title        = {Watermarking GPT outputs},
  year         = {2022},
  howpublished = {\url{https://www.scottaaronson.com/talks/watermark.ppt}}
}

@inproceedings{christ2024undetectable,
  title={Undetectable watermarks for language models},
  author={Christ, Miranda and Gunn, Sam and Zamir, Or},
  booktitle={The Thirty Seventh Annual Conference on Learning Theory},
  pages={1125--1139},
  year={2024},
  organization={PMLR}
}

@article{kuditipudi2023robust,
  title={Robust distortion-free watermarks for language models},
  author={Kuditipudi, Rohith and Thickstun, John and Hashimoto, Tatsunori and Liang, Percy},
  journal={arXiv preprint arXiv:2307.15593},
  year={2023}
}

@article{hu2023unbiased,
  title={Unbiased watermark for large language models},
  author={Hu, Zhengmian and Chen, Lichang and Wu, Xidong and Wu, Yihan and Zhang, Hongyang and Huang, Heng},
  journal={arXiv preprint arXiv:2310.10669},
  year={2023}
}

@article{pan2024markllm,
  title={Markllm: An open-source toolkit for llm watermarking},
  author={Pan, Leyi and Liu, Aiwei and He, Zhiwei and Gao, Zitian and Zhao, Xuandong and Lu, Yijian and Zhou, Binglin and Liu, Shuliang and Hu, Xuming and Wen, Lijie and others},
  journal={arXiv preprint arXiv:2405.10051},
  year={2024}
}

@article{wu2024bypassing,
  title={Bypassing llm watermarks with color-aware substitutions},
  author={Wu, Qilong and Chandrasekaran, Varun},
  journal={arXiv preprint arXiv:2403.14719},
  year={2024}
}

@article{chen2024mark,
  title={De-mark: Watermark removal in large language models},
  author={Chen, Ruibo and Wu, Yihan and Guo, Junfeng and Huang, Heng},
  journal={arXiv preprint arXiv:2410.13808},
  year={2024}
}

@article{jovanovic2024watermark,
  title={Watermark stealing in large language models},
  author={Jovanovi{\'c}, Nikola and Staab, Robin and Vechev, Martin},
  journal={arXiv preprint arXiv:2402.19361},
  year={2024}
}

@article{krishna2023paraphrasing,
  title={Paraphrasing evades detectors of ai-generated text, but retrieval is an effective defense},
  author={Krishna, Kalpesh and Song, Yixiao and Karpinska, Marzena and Wieting, John and Iyyer, Mohit},
  journal={Advances in Neural Information Processing Systems},
  volume={36},
  pages={27469--27500},
  year={2023}
}

@article{cheng2025revealing,
  title={Revealing Weaknesses in Text Watermarking Through Self-Information Rewrite Attacks},
  author={Cheng, Yixin and Guo, Hongcheng and Li, Yangming and Sigal, Leonid},
  journal={arXiv preprint arXiv:2505.05190},
  year={2025}
}

@article{raffel2020exploring,
  title={Exploring the limits of transfer learning with a unified text-to-text transformer},
  author={Raffel, Colin and Shazeer, Noam and Roberts, Adam and Lee, Katherine and Narang, Sharan and Matena, Michael and Zhou, Yanqi and Li, Wei and Liu, Peter J},
  journal={Journal of machine learning research},
  volume={21},
  number={140},
  pages={1--67},
  year={2020}
}

@article{zheng2023judging,
  title={Judging llm-as-a-judge with mt-bench and chatbot arena},
  author={Zheng, Lianmin and Chiang, Wei-Lin and Sheng, Ying and Zhuang, Siyuan and Wu, Zhanghao and Zhuang, Yonghao and Lin, Zi and Li, Zhuohan and Li, Dacheng and Xing, Eric and others},
  journal={Advances in neural information processing systems},
  volume={36},
  pages={46595--46623},
  year={2023}
}

@article{fu2023gptscore,
  title={Gptscore: Evaluate as you desire},
  author={Fu, Jinlan and Ng, See-Kiong and Jiang, Zhengbao and Liu, Pengfei},
  journal={arXiv preprint arXiv:2302.04166},
  year={2023}
}

@article{liu2023g,
  title={G-eval: NLG evaluation using gpt-4 with better human alignment},
  author={Liu, Yang and Iter, Dan and Xu, Yichong and Wang, Shuohang and Xu, Ruochen and Zhu, Chenguang},
  journal={arXiv preprint arXiv:2303.16634},
  year={2023}
}

@article{reimers2019sentence,
  title={Sentence-bert: Sentence embeddings using siamese bert-networks},
  author={Reimers, Nils and Gurevych, Iryna},
  journal={arXiv preprint arXiv:1908.10084},
  year={2019}
}

@article{he2020deberta,
  title={Deberta: Decoding-enhanced bert with disentangled attention},
  author={He, Pengcheng and Liu, Xiaodong and Gao, Jianfeng and Chen, Weizhu},
  journal={arXiv preprint arXiv:2006.03654},
  year={2020}
}

@inproceedings{papineni2002bleu,
  title={Bleu: a method for automatic evaluation of machine translation},
  author={Papineni, Kishore and Roukos, Salim and Ward, Todd and Zhu, Wei-Jing},
  booktitle={Proceedings of the 40th annual meeting of the Association for Computational Linguistics},
  pages={311--318},
  year={2002}
}

@inproceedings{zhu2018texygen,
  title={Texygen: A benchmarking platform for text generation models},
  author={Zhu, Yaoming and Lu, Sidi and Zheng, Lei and Guo, Jiaxian and Zhang, Weinan and Wang, Jun and Yu, Yong},
  booktitle={The 41st international ACM SIGIR conference on research \& development in information retrieval},
  pages={1097--1100},
  year={2018}
}

@article{jelinek1977perplexity,
  title={Perplexity—a measure of the difficulty of speech recognition tasks},
  author={Jelinek, Fred and Mercer, Robert L and Bahl, Lalit R and Baker, James K},
  journal={The Journal of the Acoustical Society of America},
  volume={62},
  number={S1},
  pages={S63--S63},
  year={1977},
  publisher={Acoustical Society of America}
}

@article{dubey2023llama3,
  title   = {The LLaMA 3 Herd of Models},
  author  = {Dubey, A. and Jauhri, A. and Pandey, A. and others},
  journal = {arXiv},
  year    = {2023},
  eprint  = {2407.21783},
  archivePrefix = {arXiv},
  primaryClass  = {cs.CL},
  url     = {https://arxiv.org/abs/2407.21783}
}

@misc{openai2024gpt4o,
  title        = {GPT-4o: Multimodal and Multilingual Capabilities},
  author       = {{OpenAI}},
  howpublished = {OpenAI website},
  year         = {2024},
  url          = {https://openai.com/index/hello-gpt-4o},
  note         = {Accessed: 2025-09-12}
}

@misc{openai2024gpt4omini,
  title        = {GPT-4o mini: Advancing Cost-Efficient Intelligence},
  author       = {{OpenAI}},
  howpublished = {OpenAI Platform Documentation},
  year         = {2024},
  url          = {https://platform.openai.com/docs/models/gpt-4o-mini},
  note         = {Accessed: 2025-09-12}
}

@article{conover2023free,
  title={Free dolly: Introducing the world’s first truly open instructiontuned llm},
  author={Conover, Mike and Hayes, Matt and Mathur, Ankit and Xie, Jianwei and Wan, Jun and Shah, Sam and Ghodsi, Ali and Wendell, Patrick and Zaharia, Matei and Xin, Reynold},
  year={2023}
}

@inproceedings{hou2024semstamp,
  title={Semstamp: A semantic watermark with paraphrastic robustness for text generation},
  author={Hou, Abe and Zhang, Jingyu and He, Tianxing and Wang, Yichen and Chuang, Yung-Sung and Wang, Hongwei and Shen, Lingfeng and Van Durme, Benjamin and Khashabi, Daniel and Tsvetkov, Yulia},
  booktitle={Proceedings of the 2024 Conference of the North American Chapter of the Association for Computational Linguistics: Human Language Technologies (Volume 1: Long Papers)},
  pages={4067--4082},
  year={2024}
}

@inproceedings{dabiriaghdam-wang-2025-simmark,
    title = "{S}im{M}ark: A Robust Sentence-Level Similarity-Based Watermarking Algorithm for Large Language Models",
    author = "Dabiriaghdam, Amirhossein  and
      Wang, Lele",
    editor = "Christodoulopoulos, Christos  and
      Chakraborty, Tanmoy  and
      Rose, Carolyn  and
      Peng, Violet",
    booktitle = "Proceedings of the 2025 Conference on Empirical Methods in Natural Language Processing",
    month = nov,
    year = "2025",
    address = "Suzhou, China",
    publisher = "Association for Computational Linguistics",
    url = "https://aclanthology.org/2025.emnlp-main.1567/",
    doi = "10.18653/v1/2025.emnlp-main.1567",
    pages = "30773--30794",
    ISBN = "979-8-89176-332-6",
    abstract = "The widespread adoption of large language models (LLMs) necessitates reliable methods to detect LLM-generated text. We introduce SimMark, a robust sentence-level watermarking algorithm that makes LLMs' outputs traceable without requiring access to model internals, making it compatible with both open and API-based LLMs. By leveraging the similarity of semantic sentence embeddings combined with rejection sampling to embed detectable statistical patterns imperceptible to humans, and employing a soft counting mechanism, SimMark achieves robustness against paraphrasing attacks. Experimental results demonstrate that SimMark sets a new benchmark for robust watermarking of LLM-generated content, surpassing prior sentence-level watermarking techniques in robustness, sampling efficiency, and applicability across diverse domains, all while maintaining the text quality and fluency."
}

@article{diaa2024optimizing,
  title={Optimizing adaptive attacks against content watermarks for language models},
  author={Diaa, Abdulrahman and Aremu, Toluwani and Lukas, Nils},
  year={2024}
}

@article{huo2025pmark,
  title={Pmark: Towards robust and distortion-free semantic-level watermarking with channel constraints},
  author={Huo, Jiahao and Liu, Shuliang and Wang, Bin and Zhang, Junyan and Yan, Yibo and Liu, Aiwei and Hu, Xuming and Zhou, Mingxun},
  journal={ICLR 2026},
  year={2025}
}
\bibliographystyle{icml2026}

\newpage
\appendix
\onecolumn

\section{Use of LLMs}
In this paper, we only use LLMs to assist with text refinement such as trimming text, detecting grammatical errors, and correcting them.

\section{Proof of Theorem}\label{appendix:theory:proof}

\begin{proof}[Proof of Lemma~\ref{thm:equivalence}]
Fix $N\in\mathbb{N}$. By assumption there exists a nondecreasing $h:[0,1]\to\mathbb{R}$ with
\[
Z(y;\W_k)=h\!\big(\hat p(y;\W_k)\big),\qquad
\hat p(y;\W_k)=\frac{1}{N}\sum_{n=0}^{N-1}\mathbf{1}\{y_n\in\mathcal{G}(\W_k)\}.
\]
The range of $\hat p$ is the grid $\mathcal{P}_N \coloneqq \{0,1/N,\ldots,1\}$.
Define
\[
p_\tau \;=\; \min\{\, p\in \mathcal{P}_N : h(p)\ge \tau \,\},
\]
taking $p_\tau=1$ if the set is empty and $p_\tau=0$ if $h(p)\ge\tau$ for all $p\in\mathcal{P}_N$.
Since $h$ is nondecreasing, for any $p,p'\in\mathcal{P}_N$ with $p\ge p_\tau>p'$ we have
$h(p)\ge h(p_\tau)\ge \tau$ while $h(p')<\tau$.
Therefore, for any $y$,
\[
\D(y,\W_k)
=\mathbf{1}\{Z(y;\W_k)\ge \tau\}
=\mathbf{1}\{h(\hat p(y;\W_k))\ge \tau\}
=\mathbf{1}\{\hat p(y;\W_k)\ge p_\tau\}.
\]
\end{proof}.

\vspace{-2em}

\begin{proof}[Proof of Theorem~\ref{thm:evasion}]
By Lemma~\ref{thm:equivalence}, for fixed $N$ there exists $p_\tau$ with
$\D(y,\W_k)=\mathbf{1}\{\hat p(y;\W_k)\ge p_\tau\}$. Define the indicator variables and their conditional expectations:
\[
X_n \coloneqq \mathbf{1}\{\tilde y^{(n)}\in \mathcal{G}(\W_k)\},
\qquad
p_n \coloneqq \mathbb{E}[X_n\mid \mathcal{F}_{n-1}],
\]
where $\mathcal{F}_{n} \coloneqq \sigma(\tilde y^{0:n-1})$ is the natural filtration. The premise of the theorem is that the average conditional probability $\bar p_N \coloneqq \frac{1}{N}\sum_{n=1}^N p_n$ satisfies $\bar p_N \le p_\tau - \delta$.

Define the martingale difference sequence
\[
D_n \coloneqq X_n - p_n,
\]
and the martingale
\[
M_N \coloneqq \sum_{n=1}^{N} D_n.
\]
This is a martingale difference sequence since $\mathbb{E}[D_{n}\mid \mathcal{F}_{n-1}] = \mathbb{E}[X_{n}\mid \mathcal{F}_{n-1}] - p_{n} = p_{n}-p_{n}=0$.
Moreover, since $X_n\in\{0,1\}$ and $p_n\in[0,1]$, the increments are bounded in the interval $D_n \in [-1, 1]$.

We can relate the empirical green rate $\hat p(\tilde y;\W_k)$ to the martingale $M_N$:
\[
\hat p(\tilde y;\W_k) = \frac{1}{N}\sum_{n=1}^N X_n = \frac{1}{N}\sum_{n=1}^N (D_n + p_n) = \frac{M_N}{N} + \bar p_N.
\]
Thus, the detection event occurs iff:
\[
\hat p(\tilde y;\W_k) \ge p_\tau
\ \Longleftrightarrow\
\frac{M_N}{N} + \bar p_N \ge p_\tau
\ \Longleftrightarrow\
M_N \ge N(p_\tau - \bar p_N).
\]
Using the premise that $\bar p_N \le p_\tau - \delta$, we have $p_\tau - \bar p_N \ge \delta$. Therefore,
\[
\Pr\!\left(\hat p(\tilde y;\W_k) \ge p_\tau\right) \le \Pr\!\left(M_N \ge N\delta\right).
\]
By the Azuma–Hoeffding inequality, for increments $D_n$ bounded in an interval of range $1 - (-1) = 2$,
\[
\Pr(M_N \ge N\delta)
\le \exp\!\left(-\frac{2(N\delta)^2}{\sum_{n=1}^N 2^2}\right)
= \exp\!\left(-\frac{2N^2\delta^2}{4N}\right)
= \exp\!\left(-\frac{N\delta^2}{2}\right),
\]
which yields the claim.
\end{proof}

\clearpage

\section{Details of the Text Degeneration Detection Function}\label{appendix:degeneration_detection}
\begin{algorithm}[!h]
\caption{Text Degeneration Detection}
\label{alg:degeneration}
\begin{algorithmic}[1]
\STATE \textbf{Input:} Paraphrased text $\tilde{y} = [\tilde{y}^{(0)}, \ldots, \tilde{y}^{(L-1)}]$;
collapse window $h \in \mathbb{N}$; collapse threshold $\rho \in (0,1]$.

\STATE $m \leftarrow \min(h, L)$
\STATE $W \leftarrow [\tilde{y}^{(L-m)}, \ldots, \tilde{y}^{(L-1)}]$ \hfill {\scriptsize(\textit{last $m$ tokens})}
\STATE $U \leftarrow \{\operatorname{id}(u) \mid u \in W\}$ \hfill {\scriptsize(\textit{distinct token ids})}

\IF{$|W| = 0$}
    \STATE \textbf{return} \texttt{False}
\ENDIF

\STATE $r \leftarrow |U| / |W|$ \hfill {\scriptsize(\textit{distinct 1-gram ratio})}

\IF{$r < \rho$}
    \STATE \textbf{return} \texttt{True} \hfill {\scriptsize(\textit{degeneration detected})}
\ELSE
    \STATE \textbf{return} \texttt{False}
\ENDIF
\end{algorithmic}
\end{algorithm}

For paraphrasing, we set the maximum generation length of the LLM to 1,500 tokens. We observed that rewritten text is typically generated normally when no degeneration occurs. However, when degeneration does occur, the text begins normally but then suddenly repeats the same phrase until the maximum token limit is reached, as shown in Appendix~\ref{appendix:text_degeneration_qualitative}. In the degenerated samples we examined, both the starting point of the repetition and the length of the repeated phrase varied. To ensure a sufficient detection window, we chose a large maximum generation length and a window size of $h=450$ tokens. We set the threshold $\rho=0.25$, meaning that if more than 75\% of the tokens within the detection window are duplicates (which is not normal for natural text), the text is considered largely repetitive and redundant.

\section{Computational Efficiency Evaluation}\label{appendix:efficiency}

\begin{table}[h]
\centering
\caption{Average execution time (in seconds) for different attacks.}
\label{tab:execution-time}
\begin{tabular}{lc}
\toprule
Attack & Time \\
\midrule
Vanilla & $5.52\pm2.82$ \\
DIPPER  & $9.73\pm2.24$ \\
SIRA    & $8.57\pm2.03$ \\
BIRA (Ours)    & $7.95\pm9.01$ \\
\bottomrule
\end{tabular}
\end{table}

To assess computational overhead, we measured the average execution time per attack over 500 samples using the KGW watermark under the setup in Section~\ref{sec:experiments:setup}. 
All experiments were conducted on a single A6000 GPU, except for DIPPER built on T5-XXL \citep{raffel2020exploring}, which required two GPUs. As shown in Table~\ref{tab:execution-time}, Vanilla is the most efficient baseline since it introduces no additional overhead. BIRA is the next most efficient, though it exhibits higher variance. This is caused by its adaptive bias procedure, which is designed to prevent text degeneration. This procedure was triggered in only $2.6\%$ of samples, and those rare cases had a much longer average runtime of $66.81$ seconds because repeated generation continued until the maximum length is reached. By contrast, the vast majority of samples ($97.4\%$) completed in a single iteration with an average of $6.38$ seconds, accounting for BIRA’s overall efficiency despite the variance introduced by a few outliers.

\begin{table*}[!h]
\centering
\caption{Comparison of watermarking robustness under different attack methods on the Dolly CW dataset.}
\label{appendix:tab:main_result_dolly}
\resizebox{0.88\linewidth}{!}{%
\begin{tabular}{@{}lccccccc@{}}
\toprule
\diagbox{Attack}{Watermark} & \textbf{KGW} & \textbf{Unigram} & \textbf{UPV} & \textbf{EWD} & \textbf{DIP} & \textbf{SIR} & \textbf{EXP} \\ \midrule

Vanilla (Llama-3.1-8B) & 81.0\% & 65.0\% & 70.0\% & 78.0\% & 99.0\% & 49.0\% & 68.0\% \\ 
\midrule
DIPPER-1 & 94.0\% & 61.0\% & 73.0\% & 96.0\% & 99.0\% & 62.0\% & 87.0\% \\ 
\midrule
DIPPER-2 & 98.0\% & 74.0\% & 83.0\% & 94.0\% & 100.0\% & 66.0\% & 95.0\% \\ 
\midrule

SIRA (Llama-3.1-8B) & 97.0\% & 91.0\% & 90.0\% & 99.0\% & 99.0\% & 73.0\% & 94.0\% \\ 
\midrule

\textbf{BIRA (Llama-3.1-8B, ours)} & \textbf{100.0\%} & \textbf{99.0\%} & \textbf{99.0\%} & \textbf{100.0\%} & \textbf{99.0\%} & \textbf{98.0\%} & \textbf{97.0\%} \\ 
\bottomrule
\end{tabular}%
}
\end{table*}





\begin{table*}[!h]
\centering
\caption{Comparison of watermarking robustness under different attack methods on the DBPEDIA Class dataset.}
\label{appendix:tab:main_result_dbpedia}
\resizebox{0.88\linewidth}{!}{%
\begin{tabular}{@{}lccccccc@{}}
\toprule
\diagbox{Attack}{Watermark} & \textbf{KGW} & \textbf{Unigram} & \textbf{UPV} & \textbf{EWD} & \textbf{DIP} & \textbf{SIR} & \textbf{EXP} \\ \midrule

Vanilla (Llama-3.1-8B) & 76.0\% & 63.0\% & 61.0\% & 83.0\% & 97.0\% & 46.0\% & 64.0\% \\ 
\midrule
DIPPER-1 & 87.0\% & 52.0\% & 75.0\% & 92.0\% & 100.0\% & 67.0\% & 83.0\% \\ 
\midrule
DIPPER-2 & 93.0\% & 63.0\% & 76.0\% & 95.0\% & 100.0\% & 80.0\% & 96.0\% \\ 
\midrule

SIRA (Llama-3.1-8B) & 95.0\% & 90.0\% & 90.0\% & 99.0\% & 97.0\% & 71.0\% & 88.0\% \\ 
\midrule

\textbf{BIRA (Llama-3.1-8B, ours)} & \textbf{98.0\%} & \textbf{100.0\%} & \textbf{96.0\%} & \textbf{99.0\%} & \textbf{100.0\%} & \textbf{99.0\%} & \textbf{95.0\%} \\
\bottomrule
\end{tabular}%
}
\end{table*}

\begin{figure}[!ht]
  \centering
  \vspace{-.5em}
  \includegraphics[width=1.0\linewidth]{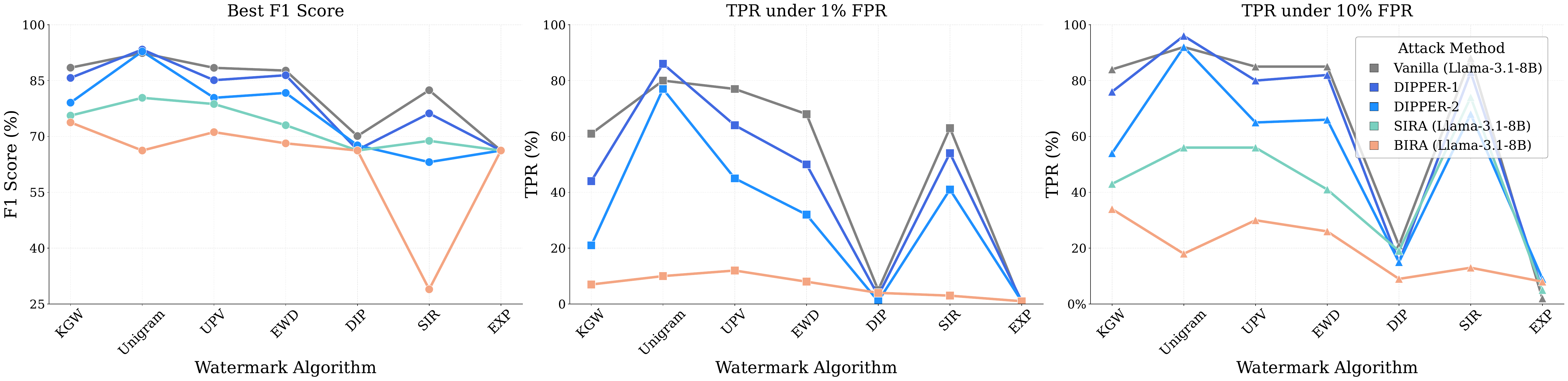}
  \vspace{-2em}
  \caption{Comparison of detection performance with adjusted thresholds across watermarking algorithms on the Dolly CW dataset. We show the best F1 score ($\downarrow$) and TPR ($\downarrow$) at FPR of 1\% and 10\%.}
  \label{appendix:fig:dynamic-threshold-DOLY}
  \vspace{-.5em}
\end{figure}

\begin{figure}[!th]
  \centering
  \vspace{-.5em}
  \includegraphics[width=1.0\linewidth]{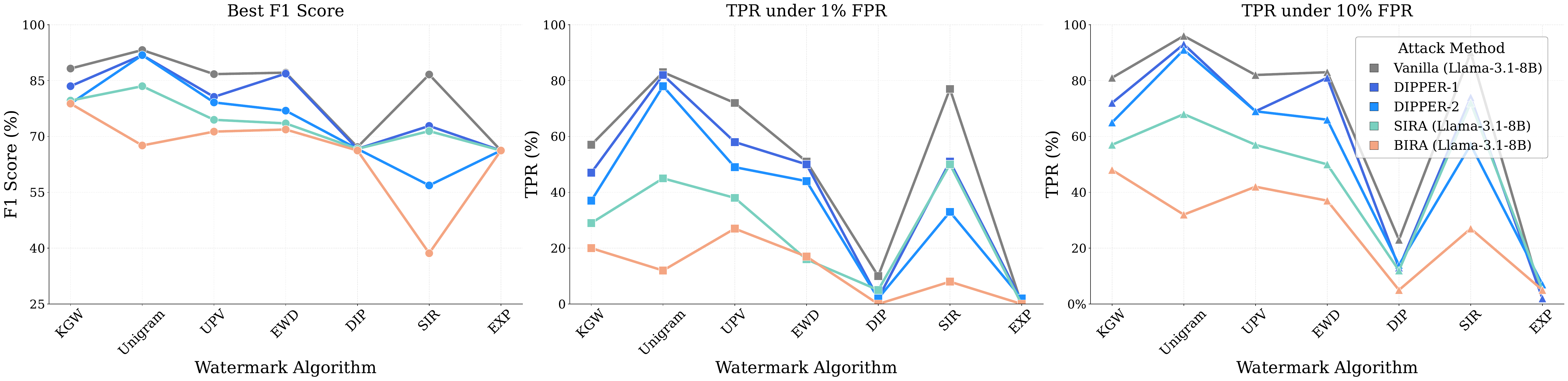}
  \vspace{-2em}
  \caption{Comparison of detection performance with adjusted thresholds across watermarking algorithms on the DBPEDIA dataset. We show the best F1 score ($\downarrow$) and TPR ($\downarrow$) at FPR of 1\% and 10\%.}
  \label{appendix:fig:dynamic-threshold-DBPEDIA}
  \vspace{-.5em}
\end{figure}

\jy{
\section{Extended Experimental Results}
\subsection{Additional Dataset Results}\label{appendix:extended_results}
To strengthen the evaluation of our method, we conduct additional experiments on two datasets, Dolly CW \cite{conover2023free} and DBPEDIA Class \cite{raffel2020exploring}, using 100 examples and the experimental setup described in Section~\ref{sec:experiments:setup} with the Llama 3.1-8B model. We evaluate attack performance using the default hyperparameters, report TPR at FPR levels of 1\% and 10\% to mitigate the effect of a fixed threshold, and include text quality evaluation.

Table~\ref{appendix:tab:main_result_dolly} and Table~\ref{appendix:tab:main_result_dbpedia} show that our method consistently outperforms the baselines under the default settings. Figure~\ref{appendix:fig:text_quality_databricks} and Figure~\ref{appendix:fig:text_quality_dbpedia} further demonstrate that our method achieves superior performance under the adaptive threshold setup.

For text quality evaluation, our method preserves semantic fidelity more effectively than the strong baselines DIPPER and SIRA across the five metrics, consistent with the trends observed in the main results (Figure~\ref{fig:text-quality}).
}

\begin{figure}[!t]
  \centering
  \vspace{-.5em}
  \includegraphics[width=1.0\linewidth]{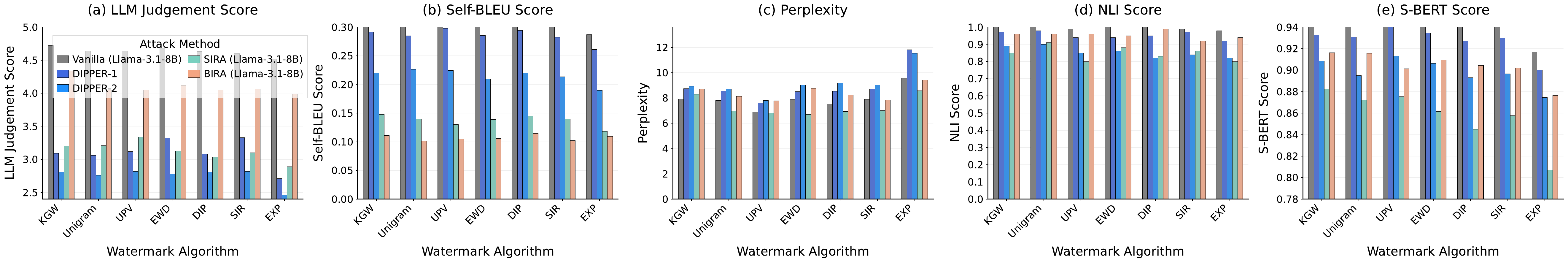}
  \vspace{-2em}
  \caption{Comparison of text quality across different attacks for various watermarking methods, evaluated using LLM judgment score ($\uparrow$), Self-BLEU ($\downarrow$), and Perplexity ($\downarrow$) on the Dolly CW dataset.}
  \label{appendix:fig:text_quality_databricks}
  \vspace{-.5em}
\end{figure}

\begin{figure}[!t]
  \centering
  \vspace{-.5em}
  \includegraphics[width=1.0\linewidth]{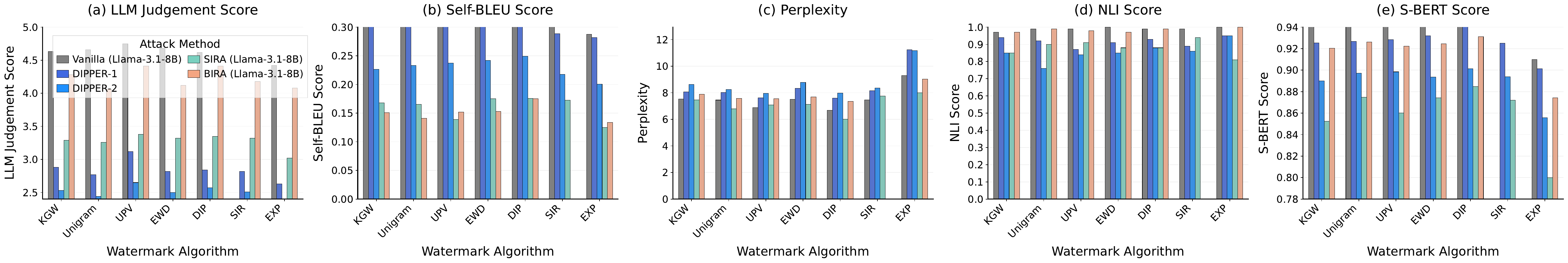}
  \vspace{-2em}
  \caption{Comparison of text quality across different attacks for various watermarking methods, evaluated using LLM judgment score ($\uparrow$), Self-BLEU ($\downarrow$), and Perplexity ($\downarrow$) on the DBPEDIA dataset.}
  \label{appendix:fig:text_quality_dbpedia}
  \vspace{-.5em}
\end{figure}




\jy{
\subsection{Additional Robustness Evaluations}\label{appendix:additional_robustness}

\subsubsection{Applicability to Sentence-Level Watermarking}\label{appendix:sentence}
Our method is primarily designed for token-level watermarking schemes, since they are fundamental and dominant paradigms. To strengthen robustness against paraphrasing, recent works~\cite{hou2024semstamp, dabiriaghdam-wang-2025-simmark} have proposed more robust sentence-level watermarking schemes, albeit with much higher computational cost. However, these schemes are conceptually similar to token-level watermarking, so the core principle behind BIRA naturally extends to these sentence-level methods as well.

Conceptually, sentence-level watermarking schemes~\cite{hou2024semstamp,dabiriaghdam-wang-2025-simmark} replace the ``green/red'' token distinction with a ``valid/invalid'' region in sentence embedding space. Let $s_n$ denote the $n$-th sentence and $e_n = f_{\text{em}}(s_n)$ its embedding under an embedding model $f_{\text{em}}$. During generation, these methods repeatedly regenerate $s_{n+1}$ until the similarity
\[
    \mathrm{sim}(e_n, e_{n+1})
\]
falls within a predefined interval $[a,b]$, and detection is based on counting how many sentence pairs satisfy this validity condition. This is directly analogous to token-level watermarks that count how many tokens fall into a green token set.

\textbf{Extension of Theorem~\ref{thm:evasion} to Sentence-Level Watermarking.}
Our Theorem~\ref{thm:evasion} can be extended by redefining the random variable from an indicator of sampling a green token to an indicator of a valid sentence pair. Specifically, instead of
\[
    Z_n = \mathbf{1}\{\tilde y^{(n)} \in \mathcal{G}(\mathcal{W}_k)\},
\]
we consider
\[
    Z_n = \mathbf{1}\bigl[\mathrm{sim}(f_{\text{em}}(s_n), f_{\text{em}}(s_{n+1})) \in [a,b]\bigr],
\]
and the detector thresholds the empirical average
\[
    \hat{p}_{\text{sent}} = \frac{1}{N}\sum_{n=1}^{N} Z_n.
\]

For the rewriting attack, let $\tilde{s}_n$ denote the rewritten sentences and let $\tilde{f}_{\text{em}}$ denote an auxiliary embedding. We then define
\[
    \tilde{Z}_n = \mathbf{1}\bigl[\mathrm{sim}(\tilde{f}_{\text{em}}(\tilde{s}_n), \tilde{f}_{\text{em}}(\tilde{s}_{n+1})) \in [\tilde{a}, \tilde{b}]\bigr].
\]

During rewriting, we repeatedly generate paraphrases $\tilde{s}_{n+1}$ until they satisfy
\[
    \mathrm{sim}\bigl(\tilde{f}_{\text{em}}(\tilde{s}_n), \tilde{f}_{\text{em}}(\tilde{s}_{n+1})\bigr) \notin [\tilde{a},\tilde{b}],
\]
which effectively reduces the probability that $\tilde{Z}_n = 1$. Following the same logic as in Theorem~\ref{thm:evasion}, ensuring
\[
    \frac{1}{N}\sum_{n=1}^N \mathbb{E}[\tilde{Z}_n] \le p_\tau - \delta
\]
for some $\delta > 0$ yields an analogous exponential upper bound on the detector's success probability. 

Thus, although our current theorem and implementation operate at the token level, the theoretical framework and BIRA's principle extend naturally to sentence-level watermarking. An interesting direction for future work is to adapt our method to sentence-level watermarking in practice, where a key question is how well the adversary’s embedding model $\tilde{f}_{\text{em}}$ aligns with the embedding model $f_{\text{em}}$ used in generating watermarked text, and how to estimate the unknown validity interval $[\tilde{a}, \tilde{b}]$.

\textbf{Experimental results of BIRA on sentence-level watermarking.}
To evaluate the effectiveness of our token-level BIRA attack against sentence-level watermarking, we conduct experiments under the setup in Section~\ref{sec:experiments:setup} using the Llama-3.1-8B model with the SimMark watermarking scheme~\cite{dabiriaghdam-wang-2025-simmark}, which is a state-of-the-art sentence-level watermark. For detection, we use a $z$-score threshold of $5.03$, which yields an FPR of $1\%$, and adopt the hyperparameter setting recommended by \citet{dabiriaghdam-wang-2025-simmark}, using the interval $[0.68, 0.76]$ for cosine similarity on the C4 dataset.

Table~\ref{appendix:tab:simmark} shows that BIRA still achieves strong attack performance. Sentence-level watermarking relies on consistent sentence segmentation between generation and detection, whereas BIRA performs document-level paraphrasing that can merge or split sentences, thereby weakening this assumption. In contrast, DIPPER-1 and DIPPER-2 perform poorly against SimMark because they paraphrase each sentence independently, preserving the sentence segmentation exploited by sentence-level watermarks. These results suggest that current robustness evaluations for sentence-level watermarks are insufficiently rigorous and can create a misleading impression of robustness. When further evaluated on another sentence-level watermarking scheme, PMARK~\citep{huo2025pmark}, our method achieves a TPR of 0.04 at FPR$=1\%$ while maintaining an LLM judge score of 4.35, demonstrating its effectiveness.
}

\begin{table}[t]
\centering
\caption{Comparison of ASR and text quality across different attacks against SimMark sentence-level watermarking.}
\label{appendix:tab:simmark}
\resizebox{\linewidth}{!}{%
\begin{tabular}{lcccccc}
\toprule
Method & ASR ($\uparrow$) & LLM Judgement ($\uparrow$) & Self-BLEU ($\downarrow$) & PPL ($\downarrow$) & NLI ($\uparrow$) & S-BERT ($\uparrow$) \\
\midrule
Vanilla (Llama-3.1-8B)    & 0.544 & 4.752 & 0.228 &  8.191 & 0.958 & 0.904 \\
DIPPER-1                  & 0.054 & 3.454 & 0.332 &  8.876 & 0.930 & 0.908 \\
DIPPER-2                  & 0.082 & 3.036 & 0.238 &  9.505 & 0.856 & 0.878 \\
SIRA (Llama-3.1-8B)       & 0.872 & 3.450 & 0.102 &  7.988 & 0.728 & 0.830 \\
BIRA (Llama-3.1-8B, ours) & 0.826 & 4.110 & 0.060 &  9.665 & 0.862 & 0.857 \\
\bottomrule
\end{tabular}
}
\end{table}

\jy{
\subsection{Evaluation on Longer Texts with a Stronger LLM}\label{appendix:long_text}
Table~\ref{appendix:tab:long_text_results} shows that our method remains effective on 600-token SIR-watermarked texts, outperforming baselines while preserving semantic fidelity. In contrast, SIRA degrades substantially as text length increases: its masking process becomes increasingly disruptive, hindering LLM infilling and leading to poor semantic preservation.
}

\begin{table}[t]
\centering
\caption{Robustness comparison for long-text evaluation on 600-token SIR-watermarked texts generated by Qwen2.5-32B.}
\label{appendix:tab:long_text_results}
\resizebox{\linewidth}{!}{%
\begin{tabular}{lcccccc}
\toprule
Method & TPR@FPR=1\%$\downarrow$ & TPR@FPR=10\%$\downarrow$ & Best F1$\downarrow$ & LLM Judge$\uparrow$ & PPL$\downarrow$ & Self-BLEU$\downarrow$ \\
\midrule
Vanilla (Llama-3.1-8B) & 57.7\% & 81.8\% & 86.8\% & 4.65 & 7.13 & 0.24 \\
DIPPER-1 & 41.2\% & 78.6\% & 86.5\% & 3.70 & 8.50 & 0.25 \\
DIPPER-2 & 25.2\% & 60.2\% & 82.3\% & 3.25 & 8.90 & 0.19 \\
SIRA (Llama-3.1-8B) & 20.8\% & 48.4\% & 73.8\% & 1.33 & 7.19 & 0.02 \\
BIRA (Llama-3.1-8B, Ours) & 4.0\% & 11.0\% & 29.4\% & 4.14 & 8.90 & 0.05 \\
\bottomrule
\end{tabular}
}
\end{table}

\subsection{Evaluation on Localized Detector}
Unlike standard detectors that score the entire text, WinMax~\citep{kirchenbauer2023reliability} detects localized watermarked regions by searching for the contiguous token span with the highest $z$-score. To evaluate BIRA under this detector, we use 500 KGW-watermarked 600-token texts generated by Qwen2.5-32B and rewritten by Llama-3.1-8B. BIRA achieves a TPR of 5.0\% at 1.0\% FPR, showing that its global proxy suppression set effectively suppresses watermark signals across the entire text.

\begin{table}[!t]
\centering
\caption{Watermark performance of low-entropy SIR, which embeds watermark signals only below the 30th/50th-percentile entropy thresholds estimated from human-written texts. Achieving high detectability requires a much larger logit bias $\gamma$, substantially degrading text quality as reflected by higher PPL.}
\label{tab:low_entropy_watermark_only}
\resizebox{0.85\linewidth}{!}{%
\begin{tabular}{lccc}
\toprule
Setting & TPR@FPR=1\% ($\uparrow$) & TPR@FPR=10\% ($\uparrow$) & PPL ($\downarrow$) \\
\midrule
Standard-SIR ($\gamma=1.0$) & 0.998 & 1.000 & 11.95 \\
Low-SIR ($\gamma=1.0$) & 0.098 / 0.596 & 0.368 / 0.884 & 9.87 / 10.64 \\
Low-SIR ($\gamma=2.0$)  & 0.612 / 0.988 & 0.896 / 1.000 & 12.07 / 14.08 \\
Low-SIR ($\gamma=4.0$) & 0.980 / 1.000 & 1.000 / 1.000 & 16.72 / 19.13 \\
\bottomrule
\end{tabular}
}
\end{table}

\begin{table}[!t]
\centering
\caption{BIRA performance against Low-SIR using the 50th-percentile entropy threshold. This variant yields slightly higher TPR, but the absolute TPR remains low, and the text-quality cost is substantial.}
\label{tab:low_entropy_bira}
\resizebox{0.88\columnwidth}{!}{%
\begin{tabular}{lcccc}
\toprule
Setting & TPR@FPR=1\% ($\downarrow$) & TPR@FPR=10\% ($\downarrow$) & PPL ($\downarrow$) & LLM Judge ($\uparrow$) \\
\midrule
Standard-SIR & 0.012 & 0.114 & 10.19 & 4.24 \\
Low-SIR ($\gamma=1.0$) & 0.096 & 0.230 & 10.50 & 4.26 \\
Low-SIR ($\gamma=2.0$) & 0.134 & 0.310 & 10.93 & 4.05 \\
Low-SIR ($\gamma=4.0$) & 0.206 & 0.382 & 11.44 & 3.86 \\
\bottomrule
\end{tabular}
}
\end{table}
\vspace{-.5em}

\subsection{Analysis of Failure Modes}
We examine two potential defenses against high-surprisal token suppression: (1) shifting watermark signals toward lower-entropy tokens and (2) distributing signals more evenly across token positions using a larger logit bias.

\textbf{Shifting signals to lower-entropy tokens.}
A watermark can, in principle, place more signal on lower-entropy tokens. However, these tokens offer fewer plausible alternatives, so reliable detection requires stronger biases that substantially degrade text quality. We implement a low-entropy variant by applying the watermark bias only to low-entropy positions. Following the setup in Section~\ref{sec:experiments:setup}, we compute token entropy on 500 natural-text samples, estimate the 30th- and 50th-percentile entropy thresholds for each sentence, average them across samples, and apply detection only to tokens below the selected threshold.

Table~\ref{tab:low_entropy_watermark_only} reports watermark-only performance under the 30th/50th-percentile thresholds, with standard SIR as a reference. Compared with standard SIR, low-entropy watermarking requires a much stronger bias to achieve reliable detection, since low-entropy positions allow less perturbation and provide less total signal to the detector. As the bias increases, PPL rises sharply, and text quality degrades, often producing grammatical errors, unusual spelling, and repetitive phrases. Thus, shifting the signal to low-entropy positions leads to a worse detectability--quality tradeoff rather than a practical defense.

We then evaluate BIRA against the 50th-percentile low-entropy variant, which offers the best tradeoff. As shown in Table~\ref{tab:low_entropy_bira}, this variant yields slightly higher post-rewriting TPR than standard SIR, but the absolute TPR remains low. Although BIRA less directly targets lower-entropy signals, they are sparse and still weakened by paraphrasing, which changes both the surrounding context and the token subset selected at detection time. Rewriting also removes many artifacts introduced by large biases, further reducing the detector signal. Overall, the robustness gain is modest, while the text-quality cost is substantial.

\begin{table}[t]
\centering
\caption{Watermark performance of SIR with larger watermark bias $\gamma$. As $\gamma$ increases, text quality degrades substantially, as reflected by higher PPL.}
\label{tab:strong_sir_watermark_only}
\resizebox{0.8\linewidth}{!}{%
\begin{tabular}{lccc}
\toprule
Setting & TPR@FPR=1\% ($\uparrow$) & TPR@FPR=10\% ($\uparrow$) & PPL ($\downarrow$) \\
\midrule
Standard-SIR ($\gamma=1.0$) & 0.998 & 1.000 & 11.95 \\
Standard-SIR ($\gamma=2.0$) & 1.000 & 1.000 & 15.70 \\
Standard-SIR ($\gamma=4.0$) & 1.000 & 1.000 & 19.16 \\
\bottomrule
\end{tabular}
}
\end{table}

\begin{table}[t]
\centering
\caption{BIRA performance against SIR with larger watermark bias $\gamma$. Larger $\gamma$ slightly increases TPR, but the absolute TPR remains low and comes with substantial text-quality degradation.}
\label{tab:strong_sir_bira}
\resizebox{0.8\linewidth}{!}{%
\begin{tabular}{lcccc}
\toprule
Setting & TPR@FPR=1\% ($\downarrow$) & TPR@FPR=10\% ($\downarrow$) & PPL ($\downarrow$) & LLM Judge ($\uparrow$) \\
\midrule
Standard-SIR ($\gamma=1.0$) & 0.012 & 0.114 & 10.19 & 4.24 \\
Standard-SIR ($\gamma=2.0$) & 0.032 & 0.138 & 10.42 & 3.96 \\
Standard-SIR ($\gamma=4.0$) & 0.064 & 0.210 & 10.71 & 3.72 \\
\bottomrule
\end{tabular}
}
\end{table}

\textbf{Distributing signals more evenly across positions.}
We next consider increasing the SIR watermark bias $\gamma$, which strengthens watermark evidence across more token positions, including lower-entropy ones. Table~\ref{tab:strong_sir_watermark_only} shows that detectability remains high as $\gamma$ increases, but PPL rises substantially, indicating greater distortion of the model distribution.

Table~\ref{tab:strong_sir_bira} reports BIRA results on these variants. Larger $\gamma$ makes BIRA slightly less effective, but the improvement in robustness is limited and comes at a clear quality cost. The lower LLM-judge scores reflect poorer source text quality, which makes faithful rewriting harder. Thus, spreading the signal more broadly is not a practical defense; it mainly trades text quality for modest robustness gains. 

Overall, both failure modes reduce BIRA's effectiveness only modestly, while requiring stronger watermark biases and noticeably degrading text quality.

\section{Discussion and Future Directions}
Although BIRA is highly effective and preserves semantics better than existing baselines, further improving semantic preservation remains an important direction. One promising approach is to replace the global proxy suppression set and uniform negative logit bias with context-adaptive or position-wise biasing, which may reduce distributional distortion while maintaining strong watermark suppression.

Another interesting direction is to study mixed-domain settings, where natural language and code snippets appear in the same text. In such cases, a token may be high-entropy in one context but low-entropy in another, making context-aware suppression especially important.

\clearpage

\jy{

\begin{figure}[t]
\centering
\begin{tcolorbox}[colback=gray!5,colframe=black,
  title={Human written text in C4 dataset with PPL 25.4}]
Yes, we all have separate inboxes. There is no giant inbox where all messages to moderators go.
So if we either don't understand a PM we get from a mod or admin we're allowed to ask in PM?
Am I allowed to say I have a really hard time reading really long PMs? What I describe as motorway length PMs cause I get halfway then my brain turns it to word salad and goes into information overload?
Maybe you need to clarify the rule so not wrong things can be read into it.
I wish I had been that lucky. I always get anxiety when I get pm from a monitor.
Ok! I didn't know that!
Exactly I just discover this myself!
Don't let Turtleboy fool you! He's bad to the bone!
I know many on the boards here are very sensitive \& struggle with anxiety, but everyone has a completely anonymous screen name. No one knows who "emgreen"
\end{tcolorbox}
\caption{An example of human written text in the C4 dataset with high PPL}
\label{appendix:human_written_high_ppl}
\end{figure}

\begin{figure}
    \centering
    \includegraphics[width=1.0\linewidth]{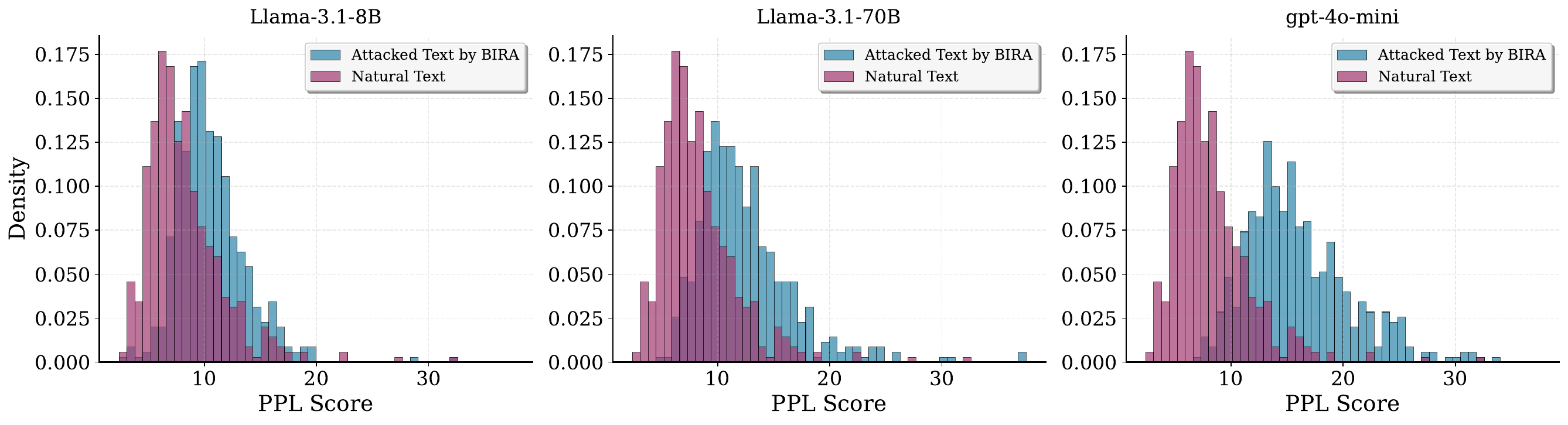}
    \caption{Overlap between the PPL distributions of natural text and BIRA-attacked text for KGW watermarking.}
    \label{appendix:fig:ppl_distribution}
\end{figure}

\section{Analysis of Perplexity Increase under BIRA}\label{appendix:ppl_analysis}
In this section, we analyze two concerns raised by the increased PPL under the BIRA attack. While our method slightly increases the PPL of attacked text for Llama-3.1-8B and Llama-3.1-70B, the PPL for GPT-4o-mini increases more noticeably. This raises two questions: (1) does this increase lead to a perceptible degradation of readability in real-world scenarios, and (2) can PPL be used as a post-hoc signal to flag attacked text?

\textbf{Human evaluation of attacked text with high PPL.}
Qualitative analysis in Appendix~\ref{appendix:stilted_GPT-4o-mini} shows that attacked texts with high PPL remain grammatical and reasonably readable despite their elevated perplexity scores. Our evaluation suggests that this PPL increase may stem from stylistic factors such as overuse of hyphens or particular synonym choices. Crucially, we also observe that human-written text from the C4 dataset can exhibit similarly high PPL, particularly in informal or conversational sentences (Figure~\ref{appendix:human_written_high_ppl}). This supports the view that PPL is not a perfect proxy for human-perceived quality in real-world scenarios.

\textbf{Can PPL be used as a post-hoc signal to flag attacked text?}
Perplexity alone is not a sufficiently reliable detection signal, due to its significant overlap between natural and attacked text. As illustrated in Figure~\ref{appendix:fig:ppl_distribution}, the PPL distributions of attacked text by BIRA for KGW watermarking and natural text in C4 dataset overlap substantially for Llama-3.1-8B and Llama-3.1-70B, making PPL ineffective for discriminating between them. For GPT-4o-mini, the BIRA distribution is more shifted toward higher PPL, but there is still considerable overlap with natural text. We suspect that part of this shift is caused by the mismatch between GPT-4o-mini’s tokenizer and the auxiliary tokenizer (Llama-3.1-3B) used to evaluate token self-information, and GPT-4o-mini's writing property that generate lots of hyppen.

Additionally, unlike statistical watermarking, which provides formal guarantees via a calibrated test statistic, PPL-based detection has no such theoretical grounding. Thus, although designing BIRA variants that lower PPL while preserving attack strength is interesting future work, perplexity remains a heuristic and lacks the robustness needed for a reliable post-hoc detection mechanism.
}

\clearpage



\begin{table*}[!t]
\centering
\caption{Comparison of watermarking robustness under different attack methods with ADA fine-tuned on KGW watermark.}
\label{appendix:tab:ada_main_results}
\resizebox{0.9\linewidth}{!}{%
\begin{tabular}{@{}lccccccc@{}}
\toprule
\diagbox{Attack}{Watermark} & \textbf{KGW} & \textbf{Unigram} & \textbf{UPV} & \textbf{EWD} & \textbf{DIP} & \textbf{SIR} & \textbf{EXP} \\ 
\midrule

Vanilla (Llama-3.1-8B)           & 88.8\% & 73.4\% & 73.4\% & 92.6\% & 99.8\% & 54.0\% & 80.6\% \\ 
\midrule
Dipper-1                         & 93.8\% & 61.2\% & 80.6\% & 92.8\% & 99.4\% & 55.6\% & 90.8\% \\ 
\midrule
Dipper-2                         & 97.2\% & 71.8\% & 85.4\% & 96.6\% & 99.2\% & 70.4\% & 97.2\% \\ 
\midrule
ADA (Llama-3.1-8B)               & 98.0\% & 90.2\% & 88.4\% & 99.2\% & 100.0\% & 71.0\% & 96.6\% \\ 
\midrule
SIRA (Llama-3.1-8B)              & 98.8\% & 95.0\% & 87.6\% & 99.8\% & 99.6\% & 72.8\% & 95.2\% \\ 
\midrule
\textbf{BIRA (Llama-3.1-8B, ours)} & \textbf{99.8\%} & \textbf{99.4\%} & \textbf{99.8\%} & \textbf{100.0\%} & \textbf{100.0\%} & \textbf{99.6\%} & \textbf{99.8\%} \\ 
\bottomrule
\end{tabular}%
}
\end{table*}

\begin{figure}[!t]
\centering
\includegraphics[width=1.0\linewidth]{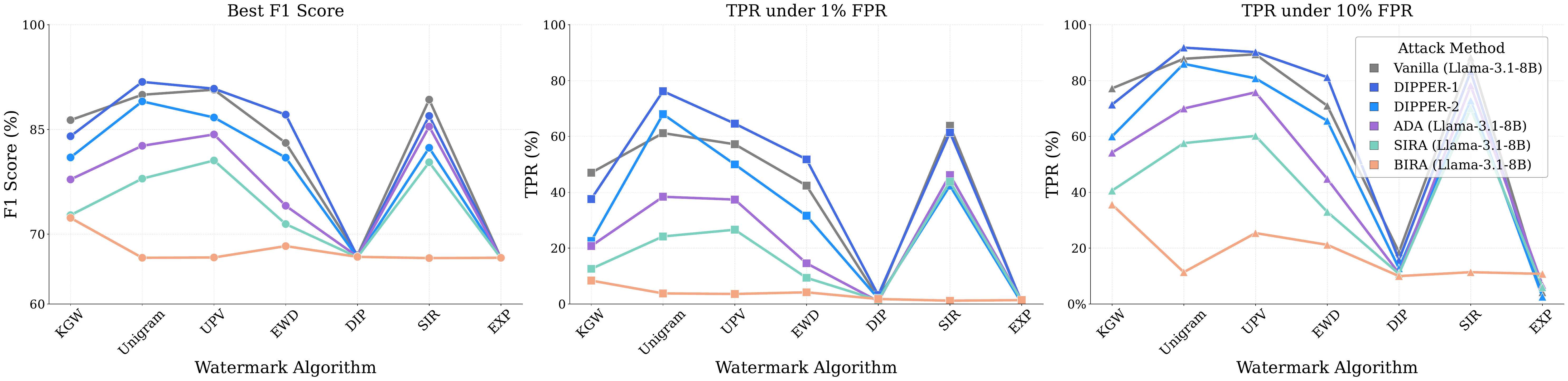}
\caption{Comparison of detection performance across watermarking algorithms with ADA fine-tuned on the KGW watermark. We show the best F1 score ($\downarrow$) and TPR ($\downarrow$) at FPR of 1\% and 10\%.}
\label{appendix:fig:ada_dynamic}
\end{figure}

\begin{figure}[!t]
    \centering
    \includegraphics[width=1.0\linewidth]{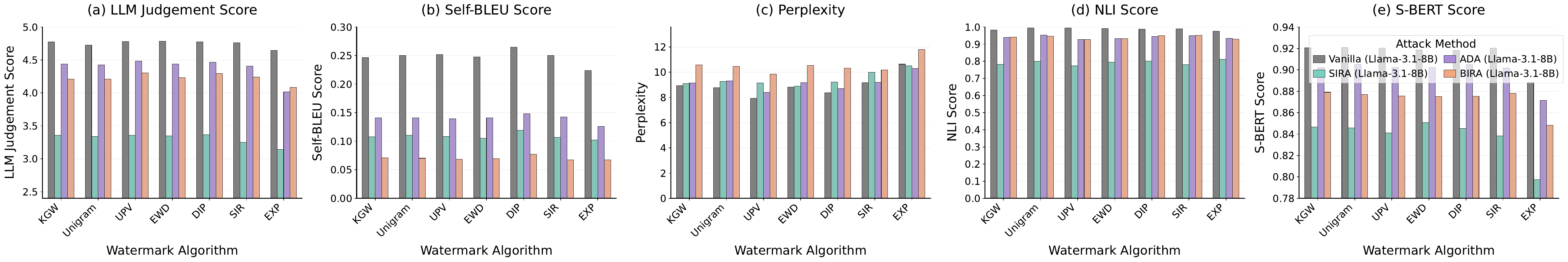}
    \caption{Comparison of text quality across different attacks with ADA fine-tuned on the KGW watermark, evaluated using LLM judgment score ($\uparrow$), Self-BLEU ($\downarrow$), and Perplexity ($\downarrow$).}
    \label{appendix:fig:ada_text_quality}
\end{figure}

\jy{
\section{Comparison with Adaptive Attacks}\label{appendix:adaptive}
In this work, our main evaluation focuses on query-free attacks without knowledge of the watermarking scheme. To further assess the effectiveness of our method, we also compare against the adaptive attack (ADA) of \citet{diaa2024optimizing}, which operates in a less restrictive threat model: ADA assumes the watermarking scheme is known, fine-tunes a paraphrasing model specifically to that watermark, and then relies on transferability to unseen watermarking schemes. For this comparison, we follow the setup in Section~\ref{sec:experiments:setup} using Llama-3.1-8B fine-tuned on KGW watermarking for ADA.

Table~\ref{appendix:tab:ada_main_results} and Figure~\ref{appendix:fig:ada_dynamic} show that SIRA and BIRA outperform ADA in attack success, even though they operate under a more restrictive threat model. For text quality, however, Figure~\ref{appendix:fig:ada_text_quality} shows that ADA attains higher semantic fidelity in terms of LLM judgment score. These results indicate that BIRA is a substantially stronger attack, with a slight trade-off in semantic fidelity.
}

\clearpage
\section{Tokenizer Mismatch Analysis for GPT-4o-mini}\label{appendix:mismatch}

\begin{table}[h]
\centering
\caption{Ablation study of auxiliary models for estimating high-surprisal tokens under tokenizer mismatch with GPT-4o-mini.}
\label{tab:tokenizer_mismatch}
\resizebox{0.3\linewidth}{!}{%
\begin{tabular}{lc}
\toprule
Metric & Average \\
\midrule
TPR@FPR=1\%$ (\downarrow$) & $0.019 \pm 0.005$ \\
LLM Judge ($\uparrow$) & $4.72 \pm 0.03$ \\
Iterations ($\downarrow$) & $1.02 \pm 0.01$ \\
\bottomrule
\end{tabular}
}
\end{table}
Since GPT-4o-mini does not expose logits, we use an auxiliary model to identify high-surprisal tokens and then re-tokenize them with the GPT tokenizer. To evaluate the effect of this tokenizer mismatch, we follow the setup in Section~\ref{sec:experiments:setup} and vary only the auxiliary model across Llama-3.2-3B, Llama-3.1-8B, Qwen3-8B, and Phi-4-mini. Table~\ref{tab:tokenizer_mismatch} shows that performance is stable across auxiliary models. For Llama-3.2-3B and Llama-3.1-8B, which share the same tokenizer, the Jaccard similarity between their selected token sets is 0.88, indicating substantial overlap.

\section{Prompt for Semantic Judgment (GPT)}\label{appendix:llm_judge}
\begin{tcolorbox}[colback=gray!5,colframe=black,title={LLM-as-a-Judge Prompt}]
You are an impartial evaluator.  

You will receive:  
1) \textbf{ORIGINAL}: the source text  
2) \textbf{PARAPHRASE}: a rewritten version of the original  

\textbf{Your task:} Judge how well the PARAPHRASE preserves the ORIGINAL’s semantic meaning.  
Ignore style, tone, formality, phrasing, length, and order of information.  

\textbf{What to check (do not output your analysis):}  
- Core propositions and claims are preserved.  
- Entities, numbers, dates, units, polarity/negation, modality, and causal/temporal relations match.  
- No contradictions; no key facts dropped or altered.  
- Added details that do not change meaning should not be penalized.  

\textbf{Rating scale (1–5):}  
[5] Complete preservation — bidirectional entailment; no contradictions; all key facts preserved.  
[4] High preservation — meaning preserved with at most minor nuances/omissions; no factual changes.  
[3] Moderate — main idea preserved, but some important details/relations are missing or changed.  
[2] Low — only basic topic overlaps; several important changes/omissions or contradictions.  
[1] Minimal/none — meaning is fundamentally different or unrelated.  

\textbf{Special cases:}  
- If specificity differs, score based on whether the essential meaning is preserved.  
- Any mismatch in numbers, named entities, dates, or negation is a serious error.  

\textbf{Output Format:}  
- Provide \emph{only} a number in square brackets. No explanations, additional text, or commentary.  
- Correct format: [3]  
- Incorrect format: ``The score is [3]'' $\mid$ ``I rate this [3]'' $\mid$ ``[3] - good preservation''  
\end{tcolorbox}

\textbf{User Prompt:}  
\begin{tcolorbox}[colback=gray!5,colframe=black]
Here's the ORIGINAL: [Watermarked text] \\[6pt]
Here's the PARAPHRASE: [Attack text]
\end{tcolorbox}

\clearpage

\section{Rewriting Prompt}\label{appendix:paraphrasing_prompt}
\begin{tcolorbox}[colback=gray!5,colframe=black,title={Paraphrasing Prompt}]
You are an expert paraphrasing system designed to rewrite text while maintaining complete semantic fidelity. Follow these precise guidelines:

\textbf{CORE REQUIREMENTS:}  
- Preserve 100\% of the original meaning and information content  
- Maintain identical factual accuracy, tone, and intent  
- Keep the same level of formality/informality as the source  
- Retain all specific details, numbers, names, and technical terms exactly as provided  

\textbf{TRANSFORMATION RULES:}  
- Use synonyms and alternative phrasings where semantically equivalent  
- Vary sentence structure and length for natural flow  
- Reorganize clauses and phrases without altering logical relationships  
- Maintain coherence and readability in the target language  
- Ensure each transformed sentence conveys the same scope and specificity as the original  

\textbf{STRICT PROHIBITIONS:}  
- Do not add interpretations, assumptions, elaborations, or external knowledge  
- Do not introduce information not explicitly present in the original text  
- Do not amplify, exaggerate, or minimize any claims or statements  
- Do not omit any information present in the original  
- Do not change the perspective, viewpoint, or stance  
- Do not include explanations, commentary, or meta-text  
- Do not expand on implicit meanings or draw inferences  
- Do not add contextual information or background details  

\textbf{CONTENT BOUNDARIES:}  
- Work only with information explicitly stated in the source text  
- If the original is vague or general, keep the paraphrase equally vague or general  
- If the original is specific, maintain that exact level of specificity  
- Do not fill in gaps or provide additional details, even if they seem logical  

\textbf{OUTPUT FORMAT:}  
- Provide only the paraphrased text  
- Match the original format (paragraphs, lists, etc.)  
- No prefacing remarks, explanations, or additional content  

\textbf{QUALITY CHECK:}  
Before outputting, verify that:  
1) Someone reading only your paraphrase would understand exactly the same information as someone reading the original text  
2) No new information has been introduced  
3) No original information has been lost or altered  
4) The scope and specificity remain identical  
\end{tcolorbox}

\section{\texorpdfstring{$z$-score}{z-score} comparison of attacks on different watermarking schemes}
\begin{table}[!h]
\centering
\caption{$z$-score comparison of attacks on different watermarking methods.}
\label{tab:z-score-all}
\resizebox{\linewidth}{!}{%
\begin{tabular}{lcccccc}
\toprule
Watermark & KGW-1 & Unigram & UPV & EWD & DIP & SIR \\
\midrule
Vanilla  & $2.40 \pm 1.32$ & $3.04 \pm 1.52$ & $3.24 \pm 1.23$ & $2.08 \pm 1.32$ & $0.16 \pm 0.52$ & $0.19 \pm 0.10$ \\
DIPPER-1 & $2.09 \pm 1.23$ & $3.63 \pm 1.65$ & $3.33 \pm 1.28$ & $2.37 \pm 1.18$ & $0.17 \pm 0.53$ & $0.18 \pm 0.11$ \\
DIPPER-2 & $1.66 \pm 1.19$ & $3.10 \pm 1.63$ & $2.97 \pm 1.65$ & $1.81 \pm 1.23$ & $0.06 \pm 0.63$ & $0.14 \pm 0.11$ \\
SIRA     & $1.05 \pm 1.23$ & $1.63 \pm 1.38$ & $2.19 \pm 1.21$ & $0.83 \pm 1.13$ & $0.02 \pm 0.52$ & $0.14 \pm 0.12$ \\
BIRA     & $0.93 \pm 1.11$ & $-0.34 \pm 1.61$ & $0.94 \pm 1.22$ & $0.43 \pm 1.07$ & $-0.03 \pm 0.51$ & $-0.06 \pm 0.09$ \\
\bottomrule
\end{tabular}
}
\end{table}

\clearpage

\section{Qualitative examples}
\subsection{Examples of watermarked texts and attacked texts}\label{appendix:qualitative_examples}

\qboxes
{!htb}
{Watermarked Text By KGW ($z$-score: 10.33)}
{\textcolor{watermark}{Was} \textcolor{regular}{it} \textcolor{watermark}{a} \textcolor{regular}{surprise} \textcolor{regular}{to} \textcolor{regular}{you} \textcolor{regular}{that} \textcolor{regular}{you} \textcolor{regular}{were} \textcolor{regular}{given} \textcolor{regular}{the} \textcolor{watermark}{arts} \textcolor{regular}{and} \textcolor{watermark}{culture} \textcolor{regular}{position}\textcolor{regular}{?} \textcolor{regular}{ }\textcolor{regular}{No}\textcolor{watermark}{,} \textcolor{watermark}{there} \textcolor{regular}{is} \textcolor{regular}{no} \textcolor{watermark}{surprise} \textcolor{regular}{when} \textcolor{watermark}{you} \textcolor{watermark}{are} \textcolor{watermark}{a} \textcolor{regular}{cad}\textcolor{regular}{re}\textcolor{regular}{.} \textcolor{regular}{} \textcolor{regular}{And} \textcolor{regular}{it} \textcolor{regular}{was} \textcolor{regular}{a} \textcolor{regular}{great} \textcolor{regular}{honor}\textcolor{regular}{.} \textcolor{regular}{} \textcolor{regular}{I} \textcolor{regular}{enjoy} \textcolor{regular}{serving} \textcolor{regular}{my} \textcolor{regular}{country} \textcolor{regular}{in} \textcolor{regular}{that} \textcolor{regular}{capacity}\textcolor{watermark}{.} \textcolor{regular}{} \textcolor{regular}{It}\textcolor{regular}{'s} \textcolor{regular}{what} \textcolor{regular}{I} \textcolor{regular}{really} \textcolor{watermark}{love}\textcolor{regular}{.} \textcolor{watermark}{ }\textcolor{regular}{If} \textcolor{regular}{anything}\textcolor{regular}{,} \textcolor{regular}{it} \textcolor{regular}{was} \textcolor{regular}{good} \textcolor{watermark}{to} \textcolor{regular}{hear} \textcolor{regular}{you} \textcolor{regular}{say} \textcolor{watermark}{no} \textcolor{regular}{once} \textcolor{regular}{more}\textcolor{regular}{;} \textcolor{regular}{given} \textcolor{regular}{that} \textcolor{watermark}{it} \textcolor{regular}{was} \textcolor{regular}{the} \textcolor{regular}{first} \textcolor{regular}{official} \textcolor{regular}{rejection} \textcolor{regular}{in} \textcolor{regular}{your} \textcolor{regular}{official} \textcolor{regular}{career}\textcolor{regular}{.} \textcolor{regular}{Did} \textcolor{regular}{anything} \textcolor{regular}{take} \textcolor{regular}{priority} \textcolor{regular}{for} \textcolor{watermark}{you} \textcolor{regular}{that} \textcolor{regular}{the} \textcolor{regular}{position} \textcolor{watermark}{didn}\textcolor{watermark}{'t}\textcolor{regular}{?} \textcolor{regular}{ }\textcolor{regular}{Yes}\textcolor{regular}{.} \textcolor{regular}{} \textcolor{regular}{One} \textcolor{regular}{of} \textcolor{regular}{those} \textcolor{regular}{in}\textcolor{watermark}{-}\textcolor{regular}{laws} \textcolor{regular}{I} \textcolor{regular}{wanted} \textcolor{regular}{to} \textcolor{regular}{stay} \textcolor{regular}{with} \textcolor{regular}{a} \textcolor{regular}{little} \textcolor{regular}{too} \textcolor{regular}{much} \textcolor{regular}{priority}\textcolor{regular}{.} \textcolor{regular}{} \textcolor{regular}{But} \textcolor{regular}{I}\textcolor{regular}{'m} \textcolor{regular}{glad} \textcolor{regular}{to} \textcolor{regular}{know} \textcolor{regular}{you} \textcolor{regular}{asked}\textcolor{watermark}{.} \textcolor{regular}{} \textcolor{regular}{I} \textcolor{regular}{certainly} \textcolor{regular}{wasn}\textcolor{regular}{'t} \textcolor{regular}{giving} \textcolor{regular}{up} \textcolor{regular}{on} \textcolor{regular}{helping} \textcolor{regular}{my} \textcolor{regular}{country} \textcolor{regular}{in} \textcolor{regular}{any} \textcolor{regular}{capacity}\textcolor{regular}{,} \textcolor{regular}{that}\textcolor{regular}{'s} \textcolor{regular}{not} \textcolor{regular}{me} \textcolor{regular}{at} \textcolor{watermark}{all}\textcolor{watermark}{.} \textcolor{regular}{} \textcolor{regular}{I} \textcolor{regular}{just} \textcolor{regular}{didn}\textcolor{watermark}{'t} \textcolor{regular}{want} \textcolor{watermark}{to} \textcolor{regular}{go} \textcolor{regular}{at} \textcolor{regular}{it} \textcolor{regular}{one} \textcolor{watermark}{more} \textcolor{regular}{time}\textcolor{regular}{.} \textcolor{regular}{} \textcolor{regular}{Once} \textcolor{regular}{you} \textcolor{regular}{go} \textcolor{regular}{you} \textcolor{regular}{will} \textcolor{regular}{know} \textcolor{regular}{how} \textcolor{regular}{to} \textcolor{watermark}{say} \textcolor{watermark}{no}\textcolor{regular}{.} \textcolor{regular}{} \textcolor{regular}{I} \textcolor{regular}{had} \textcolor{watermark}{already} \textcolor{regular}{been} \textcolor{regular}{told} \textcolor{regular}{in} \textcolor{regular}{2008} \textcolor{regular}{to} \textcolor{regular}{hold} \textcolor{regular}{off} \textcolor{regular}{on} \textcolor{regular}{being} \textcolor{watermark}{an} \textcolor{regular}{officer} \textcolor{regular}{until} \textcolor{watermark}{I} \textcolor{watermark}{got} \textcolor{regular}{my} \textcolor{regular}{undergraduate} \textcolor{regular}{degree}\textcolor{watermark}{,} \textcolor{regular}{that} \textcolor{regular}{was} \textcolor{regular}{one} \textcolor{regular}{such} \textcolor{regular}{instance}\textcolor{watermark}{.} \textcolor{regular}{} \textcolor{regular}{There}\textcolor{regular}{'s} \textcolor{regular}{your} \textcolor{regular}{answer}\textcolor{regular}{.} \textcolor{watermark}{ }\textcolor{regular}{Ha}\textcolor{regular}{,} \textcolor{regular}{that} \textcolor{regular}{is} \textcolor{regular}{great}\textcolor{regular}{.} \textcolor{regular}{And} \textcolor{regular}{good} \textcolor{regular}{luck}\textcolor{regular}{,} \textcolor{regular}{I}\textcolor{regular}{'m} \textcolor{watermark}{sure} \textcolor{regular}{you}\textcolor{watermark}{'ll} \textcolor{regular}{do} \textcolor{regular}{something} \textcolor{regular}{great}\textcolor{regular}{.}}
{Attacked Text by BIRA ($z$-score: 2.60)}
{\textcolor{watermark}{Was} \textcolor{regular}{receiving} \textcolor{regular}{the} \textcolor{watermark}{arts} \textcolor{regular}{and} \textcolor{watermark}{culture} \textcolor{regular}{position} \textcolor{watermark}{unexpected} \textcolor{regular}{for} \textcolor{watermark}{you}\textcolor{regular}{?} \textcolor{regular}{ }\textcolor{regular}{Not} \textcolor{watermark}{at} \textcolor{watermark}{all}\textcolor{watermark}{,} \textcolor{regular}{since} \textcolor{watermark}{I} \textcolor{regular}{am} \textcolor{regular}{part} \textcolor{regular}{of} \textcolor{regular}{the} \textcolor{watermark}{cad}\textcolor{watermark}{res}\textcolor{regular}{.} \textcolor{watermark}{It} \textcolor{watermark}{was} \textcolor{regular}{actually} \textcolor{watermark}{a} \textcolor{watermark}{huge} \textcolor{regular}{honor}\textcolor{regular}{.} \textcolor{regular}{I} \textcolor{watermark}{truly} \textcolor{watermark}{love} \textcolor{watermark}{serving} \textcolor{regular}{my} \textcolor{regular}{country} \textcolor{regular}{in} \textcolor{watermark}{this} \textcolor{regular}{role}\textcolor{watermark}{.} \textcolor{watermark}{ } \textcolor{regular}{ }\textcolor{regular}{Given} \textcolor{watermark}{that} \textcolor{regular}{this} \textcolor{watermark}{was} \textcolor{watermark}{your} \textcolor{watermark}{first} \textcolor{watermark}{formal} \textcolor{watermark}{rejection}\textcolor{regular}{,} \textcolor{regular}{I} \textcolor{watermark}{was} \textcolor{watermark}{glad} \textcolor{regular}{to} \textcolor{regular}{see} \textcolor{watermark}{you} \textcolor{regular}{decline}\textcolor{regular}{.} \textcolor{regular}{Was} \textcolor{watermark}{there} \textcolor{regular}{something} \textcolor{regular}{specific} \textcolor{regular}{that} \textcolor{watermark}{took} \textcolor{regular}{precedence} \textcolor{regular}{over} \textcolor{regular}{this} \textcolor{regular}{opportunity}\textcolor{watermark}{?} \textcolor{regular}{ }\textcolor{regular}{One} \textcolor{regular}{thing} \textcolor{watermark}{did} \textcolor{watermark}{-} \textcolor{regular}{my} \textcolor{regular}{desire} \textcolor{watermark}{to} \textcolor{watermark}{spend} \textcolor{regular}{time} \textcolor{watermark}{with} \textcolor{regular}{my} \textcolor{watermark}{in}\textcolor{watermark}{-}\textcolor{regular}{laws} \textcolor{regular}{took} \textcolor{regular}{higher} \textcolor{regular}{priority}\textcolor{regular}{.} \textcolor{regular}{I} \textcolor{watermark}{appreciate} \textcolor{watermark}{you} \textcolor{regular}{asking}\textcolor{regular}{.} \textcolor{regular}{Please} \textcolor{regular}{know} \textcolor{watermark}{I} \textcolor{regular}{am} \textcolor{regular}{still} \textcolor{regular}{committed} \textcolor{regular}{to} \textcolor{regular}{contributing} \textcolor{watermark}{to} \textcolor{watermark}{my} \textcolor{regular}{country}\textcolor{watermark}{'s} \textcolor{regular}{efforts}\textcolor{regular}{.} \textcolor{regular}{I} \textcolor{regular}{simply} \textcolor{watermark}{chose} \textcolor{regular}{not} \textcolor{watermark}{to} \textcolor{regular}{pursue} \textcolor{regular}{this} \textcolor{regular}{particular} \textcolor{regular}{path}\textcolor{watermark}{.} \textcolor{regular}{Saying} \textcolor{watermark}{"}\textcolor{regular}{no}\textcolor{regular}{"} \textcolor{watermark}{becomes} \textcolor{regular}{easier} \textcolor{regular}{with} \textcolor{regular}{time}\textcolor{regular}{.} \textcolor{watermark}{For} \textcolor{watermark}{example}\textcolor{regular}{,} \textcolor{regular}{I} \textcolor{watermark}{was} \textcolor{regular}{advised} \textcolor{regular}{back} \textcolor{regular}{in} \textcolor{regular}{2008} \textcolor{regular}{to} \textcolor{regular}{delay} \textcolor{watermark}{becoming} \textcolor{regular}{an} \textcolor{regular}{officer} \textcolor{regular}{until} \textcolor{watermark}{I} \textcolor{regular}{finished} \textcolor{regular}{my} \textcolor{watermark}{undergrad} \textcolor{regular}{studies}\textcolor{watermark}{.} \textcolor{watermark}{ } \textcolor{regular}{ }\textcolor{watermark}{That}\textcolor{regular}{'s} \textcolor{regular}{excellent}\textcolor{watermark}{.} \textcolor{regular}{I} \textcolor{watermark}{wish} \textcolor{regular}{you} \textcolor{watermark}{all} \textcolor{watermark}{the} \textcolor{watermark}{best} \textcolor{watermark}{and} \textcolor{watermark}{am} \textcolor{regular}{confident} \textcolor{watermark}{you}\textcolor{watermark}{'ll} \textcolor{regular}{achieve} \textcolor{regular}{great} \textcolor{regular}{things}\textcolor{regular}{.}}

\qboxes
{!htb}
{Watermarked Text by SIR ($z$-score: 0.57)}
{\textcolor{watermark}{M}\textcolor{watermark}{ID}\textcolor{watermark}{D}\textcolor{watermark}{LET}\textcolor{watermark}{OWN} \textcolor{watermark}{,} \textcolor{watermark}{} \textcolor{watermark}{} \textcolor{watermark}{NJ} \textcolor{watermark}{} \textcolor{watermark}{} \textcolor{watermark}{-} \textcolor{watermark}{} \textcolor{watermark}{} \textcolor{watermark}{The} \textcolor{watermark}{} \textcolor{watermark}{} \textcolor{watermark}{M}\textcolor{watermark}{idd}\textcolor{watermark}{let}\textcolor{watermark}{own} \textcolor{watermark}{} \textcolor{watermark}{} \textcolor{watermark}{Township} \textcolor{watermark}{} \textcolor{watermark}{} \textcolor{watermark}{Public} \textcolor{watermark}{} \textcolor{watermark}{} \textcolor{watermark}{Library} \textcolor{watermark}{} \textcolor{watermark}{} \textcolor{watermark}{and} \textcolor{watermark}{} \textcolor{watermark}{the} \textcolor{watermark}{} \textcolor{watermark}{} \textcolor{watermark}{Township} \textcolor{watermark}{} \textcolor{watermark}{} \textcolor{watermark}{of} \textcolor{watermark}{} \textcolor{watermark}{} \textcolor{watermark}{M}\textcolor{watermark}{idd}\textcolor{watermark}{let}\textcolor{regular}{own} \textcolor{watermark}{} \textcolor{watermark}{} \textcolor{regular}{are} \textcolor{watermark}{} \textcolor{watermark}{} \textcolor{watermark}{pleased} \textcolor{watermark}{} \textcolor{watermark}{} \textcolor{watermark}{to} \textcolor{watermark}{} \textcolor{watermark}{} \textcolor{regular}{announce} \textcolor{watermark}{} \textcolor{watermark}{} \textcolor{watermark}{the} \textcolor{watermark}{} \textcolor{watermark}{} \textcolor{regular}{2014} \textcolor{watermark}{} \textcolor{regular}{Fall} \textcolor{watermark}{} \textcolor{watermark}{} \textcolor{regular}{Book} \textcolor{watermark}{} \textcolor{watermark}{} \textcolor{regular}{\&} \textcolor{watermark}{} \textcolor{watermark}{} \textcolor{regular}{Family} \textcolor{watermark}{} \textcolor{watermark}{} \textcolor{regular}{Fun} \textcolor{watermark}{} \textcolor{watermark}{} \textcolor{watermark}{Series} \textcolor{watermark}{} \textcolor{watermark}{} \textcolor{regular}{Fall} \textcolor{watermark}{} \textcolor{watermark}{} \textcolor{regular}{Book} \textcolor{watermark}{} \textcolor{watermark}{} \textcolor{watermark}{Club} \textcolor{watermark}{} \textcolor{watermark}{} \textcolor{regular}{Pick}\textcolor{regular}{:} \textcolor{watermark}{} \textcolor{watermark}{} \textcolor{watermark}{"} \textcolor{watermark}{Where} \textcolor{watermark}{} \textcolor{watermark}{} \textcolor{regular}{The} \textcolor{watermark}{} \textcolor{watermark}{} \textcolor{watermark}{Craw}\textcolor{watermark}{d}\textcolor{regular}{ads} \textcolor{watermark}{} \textcolor{watermark}{} \textcolor{watermark}{Sing} \textcolor{watermark}{"} \textcolor{watermark}{} \textcolor{watermark}{} \textcolor{regular}{by} \textcolor{watermark}{} \textcolor{watermark}{} \textcolor{watermark}{Del}\textcolor{regular}{ia} \textcolor{watermark}{} \textcolor{watermark}{} \textcolor{regular}{Owens} \textcolor{regular}{ } \textcolor{watermark}{} \textcolor{regular}{ } \textcolor{watermark}{S}\textcolor{regular}{ometime} \textcolor{watermark}{} \textcolor{watermark}{} \textcolor{watermark}{in} \textcolor{watermark}{} \textcolor{watermark}{} \textcolor{regular}{late} \textcolor{watermark}{} \textcolor{watermark}{} \textcolor{regular}{November} \textcolor{watermark}{} \textcolor{watermark}{} \textcolor{watermark}{or} \textcolor{watermark}{} \textcolor{watermark}{} \textcolor{regular}{early} \textcolor{watermark}{} \textcolor{watermark}{} \textcolor{watermark}{December} \textcolor{regular}{,} \textcolor{watermark}{} \textcolor{watermark}{} \textcolor{regular}{library} \textcolor{watermark}{} \textcolor{watermark}{} \textcolor{regular}{volunteers} \textcolor{watermark}{} \textcolor{regular}{will} \textcolor{watermark}{} \textcolor{watermark}{} \textcolor{regular}{be} \textcolor{watermark}{} \textcolor{watermark}{} \textcolor{regular}{coming} \textcolor{watermark}{} \textcolor{watermark}{} \textcolor{regular}{to} \textcolor{watermark}{} \textcolor{watermark}{} \textcolor{regular}{your} \textcolor{watermark}{} \textcolor{watermark}{} \textcolor{watermark}{neighborhood} \textcolor{watermark}{} \textcolor{watermark}{} \textcolor{regular}{with} \textcolor{watermark}{} \textcolor{watermark}{} \textcolor{regular}{book} \textcolor{watermark}{} \textcolor{watermark}{} \textcolor{regular}{bags} \textcolor{watermark}{} \textcolor{watermark}{} \textcolor{regular}{and} \textcolor{watermark}{} \textcolor{regular}{taking} \textcolor{watermark}{} \textcolor{watermark}{} \textcolor{regular}{you} \textcolor{watermark}{} \textcolor{watermark}{} \textcolor{regular}{to} \textcolor{watermark}{} \textcolor{watermark}{} \textcolor{regular}{see} \textcolor{watermark}{} \textcolor{watermark}{} \textcolor{regular}{and} \textcolor{watermark}{} \textcolor{watermark}{} \textcolor{regular}{talk} \textcolor{watermark}{} \textcolor{watermark}{} \textcolor{regular}{with} \textcolor{watermark}{} \textcolor{watermark}{} \textcolor{regular}{some} \textcolor{watermark}{} \textcolor{watermark}{} \textcolor{watermark}{of} \textcolor{watermark}{} \textcolor{watermark}{} \textcolor{regular}{our} \textcolor{watermark}{} \textcolor{regular}{favorite} \textcolor{watermark}{} \textcolor{watermark}{} \textcolor{regular}{children} \textcolor{watermark}{'} \textcolor{regular}{s} \textcolor{watermark}{} \textcolor{watermark}{} \textcolor{regular}{authors} \textcolor{regular}{,} \textcolor{watermark}{} \textcolor{watermark}{} \textcolor{regular}{including} \textcolor{watermark}{} \textcolor{watermark}{} \textcolor{regular}{Ann} \textcolor{watermark}{} \textcolor{watermark}{} \textcolor{regular}{M} \textcolor{watermark}{} \textcolor{watermark}{} \textcolor{regular}{Robbins}\textcolor{regular}{,} \textcolor{watermark}{} \textcolor{watermark}{} \textcolor{regular}{Diane} \textcolor{watermark}{} \textcolor{watermark}{} \textcolor{regular}{T}\textcolor{regular}{rel}\textcolor{regular}{a} \textcolor{regular}{,} \textcolor{watermark}{} \textcolor{watermark}{} \textcolor{regular}{Mary} \textcolor{watermark}{} \textcolor{watermark}{} \textcolor{regular}{Pope} \textcolor{watermark}{} \textcolor{watermark}{} \textcolor{regular}{Osborne} \textcolor{regular}{,} \textcolor{watermark}{} \textcolor{watermark}{} \textcolor{regular}{Beverly} \textcolor{watermark}{} \textcolor{watermark}{} \textcolor{watermark}{Cle}\textcolor{regular}{ary}\textcolor{regular}{,} \textcolor{watermark}{} \textcolor{watermark}{} \textcolor{regular}{Eric} \textcolor{watermark}{} \textcolor{watermark}{} \textcolor{watermark}{Car}\textcolor{regular}{le} \textcolor{regular}{,} \textcolor{watermark}{} \textcolor{watermark}{} \textcolor{regular}{Chris} \textcolor{watermark}{} \textcolor{watermark}{} \textcolor{regular}{Van} \textcolor{watermark}{} \textcolor{watermark}{} \textcolor{regular}{All}\textcolor{watermark}{sburg} \textcolor{watermark}{} \textcolor{watermark}{} \textcolor{regular}{and} \textcolor{watermark}{} \textcolor{watermark}{} \textcolor{regular}{more} \textcolor{regular}{!} \textcolor{watermark}{} \textcolor{watermark}{This} \textcolor{watermark}{} \textcolor{watermark}{} \textcolor{regular}{fall} \textcolor{regular}{,} \textcolor{watermark}{} \textcolor{watermark}{} \textcolor{regular}{your} \textcolor{watermark}{} \textcolor{watermark}{} \textcolor{regular}{friendly} \textcolor{watermark}{} \textcolor{watermark}{} \textcolor{watermark}{neighborhood} \textcolor{watermark}{} \textcolor{watermark}{} \textcolor{regular}{library} \textcolor{watermark}{} \textcolor{watermark}{} \textcolor{regular}{will} \textcolor{watermark}{} \textcolor{watermark}{} \textcolor{regular}{offer} \textcolor{watermark}{} \textcolor{watermark}{} \textcolor{regular}{two} \textcolor{watermark}{} \textcolor{regular}{fun} \textcolor{regular}{,} \textcolor{watermark}{} \textcolor{watermark}{} \textcolor{regular}{free} \textcolor{watermark}{} \textcolor{watermark}{} \textcolor{regular}{events} \textcolor{regular}{!} \textcolor{watermark}{} \textcolor{watermark}{ } \textcolor{watermark}{} \textcolor{watermark}{ } \textcolor{regular}{Saturday} \textcolor{regular}{,} \textcolor{watermark}{} \textcolor{watermark}{} \textcolor{regular}{November} \textcolor{watermark}{} \textcolor{watermark}{} \textcolor{regular}{29} \textcolor{watermark}{} \textcolor{watermark}{} \textcolor{regular}{from} \textcolor{watermark}{} \textcolor{regular}{10} \textcolor{watermark}{} \textcolor{watermark}{} \textcolor{regular}{to} \textcolor{watermark}{} \textcolor{watermark}{} \textcolor{watermark}{11} \textcolor{watermark}{} \textcolor{watermark}{} \textcolor{regular}{am} \textcolor{watermark}{} \textcolor{watermark}{} \textcolor{regular}{join} \textcolor{watermark}{} \textcolor{watermark}{} \textcolor{regular}{us} \textcolor{watermark}{} \textcolor{watermark}{} \textcolor{regular}{for} \textcolor{watermark}{} \textcolor{watermark}{} \textcolor{regular}{cookies} \textcolor{regular}{,} \textcolor{watermark}{} \textcolor{watermark}{} \textcolor{regular}{coffee} \textcolor{watermark}{} \textcolor{regular}{and} \textcolor{watermark}{} \textcolor{watermark}{} \textcolor{watermark}{a} \textcolor{watermark}{} \textcolor{watermark}{} \textcolor{regular}{reading} \textcolor{watermark}{} \textcolor{watermark}{} \textcolor{watermark}{of} \textcolor{watermark}{} \textcolor{watermark}{} \textcolor{watermark}{"} \textcolor{watermark}{Where} \textcolor{watermark}{} \textcolor{watermark}{} \textcolor{regular}{The} \textcolor{watermark}{} \textcolor{watermark}{} \textcolor{regular}{Kids} \textcolor{watermark}{} \textcolor{watermark}{} \textcolor{regular}{Play} \textcolor{watermark}{.}\textcolor{regular}{"} \textcolor{watermark}{} \textcolor{watermark}{} \textcolor{regular}{Meet} \textcolor{watermark}{} \textcolor{watermark}{} \textcolor{regular}{author} \textcolor{watermark}{} \textcolor{watermark}{} \textcolor{regular}{and} \textcolor{watermark}{} \textcolor{watermark}{} \textcolor{watermark}{illust}\textcolor{watermark}{rator} \textcolor{watermark}{} \textcolor{watermark}{} \textcolor{regular}{Ann} \textcolor{watermark}{} \textcolor{watermark}{} \textcolor{regular}{M} \textcolor{watermark}{} \textcolor{watermark}{} \textcolor{regular}{Robbins} \textcolor{watermark}{} \textcolor{watermark}{} \textcolor{regular}{and} \textcolor{watermark}{} \textcolor{watermark}{} \textcolor{regular}{hear} \textcolor{watermark}{} \textcolor{regular}{her} \textcolor{watermark}{} \textcolor{watermark}{} \textcolor{regular}{read} \textcolor{watermark}{} \textcolor{watermark}{} \textcolor{regular}{from} \textcolor{watermark}{} \textcolor{watermark}{} \textcolor{regular}{her} \textcolor{watermark}{} \textcolor{watermark}{} \textcolor{regular}{new} \textcolor{watermark}{} \textcolor{watermark}{} \textcolor{regular}{book} \textcolor{regular}{,} \textcolor{watermark}{} \textcolor{watermark}{} \textcolor{regular}{where} \textcolor{watermark}{} \textcolor{watermark}{} \textcolor{regular}{her} \textcolor{watermark}{} \textcolor{watermark}{} \textcolor{regular}{latest} \textcolor{watermark}{} \textcolor{regular}{creations} \textcolor{watermark}{} \textcolor{watermark}{} \textcolor{regular}{include} \textcolor{watermark}{} \textcolor{watermark}{} \textcolor{regular}{Miss} \textcolor{watermark}{} \textcolor{watermark}{} \textcolor{regular}{M}\textcolor{regular}{abel} \textcolor{watermark}{} \textcolor{watermark}{} \textcolor{regular}{and} \textcolor{watermark}{} \textcolor{watermark}{} \textcolor{regular}{two} \textcolor{watermark}{} \textcolor{watermark}{} \textcolor{regular}{other} \textcolor{watermark}{} \textcolor{watermark}{} \textcolor{regular}{k}\textcolor{watermark}{itty} \textcolor{watermark}{} \textcolor{watermark}{} \textcolor{regular}{masc}\textcolor{regular}{ots} \textcolor{watermark}{} \textcolor{watermark}{} \textcolor{regular}{named} \textcolor{watermark}{} \textcolor{regular}{Fat} \textcolor{watermark}{} \textcolor{watermark}{} \textcolor{regular}{and} \textcolor{watermark}{} \textcolor{watermark}{} \textcolor{regular}{Happy} \textcolor{regular}{.} \textcolor{watermark}{} \textcolor{regular}{ } \textcolor{watermark}{} \textcolor{regular}{ } \textcolor{regular}{For} \textcolor{watermark}{} \textcolor{watermark}{} \textcolor{regular}{those} \textcolor{watermark}{} \textcolor{watermark}{} \textcolor{regular}{who} \textcolor{watermark}{} \textcolor{watermark}{} \textcolor{regular}{would} \textcolor{watermark}{} \textcolor{watermark}{} \textcolor{regular}{prefer} \textcolor{watermark}{} \textcolor{watermark}{} \textcolor{regular}{reading} \textcolor{watermark}{} \textcolor{regular}{to} \textcolor{watermark}{} \textcolor{watermark}{} \textcolor{watermark}{eating} \textcolor{regular}{,} \textcolor{watermark}{} \textcolor{watermark}{} \textcolor{regular}{we} \textcolor{watermark}{} \textcolor{watermark}{} \textcolor{regular}{will} \textcolor{watermark}{} \textcolor{watermark}{} \textcolor{regular}{offer} \textcolor{watermark}{} \textcolor{watermark}{} \textcolor{regular}{a} \textcolor{watermark}{} \textcolor{watermark}{} \textcolor{regular}{reading} \textcolor{watermark}{} \textcolor{watermark}{} \textcolor{regular}{from} \textcolor{watermark}{} \textcolor{watermark}{} \textcolor{regular}{our} \textcolor{watermark}{} \textcolor{regular}{previous} \textcolor{watermark}{} \textcolor{watermark}{} \textcolor{regular}{fall} \textcolor{watermark}{} \textcolor{watermark}{} \textcolor{regular}{book} \textcolor{watermark}{} \textcolor{watermark}{} \textcolor{watermark}{club} \textcolor{watermark}{} \textcolor{watermark}{} \textcolor{regular}{favorite} \textcolor{regular}{,} \textcolor{watermark}{} \textcolor{watermark}{} \textcolor{watermark}{"} \textcolor{watermark}{Where} \textcolor{watermark}{} \textcolor{watermark}{} \textcolor{regular}{The} \textcolor{watermark}{} \textcolor{watermark}{} \textcolor{regular}{B}\textcolor{regular}{ats} \textcolor{watermark}{} \textcolor{regular}{Don} \textcolor{watermark}{'} \textcolor{watermark}{t}}
{Attacked Text by BIRA ($z$-score: 0.11)}
{\textcolor{watermark}{M}\textcolor{watermark}{ID}\textcolor{watermark}{D}\textcolor{watermark}{LET}\textcolor{watermark}{OWN} \textcolor{watermark}{,} \textcolor{watermark}{} \textcolor{watermark}{} \textcolor{watermark}{NJ} \textcolor{watermark}{} \textcolor{watermark}{} \textcolor{watermark}{-} \textcolor{watermark}{} \textcolor{watermark}{} \textcolor{watermark}{M}\textcolor{watermark}{idd}\textcolor{watermark}{let}\textcolor{watermark}{own} \textcolor{watermark}{} \textcolor{watermark}{} \textcolor{watermark}{Township} \textcolor{watermark}{} \textcolor{watermark}{} \textcolor{watermark}{Public} \textcolor{watermark}{} \textcolor{watermark}{} \textcolor{watermark}{Library} \textcolor{watermark}{} \textcolor{watermark}{} \textcolor{watermark}{along} \textcolor{watermark}{} \textcolor{watermark}{} \textcolor{watermark}{with} \textcolor{watermark}{} \textcolor{watermark}{the} \textcolor{watermark}{} \textcolor{watermark}{} \textcolor{watermark}{Township} \textcolor{watermark}{} \textcolor{watermark}{} \textcolor{watermark}{of} \textcolor{watermark}{} \textcolor{watermark}{} \textcolor{watermark}{M}\textcolor{watermark}{idd}\textcolor{watermark}{let}\textcolor{regular}{own} \textcolor{watermark}{} \textcolor{watermark}{} \textcolor{regular}{is} \textcolor{watermark}{} \textcolor{watermark}{} \textcolor{watermark}{excited} \textcolor{watermark}{} \textcolor{watermark}{} \textcolor{watermark}{to} \textcolor{watermark}{} \textcolor{watermark}{} \textcolor{regular}{announce} \textcolor{watermark}{} \textcolor{watermark}{} \textcolor{watermark}{the} \textcolor{watermark}{} \textcolor{watermark}{} \textcolor{regular}{selection} \textcolor{watermark}{} \textcolor{regular}{for} \textcolor{watermark}{} \textcolor{watermark}{} \textcolor{watermark}{the} \textcolor{watermark}{} \textcolor{watermark}{} \textcolor{regular}{2014} \textcolor{watermark}{} \textcolor{watermark}{} \textcolor{regular}{Fall} \textcolor{watermark}{} \textcolor{watermark}{} \textcolor{regular}{Book} \textcolor{watermark}{} \textcolor{watermark}{} \textcolor{regular}{\&} \textcolor{watermark}{} \textcolor{watermark}{} \textcolor{regular}{Family} \textcolor{watermark}{} \textcolor{watermark}{} \textcolor{regular}{Fun} \textcolor{watermark}{} \textcolor{watermark}{} \textcolor{watermark}{Series} \textcolor{regular}{:} \textcolor{watermark}{} \textcolor{watermark}{"} \textcolor{watermark}{Where} \textcolor{watermark}{} \textcolor{watermark}{} \textcolor{watermark}{the} \textcolor{watermark}{} \textcolor{watermark}{} \textcolor{regular}{Craw}\textcolor{watermark}{d}\textcolor{regular}{ads} \textcolor{watermark}{} \textcolor{watermark}{} \textcolor{watermark}{Sing} \textcolor{watermark}{"} \textcolor{watermark}{} \textcolor{watermark}{} \textcolor{regular}{by} \textcolor{watermark}{} \textcolor{watermark}{} \textcolor{watermark}{Del}\textcolor{watermark}{ia} \textcolor{watermark}{} \textcolor{watermark}{} \textcolor{regular}{Owens} \textcolor{regular}{.} \textcolor{regular}{ } \textcolor{watermark}{} \textcolor{regular}{ } \textcolor{watermark}{In} \textcolor{watermark}{} \textcolor{watermark}{} \textcolor{watermark}{either} \textcolor{watermark}{} \textcolor{watermark}{} \textcolor{regular}{late} \textcolor{watermark}{} \textcolor{watermark}{} \textcolor{regular}{November} \textcolor{watermark}{} \textcolor{watermark}{} \textcolor{watermark}{or} \textcolor{watermark}{} \textcolor{watermark}{} \textcolor{regular}{early} \textcolor{watermark}{} \textcolor{watermark}{} \textcolor{watermark}{December} \textcolor{regular}{,} \textcolor{watermark}{} \textcolor{watermark}{} \textcolor{regular}{volunteer} \textcolor{watermark}{} \textcolor{watermark}{} \textcolor{watermark}{l}\textcolor{watermark}{ibr}\textcolor{watermark}{arians} \textcolor{watermark}{} \textcolor{regular}{will} \textcolor{watermark}{} \textcolor{watermark}{} \textcolor{regular}{visit} \textcolor{watermark}{} \textcolor{watermark}{} \textcolor{regular}{neighborhoods} \textcolor{watermark}{} \textcolor{watermark}{} \textcolor{watermark}{equipped} \textcolor{watermark}{} \textcolor{watermark}{} \textcolor{regular}{with} \textcolor{watermark}{} \textcolor{watermark}{} \textcolor{regular}{book} \textcolor{watermark}{} \textcolor{watermark}{} \textcolor{regular}{bags} \textcolor{watermark}{} \textcolor{watermark}{} \textcolor{regular}{for} \textcolor{watermark}{} \textcolor{watermark}{} \textcolor{watermark}{engaging} \textcolor{watermark}{} \textcolor{watermark}{} \textcolor{regular}{discussions} \textcolor{watermark}{} \textcolor{watermark}{featuring} \textcolor{watermark}{} \textcolor{watermark}{} \textcolor{regular}{several} \textcolor{watermark}{} \textcolor{watermark}{} \textcolor{watermark}{beloved} \textcolor{watermark}{} \textcolor{watermark}{} \textcolor{regular}{children} \textcolor{watermark}{'} \textcolor{regular}{s} \textcolor{watermark}{} \textcolor{watermark}{} \textcolor{regular}{authors} \textcolor{watermark}{} \textcolor{watermark}{} \textcolor{watermark}{such} \textcolor{watermark}{} \textcolor{watermark}{} \textcolor{watermark}{as} \textcolor{watermark}{} \textcolor{watermark}{} \textcolor{regular}{Ann} \textcolor{watermark}{} \textcolor{regular}{M} \textcolor{regular}{.} \textcolor{watermark}{} \textcolor{watermark}{} \textcolor{regular}{Robbins} \textcolor{regular}{,} \textcolor{watermark}{} \textcolor{watermark}{} \textcolor{regular}{Diane} \textcolor{watermark}{} \textcolor{watermark}{} \textcolor{regular}{T}\textcolor{regular}{rel}\textcolor{regular}{a} \textcolor{regular}{,} \textcolor{watermark}{} \textcolor{watermark}{} \textcolor{regular}{Mary} \textcolor{watermark}{} \textcolor{watermark}{} \textcolor{regular}{Pope} \textcolor{watermark}{} \textcolor{watermark}{} \textcolor{regular}{Osborne}\textcolor{regular}{,} \textcolor{watermark}{} \textcolor{watermark}{} \textcolor{regular}{Beverly} \textcolor{watermark}{} \textcolor{watermark}{} \textcolor{watermark}{Cle}\textcolor{regular}{ary} \textcolor{regular}{,} \textcolor{watermark}{} \textcolor{watermark}{} \textcolor{regular}{Eric} \textcolor{watermark}{} \textcolor{watermark}{} \textcolor{watermark}{Car}\textcolor{regular}{le} \textcolor{regular}{,} \textcolor{watermark}{} \textcolor{watermark}{} \textcolor{regular}{Chris} \textcolor{watermark}{} \textcolor{watermark}{} \textcolor{regular}{Van} \textcolor{watermark}{} \textcolor{watermark}{} \textcolor{regular}{All}\textcolor{watermark}{sburg} \textcolor{watermark}{} \textcolor{regular}{among} \textcolor{watermark}{} \textcolor{watermark}{} \textcolor{watermark}{others} \textcolor{regular}{!} \textcolor{watermark}{} \textcolor{watermark}{} \textcolor{regular}{Your} \textcolor{watermark}{} \textcolor{watermark}{} \textcolor{watermark}{local} \textcolor{watermark}{} \textcolor{watermark}{} \textcolor{regular}{library} \textcolor{watermark}{} \textcolor{watermark}{} \textcolor{regular}{will} \textcolor{watermark}{} \textcolor{watermark}{} \textcolor{regular}{host} \textcolor{watermark}{} \textcolor{watermark}{} \textcolor{watermark}{a} \textcolor{watermark}{} \textcolor{watermark}{} \textcolor{watermark}{pair} \textcolor{watermark}{} \textcolor{watermark}{of} \textcolor{watermark}{} \textcolor{watermark}{} \textcolor{regular}{enjoyable} \textcolor{watermark}{} \textcolor{watermark}{} \textcolor{regular}{free} \textcolor{watermark}{} \textcolor{watermark}{} \textcolor{regular}{events} \textcolor{watermark}{} \textcolor{watermark}{} \textcolor{watermark}{this} \textcolor{watermark}{} \textcolor{watermark}{} \textcolor{watermark}{autumn} \textcolor{watermark}{.} \textcolor{watermark}{} \textcolor{watermark}{ } \textcolor{watermark}{} \textcolor{watermark}{ } \textcolor{watermark}{On} \textcolor{watermark}{} \textcolor{watermark}{} \textcolor{regular}{Saturday} \textcolor{regular}{,} \textcolor{watermark}{} \textcolor{regular}{November} \textcolor{watermark}{} \textcolor{watermark}{} \textcolor{regular}{29}\textcolor{regular}{th} \textcolor{watermark}{} \textcolor{watermark}{} \textcolor{watermark}{between} \textcolor{watermark}{} \textcolor{watermark}{} \textcolor{regular}{10} \textcolor{watermark}{} \textcolor{watermark}{} \textcolor{watermark}{AM} \textcolor{watermark}{} \textcolor{watermark}{} \textcolor{regular}{and} \textcolor{watermark}{} \textcolor{watermark}{} \textcolor{watermark}{11} \textcolor{watermark}{} \textcolor{watermark}{} \textcolor{watermark}{AM} \textcolor{regular}{,} \textcolor{watermark}{} \textcolor{watermark}{} \textcolor{regular}{come} \textcolor{watermark}{} \textcolor{watermark}{enjoy} \textcolor{watermark}{} \textcolor{watermark}{} \textcolor{regular}{refresh}\textcolor{watermark}{ments} \textcolor{watermark}{} \textcolor{watermark}{} \textcolor{watermark}{like} \textcolor{watermark}{} \textcolor{watermark}{} \textcolor{regular}{cookies} \textcolor{watermark}{} \textcolor{watermark}{} \textcolor{regular}{and} \textcolor{watermark}{} \textcolor{watermark}{} \textcolor{regular}{coffee} \textcolor{watermark}{} \textcolor{watermark}{} \textcolor{regular}{while} \textcolor{watermark}{} \textcolor{watermark}{} \textcolor{regular}{listening} \textcolor{watermark}{} \textcolor{watermark}{} \textcolor{regular}{to} \textcolor{watermark}{} \textcolor{watermark}{} \textcolor{watermark}{a} \textcolor{watermark}{} \textcolor{watermark}{presentation} \textcolor{watermark}{} \textcolor{watermark}{} \textcolor{watermark}{of} \textcolor{watermark}{} \textcolor{watermark}{} \textcolor{watermark}{"} \textcolor{watermark}{Where} \textcolor{watermark}{} \textcolor{watermark}{} \textcolor{watermark}{the} \textcolor{watermark}{} \textcolor{watermark}{} \textcolor{regular}{Kids} \textcolor{watermark}{} \textcolor{watermark}{} \textcolor{regular}{Play} \textcolor{watermark}{.} \textcolor{watermark}{``} \textcolor{watermark}{} \textcolor{watermark}{} \textcolor{watermark}{Author}\textcolor{watermark}{-} \textcolor{watermark}{illust}\textcolor{watermark}{rator} \textcolor{watermark}{} \textcolor{watermark}{} \textcolor{regular}{Ann} \textcolor{watermark}{} \textcolor{watermark}{} \textcolor{regular}{M} \textcolor{watermark}{.} \textcolor{watermark}{} \textcolor{watermark}{} \textcolor{regular}{Robbins} \textcolor{watermark}{} \textcolor{watermark}{} \textcolor{regular}{will} \textcolor{watermark}{} \textcolor{watermark}{} \textcolor{regular}{be} \textcolor{watermark}{} \textcolor{watermark}{} \textcolor{watermark}{present} \textcolor{watermark}{} \textcolor{watermark}{} \textcolor{regular}{for} \textcolor{watermark}{} \textcolor{watermark}{a} \textcolor{watermark}{} \textcolor{watermark}{} \textcolor{watermark}{live} \textcolor{watermark}{} \textcolor{watermark}{} \textcolor{regular}{reading} \textcolor{watermark}{} \textcolor{watermark}{} \textcolor{watermark}{of} \textcolor{watermark}{} \textcolor{watermark}{} \textcolor{regular}{her} \textcolor{watermark}{} \textcolor{watermark}{} \textcolor{regular}{new} \textcolor{watermark}{} \textcolor{watermark}{} \textcolor{regular}{work} \textcolor{watermark}{} \textcolor{watermark}{} \textcolor{watermark}{that} \textcolor{watermark}{} \textcolor{watermark}{} \textcolor{watermark}{introduces} \textcolor{watermark}{} \textcolor{watermark}{} \textcolor{regular}{characters} \textcolor{watermark}{} \textcolor{watermark}{like} \textcolor{watermark}{} \textcolor{watermark}{} \textcolor{regular}{Miss} \textcolor{watermark}{} \textcolor{watermark}{} \textcolor{regular}{M}\textcolor{regular}{abel} \textcolor{watermark}{} \textcolor{watermark}{} \textcolor{regular}{along} \textcolor{watermark}{} \textcolor{watermark}{} \textcolor{regular}{with} \textcolor{watermark}{} \textcolor{watermark}{} \textcolor{watermark}{a} \textcolor{watermark}{} \textcolor{watermark}{} \textcolor{regular}{couple} \textcolor{watermark}{} \textcolor{watermark}{} \textcolor{watermark}{of} \textcolor{watermark}{} \textcolor{watermark}{} \textcolor{watermark}{f}\textcolor{watermark}{eline} \textcolor{watermark}{} \textcolor{watermark}{} \textcolor{regular}{companions} \textcolor{watermark}{} \textcolor{regular}{known} \textcolor{watermark}{} \textcolor{watermark}{} \textcolor{watermark}{as} \textcolor{watermark}{} \textcolor{watermark}{} \textcolor{regular}{Fat} \textcolor{watermark}{} \textcolor{watermark}{} \textcolor{regular}{and} \textcolor{watermark}{} \textcolor{watermark}{} \textcolor{regular}{Happy} \textcolor{regular}{.} \textcolor{watermark}{} \textcolor{regular}{ } \textcolor{watermark}{} \textcolor{regular}{ } \textcolor{regular}{If} \textcolor{watermark}{} \textcolor{watermark}{} \textcolor{regular}{you} \textcolor{regular}{'} \textcolor{regular}{d} \textcolor{watermark}{} \textcolor{watermark}{rather} \textcolor{watermark}{} \textcolor{watermark}{} \textcolor{watermark}{focus} \textcolor{watermark}{} \textcolor{watermark}{} \textcolor{watermark}{on} \textcolor{watermark}{} \textcolor{watermark}{} \textcolor{regular}{literature} \textcolor{watermark}{} \textcolor{watermark}{} \textcolor{watermark}{than} \textcolor{watermark}{} \textcolor{watermark}{} \textcolor{watermark}{snacks} \textcolor{regular}{,} \textcolor{watermark}{} \textcolor{watermark}{} \textcolor{regular}{a} \textcolor{watermark}{} \textcolor{watermark}{} \textcolor{watermark}{session} \textcolor{watermark}{} \textcolor{watermark}{} \textcolor{regular}{will} \textcolor{watermark}{} \textcolor{regular}{also} \textcolor{watermark}{} \textcolor{watermark}{} \textcolor{regular}{feature} \textcolor{watermark}{} \textcolor{watermark}{} \textcolor{watermark}{a} \textcolor{watermark}{} \textcolor{watermark}{} \textcolor{regular}{read} \textcolor{watermark}{-} \textcolor{regular}{aloud} \textcolor{watermark}{} \textcolor{watermark}{} \textcolor{watermark}{of} \textcolor{watermark}{} \textcolor{watermark}{} \textcolor{regular}{last} \textcolor{watermark}{} \textcolor{watermark}{} \textcolor{regular}{season} \textcolor{watermark}{'}\textcolor{regular}{s} \textcolor{watermark}{} \textcolor{watermark}{} \textcolor{regular}{popular} \textcolor{watermark}{} \textcolor{watermark}{} \textcolor{regular}{choice} \textcolor{watermark}{} \textcolor{watermark}{} \textcolor{regular}{for} \textcolor{watermark}{} \textcolor{watermark}{} \textcolor{watermark}{the} \textcolor{watermark}{} \textcolor{watermark}{} \textcolor{watermark}{autumn} \textcolor{watermark}{} \textcolor{watermark}{} \textcolor{regular}{book} \textcolor{watermark}{} \textcolor{watermark}{} \textcolor{watermark}{club} \textcolor{regular}{:} \textcolor{watermark}{} \textcolor{watermark}{} \textcolor{watermark}{"}\textcolor{regular}{Where} \textcolor{watermark}{} \textcolor{watermark}{} \textcolor{watermark}{the} \textcolor{watermark}{} \textcolor{watermark}{} \textcolor{regular}{B}\textcolor{regular}{ats} \textcolor{watermark}{} \textcolor{watermark}{} \textcolor{regular}{Don} \textcolor{watermark}{'} \textcolor{watermark}{t} \textcolor{regular}{...}}

\clearpage

\subsection{Examples of text degeneration}
\label{appendix:text_degeneration_qualitative}

\begin{tcolorbox}[colback=gray!5,colframe=black,title={Example1}]
In a highly anticipated matchup between two exceptional pitchers, Johan Santana (6--4, 3.16 ERA) faced off against Curt Schilling (9--2, 3.81 ERA) -- a contest where Schilling had previously earned a 2004 Cy Young Award victory over a runner-up performance by Santana.

Schilling reflected on their encounter by stating, ``I knew right away he had what we needed.'' He believed a single error would be enough for his team to secure a win.

For eight impressive innings, Santana demonstrated remarkable prowess by racking up an impressive total of 13 strikeouts -- a figure only surpassed by a left-handed pitcher once before -- when Tom Seaver achieved this feat in 1967.\; though there were two other pitchers who had done so before him:
\textcolor{red}{Tom Seaver achieved this feat before this game but also a notable left-handed pitcher who achieved this feat before him: Sandy Koufax achieved this feat before this game; Tom Seaver actually achieved this feat before this game; Sandy Koufax achieved this feat before this game; Tom Seaver achieved this feat before this game; Sandy Koufax achieved this feat before this game; Tom Seaver achieved this feat before this game; Sandy Koufax achieved this feat before this game; Tom Seaver achieved this feat before this game; Sandy Koufax achieved this feat before this game; Tom Seaver achieved this feat before this game; Sandy Koufax achieved this feat before this game; Tom Seaver achieved this feat before this game; Sandy Koufax achieved this feat before this game; Tom Seaver achieved this feat before this game; Sandy Koufax achieved this feat before this game, ...}
\end{tcolorbox}

\begin{tcolorbox}[colback=gray!5,colframe=black,title={Example2}]
Bearing testament to this self-awareness, Mike's recounting of an experience involving none other than the legendary Viv Richards, often referred to as 'the Original King', showcases vividly how humbled he remained, despite being on opposite sides of an intense rivalry, especially evident within the dedicated chapter devoted to their storied encounter, where Mike pens about being, to put it mildly, utterly perplexed by how Mr Richards chose to treat him, revealing Mike's profound recognition and acceptance of his own limitations on the field, particularly amidst such high-caliber competition, like facing off against one of history's greatest batsmen, who undoubtedly left an indelible mark on Mike's memory, even to this day, an experience vividly captured within those pages, offering valuable insights into Mike's candid account, one deeply rooted within genuine humility, self-awareness, an extraordinary capacity to reflect, coupled by an all-consuming passion to explore, understand, analyze, learn, grow, an essential, enduring component, now woven into Mike's legacy. \textcolor{red}{cherished, loved by all, forever, truly, an inspiration to many, today, now, always, cherished, loved by all, forever, truly, an inspiration to many, today, now, always, cherished, loved by all, forever, truly, an inspiration, an icon, cherished, loved, forever, truly, an inspiration, an icon, cherished, loved, forever, truly, an inspiration, an icon, cherished, loved, cherished, loved by all, forever, truly, an inspiration to many, today, now, always, cherished, loved by all, cherished, loved by all, forever, truly, an inspiration to many, today, now, always, cherished, loved by all, cherished, loved by all, forever, truly, an inspiration to many, today, now, always, cherished, loved by all, cherished, loved by all, forever, truly, an inspiration, ...}
\end{tcolorbox}
\clearpage

\subsection{Example of a stiff text in GPT-4o-mini}\label{appendix:stilted_GPT-4o-mini}

\qboxes
{!h}
{Watermarked Text1 by KGW}
{Harry Connolly's Twenty Palaces series is one of my favorite new storylines in science fiction. After a several year hiatus Harry has brought back Raynor with a new twist on the concept in 20 Palaces of Niven. In this sequel to Thrive there is something even better than the original Twenty Palaces (I mean, duh) - a war. The premise for this story takes the format of earlier books - the protagonist goes to visit each successive palaces of the King, and learns how to interact with inhabitants in the culture, and so on - but introduces an element of fantasy: that of humans in space that have been rendered extinct by a war-based plague that ended the universe (that is, until this novel came along). This raises an interesting question - are dragons actually extinct? And if not, how do you rebuild, and on what do they rebuild? The answer isn't really that simple - but Connolly does an amazing job of laying it out. With this book, he doesn't get bogged down with ``what happens after the end of the end of everything?'' and ``what happens to all of humanity''.}
{Attacked Text1 by BIRA (PPL=25.0)}
{Harry Connolly’s Twenty Palaces saga ranks among my favorite recent narratives within sci-fi. Following a multi-year break, he has reintroduced Raynor along with a fresh perspective through his work titled \textit{20 Palaces of Niven}. This continuation to \textit{Thrive} offers a storyline that's superior to its predecessor—naturally—due to its inclusion of a conflict. The narrative structure mirrors previous installments where the main character visits various royal residences while learning to engage with their cultures. However, it adds a fantastical element involving humanity’s extinction caused by a warfare-induced pandemic which obliterated existence—until this tale emerged. This prompts a compelling inquiry—are dragons truly gone? If not, what does rebuilding entail? The solution isn't straightforward; nonetheless, Connolly excels at presenting it. He skillfully avoids getting mired down by questions like “what follows when all seems lost?” or “what becomes of humanity?”}

\qboxes
{!h}
{Watermarked Text2 by KGW}
{In the summer of 1993, I had one of those experiences that stay with me for the rest of my life. I spent 10 weeks in China volunteering for an orphanage, in one particularly sad season of the children's lives. And, during that time, I had to walk to take pictures of the kids in their houses, because they could never leave their houses on their own to go find the photos somewhere nearby. I was alone each time and walking to and from the children's houses, I could see and hear all sorts of things: the way that the children fidgeted when there were other children around, the sadness and sorrowful looks that lit up their faces. I had never met any of the young children of the orphanage before I went.
How did these little babies get there?
How did I get there to hold my arms around them?
And why are they sad?
If they wanted, could I change their situation?
If any of the children had been adopted, would other parents love them as much? As I walked the paths and found a way from their houses into one of the...}
{Attacked Text2 by BIRA (PPL=19.13)}
{In the summertime of 1993, I experienced something unforgettable that has remained with me throughout my life. For ten weeks, I volunteered at an orphanage in China during what was especially difficult times for those kids. During this period, it was necessary for me to stroll over to capture photographs of them at their homes since none were able to leave their residences independently to seek out their images elsewhere close by. Each time was solitary as I walked back and forth; along those routes, I observed various things—the mannerisms with which the kids would squirm around others present as well as their faces illuminated by expressions filled with grief and melancholy. Prior to this experience at the orphanage, none of those small kids were familiar figures to me.
What circumstances led those tiny infants here?
What brought me here so as to embrace them?
What is causing their unhappiness?
Could their circumstances be altered if given half a chance?
Would adoptive parents cherish them just as much if some were taken home?
As I traversed those pathways leading away from their homes toward another location...}

\clearpage
\section{Detailed Experimental Results}\label{appendix:detailed_experimental_results}

\subsection{Detailed experimental results for dynamic threshold}
\label{appendix:detailed_dynamic}

\begin{table*}[h]
\centering
\caption{Best F1 Score (\%) across different models and watermarking algorithms.}
\label{tab:best_f1_score}
\resizebox{0.85\linewidth}{!}{%
\begin{tabular}{@{}lccccccc@{}}
\toprule
\diagbox{Model}{Watermark} & KGW & Unigram & UPV & EWD & DIP & SIR & EXP \\ \midrule
Vanilla (Llama-3.1-8B) & 0.863 & 0.9 & 0.907 & 0.831 & 0.67 & 0.893 & 0.666 \\
Vanilla (Llama-3.1-70B) & 0.895 & 0.912 & 0.939 & 0.855 & 0.671 & 0.919 & 0.666 \\
Vanilla (GPT-4o-mini) & 0.955 & 0.969 & 0.97 & 0.951 & 0.71 & 0.964 & 0.666 \\ \midrule
DIPPER-1 & 0.84 & 0.918 & 0.908 & 0.871 & 0.668 & 0.869 & 0.666 \\
DIPPER-2 & 0.81 & 0.89 & 0.867 & 0.81 & 0.667 & 0.824 & 0.666 \\ \midrule
SIRA (Llama-3.1-8B) & 0.727 & 0.78 & 0.806 & 0.715 & 0.667 & 0.803 & 0.666 \\
SIRA (Llama-3.1-70B) & 0.783 & 0.845 & 0.828 & 0.73 & 0.668 & 0.84 & 0.666 \\
SIRA (GPT-4o-mini) & 0.781 & 0.875 & 0.837 & 0.799 & 0.667 & 0.872 & 0.666 \\ \midrule
BIRA (Llama-3.1-8B) & 0.723 & 0.666 & 0.667 & 0.683 & 0.668 & 0.666 & 0.666 \\
BIRA (Llama-3.1-70B) & 0.723 & 0.667 & 0.666 & 0.682 & 0.667 & 0.667 & 0.666 \\
BIRA (GPT-4o-mini) & 0.751 & 0.668 & 0.668 & 0.69 & 0.667 & 0.667 & 0.666 \\ \bottomrule
\end{tabular}%
}
\end{table*}

\begin{table*}[h]
\centering
\caption{TPR under 1\% FPR (\%) across different models and watermarking algorithms.}
\label{tab:tpr_under_1_fpr}
\resizebox{0.85\linewidth}{!}{%
\begin{tabular}{@{}lccccccc@{}}
\toprule
\diagbox{Model}{Watermark} & KGW & Unigram & UPV & EWD & DIP & SIR & EXP \\ \midrule
Vanilla (Llama-3.1-8B) & 0.47 & 0.612 & 0.572 & 0.424 & 0.022 & 0.638 & 0.006 \\
Vanilla (Llama-3.1-70B) & 0.566 & 0.722 & 0.708 & 0.524 & 0.052 & 0.748 & 0.0 \\
Vanilla (GPT-4o-mini) & 0.842 & 0.932 & 0.864 & 0.864 & 0.144 & 0.902 & 0.006 \\ \midrule
DIPPER-1 & 0.376 & 0.762 & 0.646 & 0.518 & 0.032 & 0.614 & 0.008 \\
DIPPER-2 & 0.226 & 0.68 & 0.5 & 0.316 & 0.02 & 0.424 & 0.004 \\ \midrule
SIRA (Llama-3.1-8B) & 0.126 & 0.242 & 0.266 & 0.094 & 0.014 & 0.438 & 0.012 \\
SIRA (Llama-3.1-70B) & 0.224 & 0.438 & 0.344 & 0.174 & 0.014 & 0.552 & 0.004 \\
SIRA (GPT-4o-mini) & 0.21 & 0.498 & 0.358 & 0.266 & 0.024 & 0.574 & 0.002 \\ \midrule
BIRA (Llama-3.1-8B) & 0.084 & 0.038 & 0.036 & 0.042 & 0.018 & 0.012 & 0.014 \\
BIRA (Llama-3.1-70B) & 0.092 & 0.052 & 0.054 & 0.078 & 0.014 & 0.042 & 0.012 \\
BIRA (GPT-4o-mini) & 0.12 & 0.022 & 0.04 & 0.05 & 0.012 & 0.024 & 0.024 \\ \bottomrule
\end{tabular}%
}
\end{table*}

\begin{table*}[h]
\centering
\caption{TPR under 10\% FPR (\%) across different models and watermarking algorithms.}
\label{tab:tpr_under_10_fpr}
\resizebox{0.85\linewidth}{!}{%
\begin{tabular}{@{}lccccccc@{}}
\toprule
\diagbox{Model}{Watermark} & KGW & Unigram & UPV & EWD & DIP & SIR & EXP \\ \midrule
Vanilla (Llama-3.1-8B) & 0.772 & 0.878 & 0.894 & 0.71 & 0.186 & 0.882 & 0.042 \\
Vanilla (Llama-3.1-70B) & 0.854 & 0.912 & 0.96 & 0.804 & 0.204 & 0.928 & 0.002 \\
Vanilla (GPT-4o-mini) & 0.98 & 0.994 & 0.988 & 0.968 & 0.404 & 0.974 & 0.008 \\ \midrule
DIPPER-1 & 0.714 & 0.918 & 0.902 & 0.812 & 0.158 & 0.832 & 0.026 \\
DIPPER-2 & 0.6 & 0.86 & 0.808 & 0.656 & 0.128 & 0.728 & 0.026 \\ \midrule
SIRA (Llama-3.1-8B) & 0.406 & 0.576 & 0.602 & 0.33 & 0.108 & 0.7 & 0.06 \\
SIRA (Llama-3.1-70B) & 0.548 & 0.73 & 0.698 & 0.466 & 0.114 & 0.766 & 0.034 \\
SIRA (GPT-4o-mini) & 0.566 & 0.82 & 0.712 & 0.588 & 0.162 & 0.824 & 0.026 \\ \midrule
BIRA (Llama-3.1-8B) & 0.356 & 0.114 & 0.254 & 0.212 & 0.1 & 0.114 & 0.108 \\
BIRA (Llama-3.1-70B) & 0.346 & 0.152 & 0.26 & 0.278 & 0.102 & 0.18 & 0.092 \\
BIRA (GPT-4o-mini) & 0.428 & 0.116 & 0.218 & 0.236 & 0.116 & 0.112 & 0.092 \\ \bottomrule
\end{tabular}%
}
\end{table*}

\clearpage
\subsection{Detailed experimental results of text quality evaluation}
\label{appendix:detailed_text_quality}

\begin{figure}[!t]
  \centering
  \includegraphics[width=1.0\linewidth]{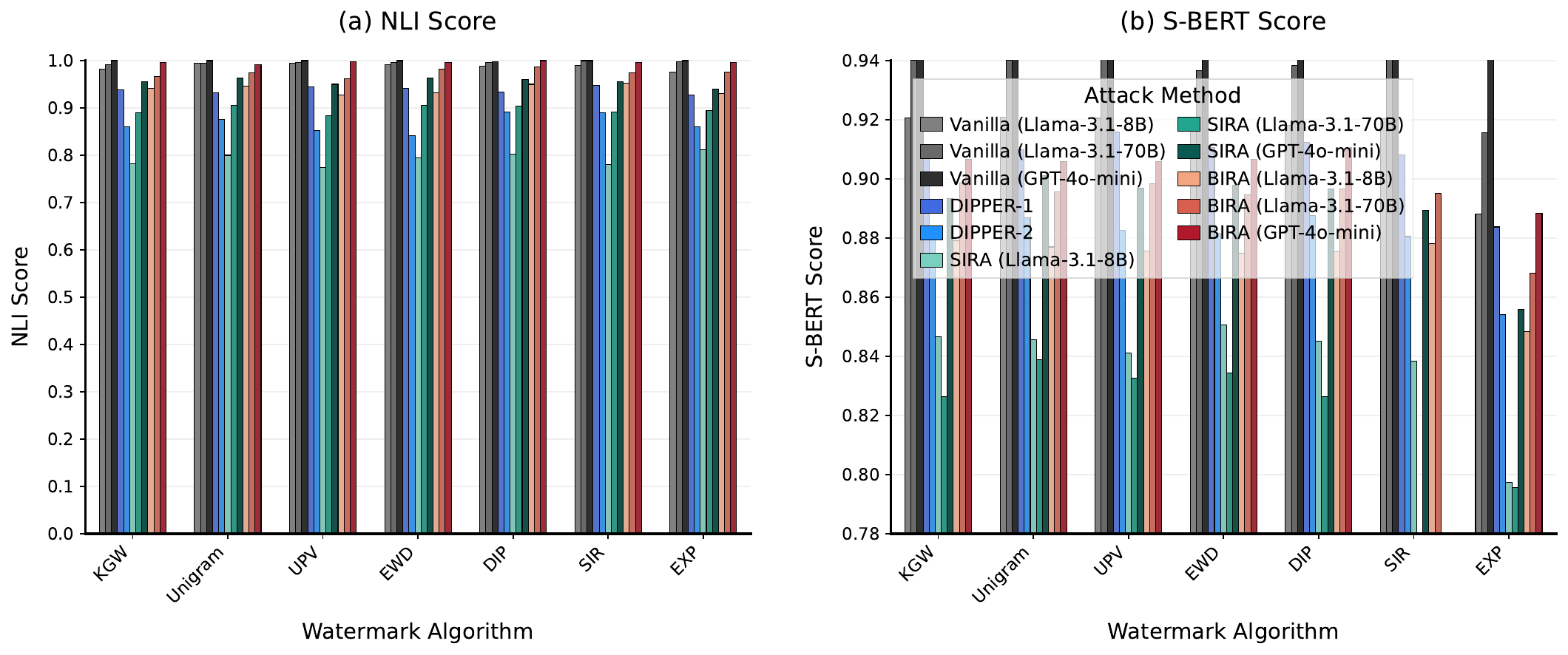}
  \vspace{-2em}
  \caption{Comparison of text quality across different attacks for various watermarking methods, evaluated by NLI score ($\uparrow$) and S-BERT score ($\uparrow$). Our method is comparable to or outperforms other baselines on both metrics. Following \citep{cheng2025revealing}, we evaluate attacks on S-BERT score. However, we observe that the S-BERT score often fails to capture factual accuracy and fine-grained meaning, sometimes assigning high scores despite factual errors and low scores even when the original meaning is preserved, likely because heavily paraphrased text is less familiar to the model.
  }
  \label{fig:appendix:NLI-S-bert}
\end{figure}

\begin{table*}[!thb]
\centering
\caption{\textbf{LLM Judgement Score ($\uparrow$)} across different models and watermarking algorithms.}
\label{tab:llm_judgement_score_avg}
\resizebox{0.85\linewidth}{!}{%
\begin{tabular}{@{}lcccccccc@{}}
\toprule
\diagbox{Model}{Watermark} & KGW & Unigram & UPV & EWD & DIP & SIR & EXP & Avg Score \\ \midrule
Vanilla (Llama-3.1-8B) & 4.774 & 4.728 & 4.778 & 4.786 & 4.776 & 4.76 & 4.644 & 4.749 \\
Vanilla (Llama-3.1-70B) & 4.906 & 4.914 & 4.93 & 4.852 & 4.918 & 4.894 & 4.874 & 4.898 \\
Vanilla (GPT-4o-mini) & 4.986 & 4.96 & 4.99 & 4.984 & 4.98 & 4.974 & 4.984 & 4.98 \\ \midrule
DIPPER-1 & 3.398 & 3.504 & 3.636 & 3.464 & 3.446 & 3.55 & 3.088 & 3.441 \\
DIPPER-2 & 3.076 & 3.092 & 3.204 & 3.034 & 3.016 & 3.122 & 2.734 & 3.04 \\ \midrule
SIRA (Llama-3.1-8B) & 3.356 & 3.34 & 3.356 & 3.348 & 3.364 & 3.25 & 3.142 & 3.308 \\
SIRA (Llama-3.1-70B) & 3.708 & 3.72 & 3.744 & 3.724 & 3.72 & 3.654 & 3.67 & 3.706 \\
SIRA (GPT-4o-mini) & 4.418 & 4.49 & 4.496 & 4.488 & 4.506 & 4.334 & 4.292 & 4.432 \\ \midrule
BIRA (Llama-3.1-8B) & 4.212 & 4.212 & 4.306 & 4.234 & 4.296 & 4.24 & 4.084 & 4.226 \\
BIRA (Llama-3.1-70B) & 4.524 & 4.454 & 4.528 & 4.484 & 4.544 & 4.42 & 4.342 & 4.471 \\
BIRA (GPT-4o-mini) & 4.722 & 4.688 & 4.736 & 4.728 & 4.778 & 4.75 & 4.708 & 4.73 \\ \bottomrule
\end{tabular}%
}
\end{table*}

\begin{table*}[!htb]
\centering
\caption{\textbf{Self-BLEU Score ($\downarrow$)} across different models and watermarking algorithms.}
\label{tab:self_bleu_score_avg}
\resizebox{0.85\linewidth}{!}{%
\begin{tabular}{@{}lcccccccc@{}}
\toprule
\diagbox{Model}{Watermark} & KGW & Unigram & UPV & EWD & DIP & SIR & EXP & Avg Score \\ \midrule
Vanilla (Llama-3.1-8B) & 0.247 & 0.25 & 0.252 & 0.248 & 0.265 & 0.25 & 0.224 & 0.248 \\
Vanilla (Llama-3.1-70B) & 0.282 & 0.281 & 0.295 & 0.284 & 0.298 & 0.287 & 0.268 & 0.285 \\
Vanilla (GPT-4o-mini) & 0.422 & 0.407 & 0.436 & 0.421 & 0.435 & 0.409 & 0.414 & 0.42 \\ \midrule
DIPPER-1 & 0.26 & 0.259 & 0.291 & 0.267 & 0.28 & 0.255 & 0.233 & 0.263 \\
DIPPER-2 & 0.196 & 0.199 & 0.205 & 0.196 & 0.204 & 0.19 & 0.176 & 0.195 \\ \midrule
SIRA (Llama-3.1-8B) & 0.108 & 0.111 & 0.108 & 0.105 & 0.119 & 0.107 & 0.102 & 0.109 \\
SIRA (Llama-3.1-70B) & 0.151 & 0.153 & 0.149 & 0.153 & 0.159 & 0.155 & 0.142 & 0.151 \\
SIRA (GPT-4o-mini) & 0.173 & 0.174 & 0.176 & 0.181 & 0.192 & 0.173 & 0.16 & 0.176 \\ \midrule
BIRA (Llama-3.1-8B) & 0.071 & 0.07 & 0.068 & 0.069 & 0.077 & 0.066 & 0.067 & 0.07 \\
BIRA (Llama-3.1-70B) & 0.089 & 0.085 & 0.084 & 0.088 & 0.091 & 0.08 & 0.082 & 0.086 \\
BIRA (GPT-4o-mini) & 0.079 & 0.076 & 0.076 & 0.078 & 0.088 & 0.075 & 0.08 & 0.079 \\ \bottomrule
\end{tabular}%
}
\end{table*}

\begin{table*}[!htb]
\centering
\caption{\textbf{Perplexity ($\downarrow$)} across different models and watermarking algorithms.}
\label{tab:perplexity_score_avg}
\resizebox{0.85\linewidth}{!}{%
\begin{tabular}{@{}lcccccccc@{}}
\toprule
\diagbox{Model}{Watermark} & KGW & Unigram & UPV & EWD & DIP & SIR & EXP & Avg Score \\ \midrule
Vanilla (Llama-3.1-8B) & 8.931 & 8.783 & 7.941 & 8.814 & 8.375 & 9.163 & 10.64 & 8.95 \\
Vanilla (Llama-3.1-70B) & 9.315 & 9.167 & 8.078 & 9.236 & 8.564 & 9.427 & 11.434 & 9.317 \\
Vanilla (GPT-4o-mini) & 11.272 & 11.202 & 9.515 & 11.19 & 10.224 & 11.481 & 13.636 & 11.217 \\ \midrule
DIPPER-1 & 10.953 & 10.743 & 9.217 & 10.719 & 10.506 & 11.199 & 13.659 & 10.999 \\
DIPPER-2 & 11.371 & 10.973 & 9.769 & 10.956 & 10.886 & 11.466 & 13.831 & 11.322 \\ \midrule
SIRA (Llama-3.1-8B) & 9.099 & 9.275 & 9.155 & 8.888 & 9.218 & 9.984 & 10.51 & 9.447 \\
SIRA (Llama-3.1-70B) & 9.659 & 9.494 & 8.748 & 9.381 & 9.361 & 9.860 & 11.314 & 9.66 \\
SIRA (GPT-4o-mini) & 9.39 & 9.139 & 8.515 & 9.279 & 8.906 & 9.453 & 11.528 & 9.459 \\ \midrule
BIRA (Llama-3.1-8B) & 10.586 & 10.458 & 9.864 & 10.54 & 10.33 & 10.189 & 11.813 & 10.54 \\
BIRA (Llama-3.1-70B) & 12.367 & 11.864 & 11.463 & 12.065 & 11.585 & 11.539 & 13.632 & 12.074 \\
BIRA (GPT-4o-mini) & 15.99 & 15.725 & 14.434 & 15.765 & 14.711 & 15.067 & 17.106 & 15.543 \\ \bottomrule
\end{tabular}%
}
\end{table*}

\begin{table*}[!htb]
\centering
\caption{\textbf{NLI Score ($\uparrow$)} across different models and watermarking algorithms.}
\label{tab:nli_score_avg}
\resizebox{0.85\linewidth}{!}{%
\begin{tabular}{@{}lcccccccc@{}}
\toprule
\diagbox{Model}{Watermark} & KGW & Unigram & UPV & EWD & DIP & SIR & EXP & Avg Score \\ \midrule
Vanilla (Llama-3.1-8B) & 0.982 & 0.994 & 0.994 & 0.992 & 0.988 & 0.99 & 0.976 & 0.988 \\
Vanilla (Llama-3.1-70B) & 0.992 & 0.994 & 0.996 & 0.996 & 0.996 & 1.0 & 0.998 & 0.996 \\
Vanilla (GPT-4o-mini) & 1.0 & 1.0 & 1.0 & 1.0 & 0.998 & 1.0 & 1.0 & 1.0 \\ \midrule
DIPPER-1 & 0.938 & 0.932 & 0.944 & 0.942 & 0.934 & 0.948 & 0.928 & 0.938 \\
DIPPER-2 & 0.86 & 0.876 & 0.852 & 0.842 & 0.892 & 0.89 & 0.86 & 0.867 \\ \midrule
SIRA (Llama-3.1-8B) & 0.782 & 0.8 & 0.774 & 0.794 & 0.802 & 0.78 & 0.812 & 0.792 \\
SIRA (Llama-3.1-70B) & 0.89 & 0.906 & 0.884 & 0.906 & 0.904 & 0.892 & 0.894 & 0.897 \\
SIRA (GPT-4o-mini) & 0.956 & 0.964 & 0.95 & 0.964 & 0.96 & 0.956 & 0.94 & 0.956 \\ \midrule
BIRA (Llama-3.1-8B) & 0.942 & 0.946 & 0.928 & 0.932 & 0.95 & 0.952 & 0.93 & 0.94 \\
BIRA (Llama-3.1-70B) & 0.966 & 0.974 & 0.962 & 0.982 & 0.986 & 0.974 & 0.976 & 0.974 \\
BIRA (GPT-4o-mini) & 0.996 & 0.992 & 0.998 & 0.996 & 1.0 & 0.996 & 0.996 & 0.996 \\ \bottomrule
\end{tabular}%
}
\end{table*}

\begin{table*}[!htb]
\centering
\caption{\textbf{S-BERT Score ($\uparrow$)} across different models and watermarking algorithms.}
\label{tab:sbert_score_avg}
\resizebox{0.85\linewidth}{!}{%
\begin{tabular}{@{}lcccccccc@{}}
\toprule
\diagbox{Model}{Watermark} & KGW & Unigram & UPV & EWD & DIP & SIR & EXP & Avg Score \\ \midrule
Vanilla (Llama-3.1-8B) & 0.921 & 0.921 & 0.920 & 0.918 & 0.918 & 0.920 & 0.888 & 0.915 \\
Vanilla (Llama-3.1-70B) & 0.940 & 0.940 & 0.944 & 0.937 & 0.938 & 0.941 & 0.916 & 0.936 \\
Vanilla (GPT-4o-mini) & 0.965 & 0.964 & 0.965 & 0.963 & 0.966 & 0.963 & 0.955 & 0.963 \\ \midrule
DIPPER-1 & 0.908 & 0.910 & 0.916 & 0.911 & 0.912 & 0.908 & 0.884 & 0.907 \\
DIPPER-2 & 0.883 & 0.887 & 0.883 & 0.882 & 0.888 & 0.881 & 0.854 & 0.879 \\ \midrule
SIRA (Llama-3.1-8B) & 0.847 & 0.846 & 0.841 & 0.851 & 0.845 & 0.838 & 0.797 & 0.838 \\
SIRA (Llama-3.1-70B) & 0.826 & 0.839 & 0.833 & 0.834 & 0.826 & 0.828 & 0.796 & 0.826 \\
SIRA (GPT-4o-mini) & 0.893 & 0.901 & 0.897 & 0.898 & 0.897 & 0.889 & 0.856 & 0.890 \\ \midrule
BIRA (Llama-3.1-8B) & 0.879 & 0.877 & 0.876 & 0.875 & 0.875 & 0.878 & 0.848 & 0.873 \\
BIRA (Llama-3.1-70B) & 0.899 & 0.896 & 0.898 & 0.895 & 0.897 & 0.895 & 0.868 & 0.892 \\
BIRA (GPT-4o-mini) & 0.907 & 0.906 & 0.906 & 0.907 & 0.911 & 0.908 & 0.888 & 0.905 \\ \bottomrule
\end{tabular}%
}
\end{table*}

\end{document}
